\DeclareMathOperator{\p}{\mathbb{P}}
\DeclareMathOperator{\pp}{\mathbf{P}}
\DeclareMathOperator{\States}{{\mathcal{S}}}
\DeclareMathOperator{\Cov}{{\rm Cov}}
\newcommand{\Var}{\mathrm{Var}\,}
\newcommand{\E}{\mathbb{E}}
\newcommand{\C}{\mathbb{C}}
\newcommand{\R}{\mathbb{R}}
\newcommand{\id}{\mathrm{Id}}
\newcommand{\tr}{\mathrm{Tr}}
\newcommand{\la}{\langle}
\newcommand{\ra}{\rangle}
\newcommand{\unif}{\mathrm{Unif}}
\newcommand{\Ha}{\mathcal{H}}
\newtheorem{lemma}{Lemma}[section]
\newtheorem{theorem}[lemma]{Theorem}
\newtheorem{prop}[lemma]{Proposition}
\newtheorem{remark}[lemma]{Remark}
\newtheorem{corollary}[lemma]{Corollary}
\newtheorem{example}[lemma]{Example}
\newtheorem{defi}[lemma]{Definition}
\newcommand{\Ca}{C_1}
\newcommand{\Cb}{C_2}
\newcommand{\Cc}{C_3}
\newcommand{\Cd}{C_4}
\newcommand{\Ce}{C_6}
\newcommand{\Cf}{C_7}
\newcommand{\Cg}{C_8}
\newcommand{\Ch}{C_9}
\newcommand{\Ci}{C_{12}}
\newcommand{\Cj}{C_{10}}
\newcommand{\Ck}{C_{13}}
\newcommand{\Cl}{C_{11}}
\newcommand{\Cm}{C_{15}}
\newcommand{\Cn}{C_{16}}
\newcommand{\Co}{C_5}
\newcommand{\Cp}{C_{14}}
\title{Metric and classical fidelity uncertainty relations for random unitary matrices}
\author{Rados{\l}aw Adamczak\thanks{Institute of Mathematics, University of Warsaw, Banacha 2, 02-097 Warszawa, Poland. \mbox{E-mail: R.Adamczak@mimuw.edu.pl}. Research partially supported by the NCN Grant 2015/18/E/ST1/00214.}}
\begin{document}
\maketitle

\begin{abstract}
We analyze uncertainty relations on finite dimensional Hilbert spaces expressed in terms of classical fidelity, which are stronger then metric uncertainty relations introduced by Fawzi, Hayden and Sen. We establish validity of fidelity uncertainty relations for random unitary matrices with optimal parameters (up to universal constants) which improves upon known results for the weaker notion of metric uncertainty.

This result is then applied to locking classical information in quantum states and allows to obtain optimal locking in Hellinger distance, improving upon previous results on locking in the total variation distance, both by strengthening the metric used and by improving the dependence on parameters.

We also show that general probabilistic estimates behind the main theorem can be used to prove existence of data hiding schemes with Bayesian type guarantees.

As a byproduct of our approach we obtain existence of almost Euclidean subspaces of the matrix spaces $\ell_1^n(\ell_2^m)$ with a better dimension/distortion dependence than allowed in previously known constructions.

\end{abstract}

\section{Introduction}

\subsection{General overview} Uncertainty relations, being of fundamental importance in quantum theory, continue to play central role in quantum information processing. One particular instance where uncertainty relations have proved extremely useful is information locking \cite{PhysRevLett.92.067902,MR2932028}, a phenomenon without a counterpart in classical information theory, allowing for communication protocols in which sending just a few bits of classical information between two parties provides one of them with a potentially unbounded increase of information about some shared resource.  This allows e.g. for building quantum encryption systems with strong information-theoretical security guarantees in which the key is much shorter than the message (which is in contrast with Shannon's theorem of classical information theory, asserting that in classical communication the length of the key must be at least as long as the message itself). We postpone a more precise description of this phenomenon to Section \ref{sec:locking} and here we just mention that protocols of this type have been recently successfully implemented in experiments \cite{2016arXiv160506556L}.

The focus on discrete information requires new approaches to quantify uncertainty, different from the original variance approach used by Heisenberg, Kennard and Robertson. Among particularly useful ways of expressing the uncertainty for sets of $t$ measurements performed on finite-state quantum systems is the average Shannon entropy per measurements
\begin{displaymath}
\frac{1}{t}\sum_{k=1}^{t} H(p_{k,|\psi\ra}),
\end{displaymath}
where $p_{k,|\psi\ra}$ is the probability measure describing the outcome of the $k$-th one among $d$ measurements on the system in state $|\psi\ra$ and $H$ is the Shannon entropy, defined for probability vectors $p = (p(1),\ldots,p(d))$ as
\begin{displaymath}
H(p) = - \sum_{i=1}^{d}  p(i)\log p(i).
\end{displaymath}

State independent lower bounds on the average entropy of the form
\begin{align}\label{eq:eur}
\inf_{|\psi\ra} \frac{1}{t}\sum_{k=1}^{t} H(p_{k,|\psi\ra}) \ge c(t,d),
\end{align}
where $d$ is the dimension of the underlying Hilbert space, are called entropic uncertainty principles and play an important role in quantum information. The infimum above is taken over the set of all pure states $|\psi\ra$ of the system. According to the well known inequality by Maassen and Uffink \cite{MR932170} for two projective measurements in mutually unbiased bases they hold with the optimal constant $c(2,d) = \frac{1}{2}\log d$. For more than two measurements explicit constructions are not known and one instead relies on probabilistic approach introduced for the first time by Hayden et al. in \cite{MR2094521}. Recently Adamczak et al. \cite{ALPZ}, answering a question by Leung, Wehner and Winter \cite{MR2602484}, have proved that i.i.d. random unitary matrices in high dimensions with high probability satisfy \eqref{eq:eur} with $c(t,d) = (1-1/t)\log d - C$ for a numerical constant $C$. This is asymptotically, for $d\to \infty$, equivalent to the best possible bound $(1-1/t)\log d$. We refer to the survey article \cite{MR2602484} and more recent \cite{2015arXiv151104857C} for a detailed discussion of entropic uncertainty relations and their applications.

In order to pass to uncertainty relations which we are going to investigate in this article, let us note that, as one can easily see, the inequality \eqref{eq:eur} with $c(t,d) = c(t)\log d$ can be equivalently interpreted as
\begin{align}\label{la:eur-kl}
\sup_{|\psi\ra} \frac{1}{t}\sum_{k=1}^{t} D_{KL}(p_{k,|\psi\ra},\unif([d])) \le (1-c(t))\log d,
\end{align}
where $\unif([d])$ is the uniform measure on the set $[d] = \{1,\ldots,d\}$ of possible outcomes of the measurement and $D_{KL}$ stands for the Kullback-Leibler divergence, i.e. for two probability vectors $p = (p(1),\ldots,p(d))$, $q = (q(1),\ldots,q(d))$,
\begin{displaymath}
D_{KL}(p,q) = \sum_{i=1}^{d} p(i) \log \Big(\frac{p(i)}{q(i)}\Big).
\end{displaymath}

Note that $\max_p D_{KL}(p,\unif([d])) = \log d$ (the maximum is achieved for $p$ being a Dirac's mass at some $i \in [d]$), therefore the bound $c(t) \simeq 1 - 1/t - O(1/\log d))$ obtained in \cite{ALPZ} shows that as the number of measurements and the dimension $d$ grow, the average `distance' between $p_{k,|\psi\ra}$ and $\unif([d])$ becomes very small when compared to the `diameter' of the space of probability measures with respect to the Kullback-Leibler divergence.

Such interpretation suggests the possibility that in measuring uncertainty one can use also other measures of proximity than the Kullback-Leibler divergence, which may correspond in a better way to the applications at hand. Such an approach was indeed proposed in \cite{MR2932028} by Fawzi, Hayden and Sen, who used the total-variation distance
\begin{align}
D_{TV}(p,q) = \frac{1}{2}\sum_{i=1}^d |p(i) - q(i)|.
\end{align}
Since the diameter of the probability simplex in the total variation distance is equal to one, the corresponding uncertainty principle should be of the form
\begin{align}\label{la:mur-tv}
\sup_{|\psi\ra} \frac{1}{t}\sum_{k=1}^{t} D_{TV}(p_{k,|\psi\ra},\unif([d])) \le f(t)
\end{align}
for some function $f(t)$ converging to zero with $t \to \infty$ (in particular independent of the dimension).

However, to obtain a meaningful uncertainty relation the Authors of \cite{MR2932028} had to depart from the setting of projective measurements and use an ancilla system. More precisely they introduce the following setting.

Let $\Ha_A$, $\Ha_B$ be two finite-dimensional Hilbert spaces of dimension $d_A$ and $d_B$ respectively. Let $|a\ra^A$, $a = 1,\ldots,d_A$ and $|b\ra^B$, $b=1,\ldots,d_B$ be fixed orthonormal bases of $\Ha_A$ and $\Ha_B$. Throughout the article we will always use the notation $\Ha= \Ha_A\otimes \Ha_B$.

For $a \in [d_A]$ and $|\psi\ra \in \Ha$ define $p^A_{|\psi\ra} = (p^A_{|\psi\ra}(a))_{a\in [d_A]}$ by
\begin{align}\label{eq:p^A}
p^A_{|\psi\ra}(a) = \sum_{b=1}^{d_B} |\la a|^A\la b|^B|\psi\ra|^2 = \tr \Big(|a\ra^A\la a| (\tr_B|\psi\ra\la\psi|)\Big),
\end{align}
where $\tr_B = \id_{\Ha_A}\otimes \tr$ is the partial trace with respect to the system $\Ha_B$.

Thus $p^A_{|\psi\ra}$ describes the outcome distribution for the local projective measurement in the basis $\{|a\ra^A\}$ performed on the $\Ha_A$ part of the bipartite system $\Ha$ in state $|\psi\ra$.

In what follows we will use the notation $d = d_Ad_B$ for the dimension of the space $\Ha = \Ha_A \otimes \Ha_B$.

We are now in position to state the definition of metric uncertainty relation introduced in \cite{MR2932028}.

\begin{defi}[Metric uncertainty relations] Let $U_1,\ldots,U_t$ be unitary transformations of $\Ha = \Ha_A\otimes \Ha_B$. For $\varepsilon > 0$ we will say that $\{U_k\}$ satisfy the $\epsilon$-metric uncertainty relation on $\Ha_A$ if
\begin{align}\label{eq:mur}
\sup_{|\psi\ra \in S_\Ha}\frac{1}{t} \sum_{k=1}^t D_{TV}(p_{U_k|\psi\ra}^A,\unif{[d_A]}) \le \varepsilon,
\end{align}
where $S_\Ha$ is the unit sphere of $\Ha$.
\end{defi}

\begin{remark}
The supremum above is taken over the set of pure states on $\Ha$, but by convexity it is easy to see that one can equivalently take the supremum over the set of all states.
\end{remark}

\begin{remark}
In the setting of Fawzi-Hayden-Sen one considers a fixed measurement performed on $t$ different states $U_k|\psi\ra$ of the system, which may seem different from the setting of entropic uncertainty relations we started with. Note however, that if $U$ is a unitary transformation, then the distribution describing a projective measurement on a system in state $U|\psi\ra$ performed in the computational basis $\{|i\ra\}$ coincides with the distribution describing the measurement of a system in state $|\psi\ra$ performed in the basis $\{U^\dag |i\ra\}$, so entropic uncertainty relations can also be written as uncertainty relations involving a fixed measurement and several unitary transformations of the state.
\end{remark}

In \cite{MR2932028} Fawzi, Hayden and Sen proved the following theorem concerning the existence of matrices satisfying metric uncertainty relations.

\begin{theorem}[Fawzi-Hayden-Sen, \cite{MR2932028}]\label{th:Fawzi-Hayden-Sen}
Let $U_1,\ldots,U_t$ be i.i.d. random unitary transformations distributed according to the Haar measure on the unitary group of $\Ha$. If $\varepsilon \in (0,1)$, $d_B \ge 9/\varepsilon^2$ and $t > \frac{72\cdot 16\ln(9/\varepsilon)}{\varepsilon^2}$, then with probability at least
\begin{displaymath}
1 - 4\exp\Big(-d\Big(\frac{\varepsilon^2t}{144}-2\ln(9/\varepsilon)\Big)\Big),
\end{displaymath}
$U_1,\ldots,U_{t}$ satisfy the $\varepsilon$ metric uncertainty relation on $\Ha_A$.
\end{theorem}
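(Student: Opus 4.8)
The plan is to reduce the statement to a concentration-of-measure estimate for i.i.d.\ uniform random vectors on the sphere $S_\Ha$, together with a union bound over a net of $S_\Ha$. The starting point is an elementary deterministic Lipschitz bound: for any unit vectors $|\psi\ra,|\psi'\ra\in\Ha$ and any unitary $U$ on $\Ha$,
\[
\big| D_{TV}(p^A_{U|\psi\ra},\unif([d_A])) - D_{TV}(p^A_{U|\psi'\ra},\unif([d_A])) \big| \le \big\| |\psi\ra - |\psi'\ra \big\|.
\]
This follows since $D_{TV}(p,\unif([d_A]))=\tfrac12\|p-\unif([d_A])\|_1$, the $\ell_1$-norm of the diagonal of a Hermitian matrix is at most its trace norm, the partial trace $\tr_B$ is a contraction for the trace norm, and $\big\||\phi\ra\la\phi|-|\phi'\ra\la\phi'|\big\|_1\le 2\big\||\phi\ra-|\phi'\ra\big\|$ for unit vectors. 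Since the same $U$ appears on both sides, the bound is uniform in the unitary, which is exactly what the net step will need.

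Next I would estimate, for a fixed state $|\psi\ra$, the expectation $\E\, D_{TV}(p^A_{U|\psi\ra},\unif([d_A]))$ when $U$ is Haar-distributed, so that $U|\psi\ra$ is uniform on $S_\Ha$. Representing a uniform unit vector as $g/\|g\|$ with $g$ a standard complex Gaussian vector in $\C^d$, the $a$-th diagonal entry of $\rho_A=\tr_B(|\phi\ra\la\phi|)$ is $X_a/\|g\|^2$ with $X_a=\sum_b|g_{ab}|^2$; each such ratio has mean $1/d_A$ and variance of order $1/(d_A^2 d_B)$, so by Cauchy--Schwarz $\E\,\big|X_a/\|g\|^2-1/d_A\big|=O(1/(d_A\sqrt{d_B}))$, and summing over the $d_A$ coordinates gives $\E\, D_{TV}(p^A_{U|\psi\ra},\unif([d_A]))\le C/\sqrt{d_B}$. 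For $d_B\ge 9/\varepsilon^2$ this is at most $\varepsilon/3$; this is, up to constants, the expectation bound of Fawzi--Hayden--Sen.

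For the concentration step, view $(U_1|\psi\ra,\dots,U_t|\psi\ra)$ as $t$ i.i.d.\ uniform points on $S_\Ha$. By the Lipschitz bound above the map $(\phi_1,\dots,\phi_t)\mapsto\tfrac1t\sum_{k=1}^t D_{TV}(p^A_{\phi_k},\unif([d_A]))$ is $1/\sqrt t$-Lipschitz on the $t$-fold product of spheres with the $\ell_2$-combined metric, so L\'evy's lemma together with the (dimension-free in the number of factors) tensorization of spherical concentration gives, for fixed $|\psi\ra$,
\[
\p\Big(\tfrac1t\sum_{k=1}^t D_{TV}(p^A_{U_k|\psi\ra},\unif([d_A]))\ge\tfrac{2\varepsilon}{3}\Big)\le C\exp(-c\,d\,t\,\varepsilon^2).
\]
Taking an $(\varepsilon/3)$-net $\mathcal N$ of $S_\Ha$ with $|\mathcal N|\le(9/\varepsilon)^{2d}$, applying this deviation bound at each point of $\mathcal N$ with a union bound, and transferring the estimate from $\mathcal N$ to all of $S_\Ha$ via the uniform Lipschitz bound (at the cost of an additional $\varepsilon/3$), one obtains the $\varepsilon$-metric uncertainty relation on an event of probability at least
\[
1-C(9/\varepsilon)^{2d}\exp(-c\,d\,t\,\varepsilon^2)=1-C\exp\big(-d\big(c\,t\,\varepsilon^2-2\ln(9/\varepsilon)\big)\big),
\]
which is of the claimed form and nontrivial once $t$ exceeds a constant multiple of $\ln(9/\varepsilon)/\varepsilon^2$; chasing the numerical constants through L\'evy's lemma and the expectation estimate yields the explicit thresholds $d_B\ge9/\varepsilon^2$ and $t>72\cdot16\,\ln(9/\varepsilon)/\varepsilon^2$, with the harmless constant $4$ in front of the exponential absorbing the slack in these estimates.

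The main point requiring care is the bookkeeping of constants: the expectation must be pushed strictly below $\varepsilon$ with enough room to absorb both the concentration fluctuation and the net discretization error, and the concentration constant $c$ must be good enough that the gain $\exp(c\,d\,t\,\varepsilon^2)$ beats the net cardinality $(9/\varepsilon)^{2d}$ already for $t$ only logarithmic in $1/\varepsilon$. A secondary subtlety is to verify that $D_{TV}(p^A_{|\psi\ra},\unif([d_A]))$ is genuinely $1$-Lipschitz in $|\psi\ra$ rather than merely $O(1)$-Lipschitz, since any loss there feeds directly into the exponent of the final probability bound.
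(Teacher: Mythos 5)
Your argument is correct, and it is essentially the original Fawzi--Hayden--Sen proof: a uniform (in $U$) Lipschitz bound via $\|\,|\phi\ra\la\phi|-|\phi'\ra\la\phi'|\,\|_1\le 2\||\phi\ra-|\phi'\ra\|$ and contractivity of the partial trace, the expectation estimate $\E\, D_{TV}(p^A_{U|\psi\ra},\unif([d_A]))\le C/\sqrt{d_B}$ from the Dirichlet variance computation, L\'evy concentration on the product of spheres for the $1/\sqrt{t}$-Lipschitz average, and a union bound over an $(\varepsilon/3)$-net of cardinality $(9/\varepsilon)^{2d}$. Note, however, that the paper does not prove this theorem at all -- it is quoted from \cite{MR2932028} as the benchmark to be improved. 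The paper's own machinery (Theorems \ref{th:main} and \ref{th:fur}) establishes a strictly stronger statement by a genuinely different route: instead of discretizing $S_\Ha$ and paying the $2d\ln(9/\varepsilon)$ entropy cost in the union bound -- which is precisely what forces your hypothesis $t\gtrsim \ln(9/\varepsilon)/\varepsilon^2$ -- it shows that the increments of the process $|\psi\ra\mapsto Y_{|\psi\ra}$ are subgaussian with parameter $\sqrt{\Cc/(td)}$ (via the recursive representation of Haar unitaries in Lemma \ref{le:recursive-representation} and spherical concentration applied conditionally), and then invokes Talagrand's majorizing measure theorem to control $\E\sup_{|\psi\ra}Y_{|\psi\ra}$ by a Gaussian mean width, followed by concentration on $\mathcal{U}_d^t$. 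This chaining argument removes the spurious logarithm (yielding $t\simeq 1/\varepsilon$ for the fidelity relation, hence $t\simeq 1/\varepsilon^2$ for the metric one), upgrades total variation to Hellinger distance, and extends to arbitrary subsets $\Lambda\subseteq S_\Ha$; your net-based proof cannot deliver any of these improvements, but it is a complete and correct proof of the cited statement as written, with the bookkeeping of constants being routine as you indicate.
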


The Authors of \cite{MR2932028} provide also more explicit constructions of matrices satisfying metric uncertainty relations, however with a worse dependence of $d_B$ and $t$ on the parameter $\varepsilon$ and additional dependence on the dimension $d_A$. They also discuss extensively relations with entropic uncertainty relations.

\subsection{Fidelity uncertainty relations}

The main objective of this article is to explore the possibility to quantify uncertainty by yet another notion of proximity between probability measures, namely the classical fidelity (known also in the statistical community as the Bhattacharyya coefficient):
\begin{displaymath}
F(p,q) = \sum_{i=1}^d \sqrt{p(i)}\sqrt{q(i)}
\end{displaymath}
and the related Hellinger distance
\begin{align}\label{eq:Hellinger-distance}
D_H(p,q) = \frac{1}{\sqrt{2}}\Big(\sum_{i=1}^d \Big(\sqrt{p(i)} - \sqrt{q(i)}\Big)^2\Big)^{1/2} = \sqrt{1 - F(p,q)}.
\end{align}

It is known that
\begin{align}\label{eq:distance-comparison}
D_H(p,q)^2 \le D_{TV}(p,q) \le \sqrt{2}D_H(p,q)
\end{align}
and it is not difficult to find examples when $D_{TV}(p,q)$ is indeed of the order $D_H(p,q)^2 \ll D_H(p,q)\ll 1$.

It turns out that the choice of $F$ to quantify the uncertainty has indeed several advantages. Using fidelity allows in particular to
\begin{itemize}
\item obtain uncertainty principles which are expressed in a stronger measure of proximity than those of \cite{MR2932028} and at the same time have a better parameter dependence (in particular one can improve Theorem \ref{th:Fawzi-Hayden-Sen} and obtain optimal dependence of $t$ and $d_B$ on the parameter $\varepsilon$),

\item obtain improved quantitative bounds on the information locking protocols, using stronger proximity measure, allowing for a better dependence on the min-entropy of the message and applicable in a larger range of parameters.

\item obtain, as a byproduct, improved estimates on the dimension of approximately Euclidean subspaces of certain matricial Banach spaces; to our best knowledge such estimates do not follow from known versions of the Dvoretzky theorem.
\end{itemize}

Our approach allows also to study uncertainty when restricted to some proper subsets of states on $\Ha$, e.g. separable states, by connecting the uncertainty to some measures of magnitude related to Gaussian processes, used in asymptotic convex geometry and known to quantify various aspects of geometric behaviour for subsets of high dimensional Euclidean spheres (see \cite{MR2149924,MR2199631,MR2373017}). This is yet another manifestation of the connections between quantum information theory and asymptotic geometric analysis, which in the recent years proved useful in the study of additivity conjectures, random quantum channels or entanglement thresholds (see e.g. \cite{MR2094521,Hastings,MR2605015,MR2802300,MR3139428}).

One possible application of results for proper subsets of quantum states is a result on bit-hiding, presented in Section \ref{sec:further-applications}.

Let us now introduce the uncertainty relations studied in the subsequent part of the paper.  We remark that implicitly they appeared already in Section 2.2. of \cite{MR2932028}, however without optimal existence statements.

\begin{defi}
Let $U_1,\ldots,U_t$ be unitary transformations of $\Ha = \Ha_A\otimes \Ha_B$. For $\varepsilon > 0$ we will say that $\{U_k\}$ satisfy the $\epsilon$-fidelity uncertainty relation on $\Ha_A$ if
\begin{align}\label{eq:fur}
\inf_{|\psi\ra \in S_\Ha} \frac{1}{t} \sum_{k=1}^t F\Big(p^A_{U_k|\psi\ra},\unif([d_A])\Big) \ge 1 - \varepsilon
\end{align}
or equivalently
\begin{align}\label{eq:fur-Hellinger}
\sup_{|\psi\ra\in S_\Ha} \sqrt{\frac{1}{t} \sum_{k=1}^t D_H(p^A_{U_k|\psi\ra},\unif{[d_A]})^2} \le \sqrt{\varepsilon}.
\end{align}
\end{defi}

\begin{remark}
The equivalence between \eqref{eq:fur} and \eqref{eq:fur-Hellinger} follows easily by \eqref{eq:Hellinger-distance}. The form \eqref{eq:fur-Hellinger} combined with the second inequality of \eqref{eq:distance-comparison} shows in particular that
the $\varepsilon$-fidelity uncertainty relation implies the $\sqrt{2\varepsilon}$-metric uncertainty relation (in fact a stronger form of it with the arithmetic mean replaced by the greater quadratic mean).
\end{remark}

In the following sections we present precise formulations of our results, here let us just highlight the most important points.

\begin{itemize}
\item In Theorem \ref{th:fur} of Section \ref{sec:fidelity} we prove that random unitary transformations satisfy the $\varepsilon$-fidelity uncertainty relation for $t \simeq 1/\varepsilon$ and $d_B \simeq \varepsilon$. In particular (by replacing $\varepsilon$ with $\varepsilon^2$) this allows to remove the spurious logarithms in the restriction on $t$ in Theorem \ref{th:Fawzi-Hayden-Sen} and replace the arithmetic mean of total-variation distances by the quadratic mean of greater Hellinger distances.

\item In Section \ref{sec:optimality} we show that $t \gtrsim\frac{1}{\varepsilon^2}$ and $d_B \gtrsim 1/\varepsilon^2$ is necessary for the existence of matrices $U_1,\ldots,U_t$ satisfying the $\varepsilon$-metric uncertainty relations \eqref{eq:mur}. This shows that in the random unitary case the stronger $\varepsilon^2$-fidelity uncertainty appears in the same range of the parameters $t,d_B$ as the weaker $\varepsilon$-total variation uncertainty. It also shows that one cannot use metric uncertainty relations to obtain optimal bounds on entropic uncertainty relation, as in this case the right order of $t$ is $1/\varepsilon$.

\item In Theorem \ref{th:locking}, Section \ref{sec:locking} we show that fidelity uncertainty relations imply locking in Hellinger distance for the protocol introduced in \cite{MR2932028} with parameters improved even with respect to those known in the total variation case. We also show that for Hellinger locking, random unitary transformations give the best possible key length for the locking protocols with uniformly distributed messages and that they almost saturate the locking bounds for the protocol of \cite{MR2932028}.

\item In Section \ref{sec:further-applications} we describe briefly possible applications of the general probabilistic result our proofs are based on (Theorem \ref{th:main}) to the problem of multiple bit hiding.

\item Finally in Section \ref{sec:embeddings} we apply the results of Section \ref{sec:fidelity} to the problem of finding high dimensional Euclidean subspaces of the space $\ell_1^n(\ell_2^m)$, improving known bounds on their dimension.

\end{itemize}

\section{Main results}

\subsection{Notation}
Before we state our main results, let us introduce some additional notation common for all parts of the article.

Recall that $\Ha_A$ and $\Ha_B$ are two complex Hilbert spaces of dimensions $d_A$ and $d_B$ respectively and that by $d = d_Ad_B$ we denote the dimension of $\Ha = \Ha_A\otimes \Ha_B$. Let $\{|a\ra^A\colon a\in[d_A]\}$ and $\{|b\ra^B\colon b \in [d_B]\}$ be distinguished (computational) bases in $\Ha_A$ and $\Ha_B$ respectively. To simplify the notation, most of the time we will simply write $|a\ra$, $|b\ra$ instead of $|a\ra^A$, $|b\ra^B$, however occasionally to stress the underlying tensor structure we will use the superscript.

By $\|\cdot\|$ and $\|\cdot\|_{HS}$ we will denote respectively the operator and Hilbert Schmidt norm of a linear operator on $\Ha$. The notation $\|\cdot\|$ will be also used for the Euclidean norm on $\Ha$ corresponding to the Hilbert structure, i.e. for $|\psi\ra \in \Ha$, $\||\psi\ra\| = \sqrt{\la \psi|\psi\ra}$. Thus for a linear operator $M$, $\|M\| = \sup_{|\psi\ra \in S_\Ha} \|M|\psi\ra\|$ and $\|M\|_{HS} = \sqrt{\sum_{a=1}^{d_A}\sum_{j=1}^{d_B} \| M |a\ra^A |b\ra^B\|^2}$. By $\|\cdot\|_1$ we will denote the trace norm of an operator, i.e. $\|M\|_1 = \tr\sqrt{M^\dag M}$, where $M^\dag$ is the adjoint of $M$.

By $\mathcal{U}_d$ we will denote the unitary group, i.e. the group of all $d\times d$ unitary matrices, equipped with the normalized Haar measure. Modelling an abstract $d$ dimensional Hilbert space $\Ha$ on $\C^d$, we will often identify $\mathcal{U}_d$ with the group $\mathcal{U}_d(\Ha)$ of all unitary transformations of $\Ha$.

Accordingly by a random unitary transformation of $\Ha$ (resp. a random $d \times d$ unitary  matrix) we will always mean a random element of $\mathcal{U}_d(\Ha)$ (resp. $\mathcal{U}_d$) distributed according to the normalized Haar measure.

By $S_\Ha$ we will denote the unit sphere of $\Ha$ and by $S_{\C}^{d-1}$ the unit sphere in $\C^d$. By $S^{d-1}$ we will denote the unit sphere in $\R^d$. Clearly $S_\Ha$, $S_{\C}^{d-1}$ and $S^{2d-1}$ are isometric.

By $\States(\Ha)$ we will denote the set of all (pure and mixed) states on $\Ha$.

By $\log$ we will denote the natural logarithm and by $\log_2$ the logarithm at the base 2. $\Re z$ and $\Im z$ will stand respectively for the real and imaginary part of the complex number $z$, while $|z|$ will denote its modulus.

In the article by $C_1,C_2,\ldots$ we will denote positive numerical constants which are universal, i.e. do not depend on any parameters.

\subsection{Fidelity uncertainty relations \label{sec:fidelity}}
Let us now formulate our main result, which is
\begin{theorem}\label{th:fur}
There exist absolute constants $\Ca,\Cb>0$, with the following property. Let $\varepsilon \in (0,1)$, $d_B \ge \Ca/\varepsilon$, $t \ge \frac{\Ca}{\varepsilon}$ and let $U_1,\ldots,U_{t}$ be i.i.d. random unitary transformations of $\Ha = \Ha_A\otimes \Ha_B$. Then with probability at least $1 - 2e^{-\varepsilon dt/\Cb}$,
the matrices $U_1,\ldots,U_t$ satisfy the $\epsilon$-fidelity uncertainty relation \eqref{eq:fur} on $\Ha_A$. In particular, they also satisfy the $\sqrt{2\varepsilon}$-metric uncertainty relation \eqref{eq:mur} on $\Ha_A$.
\end{theorem}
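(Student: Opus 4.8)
\emph{Reformulation.} I would start by rewriting, for a unit vector $|\psi\ra$ and a unitary $U$,
\[
F\big(p^A_{U|\psi\ra},\unif([d_A])\big)=\frac{1}{\sqrt{d_A}}\sum_{a=1}^{d_A}\big\|(|a\ra^A\la a|\otimes\id_{\Ha_B})\,U|\psi\ra\big\|,
\]
so that, in the Hellinger form \eqref{eq:fur-Hellinger}, the goal is to show that with the stated probability $\sup_{|\psi\ra\in S_\Ha}\frac1t\sum_{k=1}^t D_H\big(p^A_{U_k|\psi\ra},\unif([d_A])\big)^2\le\varepsilon$. Two elementary facts feed everything: (i) each summand is, for fixed $|\psi\ra$, a $1$-Lipschitz function of $U_k\in\mathcal{U}_d$ --- the gradient of $|\phi\ra\mapsto\frac1{\sqrt{d_A}}\sum_a\|(|a\ra\la a|\otimes\id)|\phi\ra\|$ on the sphere is a sum of mutually orthogonal unit vectors, one from each block $|a\ra\la a|\otimes\id$, hence of norm exactly $1$; and (ii) it is also $1$-Lipschitz in $|\psi\ra$ for every fixed $U_k$. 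Geometrically this says that the random operator $|\psi\ra\mapsto\big((|a\ra\la a|\otimes\id)U_k|\psi\ra\big)_{k\le t,\,a\le d_A}$ embeds $\Ha$ into $\ell_1^{td_A}(\ell_2^{d_B})$ with lower distortion $\ge 1-\varepsilon$ (the reverse bound $\le 1$ being Cauchy--Schwarz), which is the Dvoretzky-type statement underlying also Section~\ref{sec:embeddings}.

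\emph{The mean.} By unitary invariance of the Haar measure, $U_k|\psi\ra$ is uniform on $S_\Ha$, so $p^A_{U_k|\psi\ra}(a)$ has a $\mathrm{Beta}(d_B,d-d_B)$ distribution and
\[
\E\,F\big(p^A_{U_k|\psi\ra},\unif([d_A])\big)=\sqrt{d_A}\;\frac{\Gamma(d_B+\tfrac12)\,\Gamma(d)}{\Gamma(d_B)\,\Gamma(d+\tfrac12)}=1-\Theta\!\big(1/d_B\big)
\]
by the standard $\Gamma$-function asymptotics. Thus, choosing $\Ca$ large enough, the hypothesis $d_B\ge\Ca/\varepsilon$ forces $1-\E F\le\varepsilon/4$; this is exactly the role of the condition $d_B\gtrsim1/\varepsilon$, and since $1-\E F\asymp 1/d_B$ it cannot be weakened. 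Combined with concentration of measure for the Haar measure on $\mathcal{U}_d$ (Ricci curvature of order $d$) and its tensorization to $\mathcal{U}_d^t$, fact (i) gives, for each fixed $|\psi\ra$,
\[
\p\Big(\tfrac1t\sum_{k}D_H\big(p^A_{U_k|\psi\ra},\unif([d_A])\big)^2\ge\tfrac\varepsilon2\Big)\le 2e^{-\Co\,\varepsilon^2 dt},
\]
since $\frac1t\sum_k D_H^2$ is $t^{-1/2}$-Lipschitz on $\mathcal{U}_d^t$ with mean $\le\varepsilon/4$.

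\emph{Uniformity over the sphere --- the crux.} Feeding the pointwise bound into a crude $\varepsilon$-net of $S_\Ha$ (of cardinality $e^{O(d\log(1/\varepsilon))}$, the scale $\varepsilon$ being forced by fact (ii)) would only yield $t\gtrsim\varepsilon^{-2}\log(1/\varepsilon)$, i.e. essentially Theorem~\ref{th:Fawzi-Hayden-Sen}. To reach the optimal $t\gtrsim1/\varepsilon$ and the exponent $\varepsilon dt$ in place of $\varepsilon^2dt$, I would split $S_\Ha$, via the block decomposition $|\psi\ra=\sum_a|a\ra\otimes|\psi_a\ra$, into a \emph{compressible} part --- those $|\psi\ra$ whose profile $(\|\psi_a\|)_{a\le d_A}$ lies within $\sim\sqrt\varepsilon$ of an $m$-sparse vector, $m\asymp\varepsilon d_A$ --- and the complementary \emph{incompressible} (well-spread) part. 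The compressible set, though responsible for the largest values of $D_H$, carries metric entropy only $O(\varepsilon d\log(1/\varepsilon))$ at the relevant scale, so a union bound of the pointwise estimate over a fine net of it closes for $t\gtrsim1/\varepsilon$. For incompressible $|\psi\ra$ one must beat a scale-$\varepsilon$ net of $S_\Ha$: here one exploits that spreadness of $|\psi\ra$ makes $\|(|a\ra\la a|\otimes\id)U_k|\psi\ra\|$ so uniformly equidistributed in $a$ --- a small-ball / second-moment estimate, which is the content of the general probabilistic tool (Theorem~\ref{th:main}) the paper rests on, reducing the relevant complexity to a Gaussian parameter of the set in the spirit of asymptotic geometric analysis --- that a net at a \emph{fixed} scale, of cardinality only $e^{O(d)}$, together with fact (ii) already suffices. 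Optimizing the two union bounds produces the probability $1-2e^{-\varepsilon dt/\Cb}$.

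\emph{Conclusion and the obstacle.} Combining the compressible and incompressible estimates proves \eqref{eq:fur-Hellinger}, hence \eqref{eq:fur}, and the $\sqrt{2\varepsilon}$-metric uncertainty relation is then immediate from $D_{TV}\le\sqrt2\,D_H$ and the power-mean inequality --- equivalently, from the remark following the definition of the $\varepsilon$-fidelity uncertainty relation. I expect the genuine obstacle to be the incompressible case: designing the spreadness/small-ball estimate so that the well-spread directions can be handled by only a constant-scale net is precisely what removes the spurious $\log(1/\varepsilon)$ and the extra power of $\varepsilon^{-1}$ present in Theorem~\ref{th:Fawzi-Hayden-Sen}, and getting its dependence on $d$, $t$ and $\varepsilon$ simultaneously right is the heart of the matter.
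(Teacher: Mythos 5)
Your reformulation, the computation of the mean (the Beta$(d_B,d-d_B)$ law of $p^A_{U|\psi\ra}(a)$ and the resulting $1-\E F=\Theta(1/d_B)$, which is the paper's use of \cite{MR2932028}, Lemma 2.6), the pointwise concentration for fixed $|\psi\ra$, and the final passage from Hellinger to total variation are all correct and match the paper. But the step you yourself identify as the crux --- uniformity over $S_\Ha$ with $t\gtrsim 1/\varepsilon$ rather than $1/\varepsilon^2$ or $\varepsilon^{-1}\log(1/\varepsilon)$ --- is not actually supplied, and the route you propose for it cannot work. By unitary invariance of the Haar measure, $(U_1|\psi\ra,\ldots,U_t|\psi\ra)$ is, for \emph{every} fixed unit vector $|\psi\ra$, a tuple of $t$ i.i.d. uniform points on $S_\Ha$; hence the random variable $\frac1t\sum_k D_H(p^A_{U_k|\psi\ra},\unif([d_A]))^2$ has exactly the same distribution for compressible and incompressible $|\psi\ra$ (the paper notes this when observing that $R_{d_A,d_B}$ does not depend on $|\psi\ra$). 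There is therefore no ``spreadness makes the profile more equidistributed'' gain to be had at a single point, and no small-ball or second-moment estimate that distinguishes the two classes pointwise. The only structure available is the \emph{joint} law across different $|\psi\ra$, i.e. the increments of the process. Even granting your decomposition on its own terms, the accounting does not close: a net of the compressible set of log-cardinality $\varepsilon d\log(1/\varepsilon)$ against a pointwise bound $e^{-c\varepsilon^2 dt}$ still forces $t\gtrsim\varepsilon^{-1}\log(1/\varepsilon)$, and a constant-scale net of cardinality $e^{O(d)}$ for the incompressible part against the same pointwise bound forces $t\gtrsim\varepsilon^{-2}$, while a constant-scale net combined only with $1$-Lipschitzness in $|\psi\ra$ cannot control the function to accuracy $\sqrt{\varepsilon}$ anyway.

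What the paper actually does, and what you have mischaracterized as a small-ball estimate, is Schechtman's increment argument: for $|\psi\ra,|\phi\ra$ with midpoint $|\rho\ra$, one uses the recursive representation of Haar unitaries (Lemma \ref{le:recursive-representation}) to write $U_k|\psi\ra$ and $U_k|\phi\ra$ as a common term $M_k|\rho\ra$ plus opposite perturbations $\pm M_k|\chi_k\ra$ that, conditionally on $M_k$, are equidistributed; L\'evy concentration on the product of spheres of radius $\|Q|\chi\ra\|\le\frac12\||\psi\ra-|\phi\ra\|$ then shows $Y_{|\psi\ra}-Y_{|\phi\ra}$ is subgaussian at scale $\||\psi\ra-|\phi\ra\|/\sqrt{td}$ (Theorem \ref{th:main}). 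Chaining (Theorem \ref{thm:Majorizing-Measures}, though Dudley's entropy integral over the sphere already suffices here) converts this into $\E\sup_{|\psi\ra}|Y_{|\psi\ra}-Y_{|\rho\ra}|\le C/\sqrt{t}$ with no logarithm, whence $\E\sup Y\le 1/\sqrt{d_B}+C/\sqrt t\le\sqrt\varepsilon/2$ under the stated hypotheses, and the probability bound $1-2e^{-\varepsilon dt/\Cb}$ follows from a single application of concentration for the $1/\sqrt{2t}$-Lipschitz supremum (Lemmas \ref{le:concentration} and \ref{le:Lipschitz-first}), not from optimizing union bounds. In short: everything peripheral in your proposal is right, but the mechanism you propose for the central step rests on a pointwise distinction that unitary invariance erases, and the missing idea is the subgaussian bound on increments.
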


The proof of the above theorem is based on an adaptation of an argument by Gideon Schechtman used in \cite{MR1008729,MR2199631} to improve the dependence on $\varepsilon$ in the Dvoretzky theorem. The original argument was applied to real Gaussian random matrices, the complex unitary case seems slightly more technical. The key point of Schechtman's idea is to prove that the underlying stochastic process is subgaussian with respect to the Eulidean norm, which allows for an application of Talagrand's generic chaining theorem \cite{MR906527,MR3184689} yielding a comparison of the supremum of this process with the supremum of the canonical Gaussian process $(G_{|\psi\ra})$. Moreover, the comparison holds for general subsets $\Lambda$ of the unit sphere $S_\Ha$, providing an upper bound for
\begin{displaymath}
\E \sup_{|\psi\ra \in \Lambda} \sqrt{1- \frac{1}{t}\sum_{i=1}^t F\Big(p^A_{U_k|\psi\ra},\unif([d_A])\Big)} = \E \sup_{|\psi\ra\in \Lambda} \sqrt{\frac{1}{t}\sum_{k=1}^t  D_H\Big(p_{U_k|\psi\ra}^A,\unif([d_A])\Big)^2}
\end{displaymath}
in terms of $\E \sup_{|\psi\ra \in \Lambda} G_{|\psi\ra}$. Let us mention that the latter quantity has been estimated for many subsets of the sphere important from the geometric or statistical point of view (see e.g. \cite{MR2371614}).

In order to introduce the process $(G_{|\psi\ra}\colon |\psi\ra \in S_\Ha)$ it will be convenient to consider $\Ha$ as a real Hilbert space in the usual way. Let us thus introduce the inner product
\begin{displaymath}
  \la x|y\ra_{\R} = \Re \la x|y\ra
\end{displaymath}
and let $|G\ra$ be a standard Gaussian vector with values in $\Ha$, i.e. a random vector such that for every $|\psi\ra \in \Ha_A$, $\la G|\psi\ra_{\R}$ is distributed according to $\mathcal{N}(0,\||\psi\ra\|^2)$, i.e. has density $g_{|\psi\ra}(x) = \frac{1}{\sqrt{2\pi}\||\psi\ra\|}\exp(-x^2/2\||\psi\ra\|^2)$, where $\||\psi\ra\|^2 = \la \psi|\psi\ra$. The canonical Gaussian process $(G_{\psi})$ on $\Ha$ is defined as
\begin{align}\label{eq:definition-G}
G_{|\psi\ra} = \la G|\psi\ra_{\R}.
\end{align}

Denote also
\begin{align}\label{eq:R-definition}
 R = R_{d_A,d_B} = \E\sqrt{\frac{1}{t}\sum_{k=1}^{t} D_H(p^A_{U_k|\psi\ra},\unif([d_A]))^2}.
\end{align}
Note that since $U_1,\ldots,U_t$ are independent and distributed according to the Haar measure on the unitary group the quantity $R_{d_A,d_B}$ indeed is independent on the choice of $|\psi\ra$ used in \eqref{eq:R-definition} and depends only on the dimensions $d_A,d_B$. In what follows, to simplify the notation we will often make this dependence implicit and write simply $R$ instead of $R_{d_A,d_B}$.

In fact $R$ can be estimated in terms of $d_B$ only. It has been proved in \cite{MR2932028} (Lemma 2.6 therein, see also \cite{MR1491097}) that $\E F(p^A_{U_k|\psi\ra},\unif([d_A])) \ge \sqrt{1 - 1/d_B}$, which via Jensen's inequality gives

\begin{align}\label{eq:M-estimation}
 R & \le \sqrt{\E \frac{1}{t}\sum_{k=1}^{t}  D_H(p^A_{U_k|\psi\ra},\unif([d_A]))^2}=  \Big(1 - \frac{1}{t}\E \sum_{k=1}^t F(p^A_{U_k|\psi\ra},\unif([d_A]))\Big)^{1/2} \nonumber \\
& \le \Big(1 - \sqrt{1-1/d_B}\Big)^{1/2} \le \frac{1}{\sqrt{d_B}}.
\end{align}

The fidelity uncertainty relation of Theorem \ref{th:fur} is a consequence of the following general result.

\begin{theorem}\label{th:main}
For $|\psi\ra \in S_\Ha$ denote
\begin{displaymath}
Y_{|\psi\ra} = \sqrt{\frac{1}{t}\sum_{k=1}^{t} D_H(p^A_{U_k|\psi\ra},\unif([d_A]))^2}.
\end{displaymath}
There exists a universal constant $\Cc$ such that the process $\{Y_{|\psi\ra} \colon |\psi\ra \in S_{\Ha}\}$ is subgaussian with parameter $\frac{\sqrt{\Cc}}{\sqrt{td}}$, i.e. for all $|\psi\ra,|\phi\ra \in S_{\Ha}$,
\begin{align}\label{eq:main-tail}
\p(|Y_{|\psi\ra} - Y_{|\phi\ra}| \ge u) \le 2\exp\Big(-\frac{dtu^2}{\Cc\||\psi\ra-|\phi\ra\|^2}\Big).
\end{align}
Moreover, for every set $\Lambda \subseteq S_{\Ha}$,
\begin{align}\label{eq:main-expectation}
\E \sup_{|\psi\ra,|\phi\ra \in \Lambda} |Y_{|\psi\ra} - Y_{|\phi\ra}| \le \Cd\frac{\E\sup_{|\psi\ra \in \Lambda}  G_{|\psi\ra}}{\sqrt{td}},
\end{align}
where $(G_{|\psi\ra}\colon |\psi\ra \in S_\Ha)$ is the canonical Gaussian process defined by \eqref{eq:definition-G}.
\end{theorem}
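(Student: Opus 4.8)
The plan is to prove the sub‑Gaussian increment estimate \eqref{eq:main-tail} first, by an adaptation of Schechtman's decomposition combined with concentration of measure on the unitary group, and then to read off the comparison \eqref{eq:main-expectation} from it by generic chaining. It is convenient to reformulate: writing $P_a=|a\ra\la a|\otimes\id$ and $V_{|\psi\ra}=\bigl(\|P_aU_k|\psi\ra\|\bigr)_{k\le t,\,a\le d_A}\in\R^{td_A}$, and letting $\tilde u=d_A^{-1/2}\mathbf 1$ be the ``flat'' vector, one has $p^A_{U_k|\psi\ra}(a)=\|P_aU_k|\psi\ra\|^2$, hence
\[
Y_{|\psi\ra}=\tfrac1{\sqrt{2t}}\,\bigl\|V_{|\psi\ra}-\tilde u\bigr\|_2,\qquad \|V_{|\psi\ra}\|_2^2=t .
\]
Since $|\psi\ra\mapsto P_aU_k|\psi\ra$ is linear and $U_k$ unitary, $|\psi\ra\mapsto V_{|\psi\ra}$ is $\sqrt t$‑Lipschitz from $S_\Ha$ to $\R^{td_A}$; this already yields the deterministic bound $|Y_{|\psi\ra}-Y_{|\phi\ra}|\le\|\,|\psi\ra-|\phi\ra\,\|/\sqrt2$, which disposes of \eqref{eq:main-tail} for $u\ge\|\,|\psi\ra-|\phi\ra\,\|/\sqrt2$.

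Fix $|\psi\ra,|\phi\ra\in S_\Ha$; multiplying $|\phi\ra$ by a unimodular scalar we may assume $\la\psi|\phi\ra\in[0,1]$, so that $|\psi\ra+|\phi\ra$ and $|\psi\ra-|\phi\ra$ are orthogonal in the real inner product and, for each $k$, $\zeta_k:=U_k(|\psi\ra+|\phi\ra)/\|\,|\psi\ra+|\phi\ra\,\|$ and $\omega_k:=U_k(|\psi\ra-|\phi\ra)/\|\,|\psi\ra-|\phi\ra\,\|$ form a Haar‑distributed orthonormal pair, independent over $k$. Using $\|V_{|\psi\ra}\|_2^2=\|V_{|\phi\ra}\|_2^2=t$ and the identity $u-v=(u^2-v^2)/(u+v)$,
\[
Y_{|\psi\ra}-Y_{|\phi\ra}
=\frac{\|V_{|\psi\ra}-\tilde u\|_2^2-\|V_{|\phi\ra}-\tilde u\|_2^2}{\sqrt{2t}\,\bigl(\|V_{|\psi\ra}-\tilde u\|_2+\|V_{|\phi\ra}-\tilde u\|_2\bigr)}
=\frac{-\,\la V_{|\psi\ra}-V_{|\phi\ra},\,\tilde u\ra}{t\,(Y_{|\psi\ra}+Y_{|\phi\ra})},
\]
and the numerator $\la V_{|\psi\ra}-V_{|\phi\ra},\tilde u\ra=d_A^{-1/2}\sum_{k,a}\bigl(\|P_aU_k|\psi\ra\|-\|P_aU_k|\phi\ra\|\bigr)$ is a sum of $t$ i.i.d.\ mean‑zero contributions, each of which can be rewritten through $\zeta_k,\omega_k$ (since $\|P_aU_k|\psi\ra\|^2-\|P_aU_k|\phi\ra\|^2$ is a scalar multiple of $\Re\la P_a\zeta_k,P_a\omega_k\ra$).

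For the denominator, note that $Y_{|\psi\ra}=(1-t^{-1}\sum_kF(p^A_{U_k|\psi\ra},\unif([d_A])))^{1/2}$ is $\tfrac1{\sqrt{2t}}$‑Lipschitz as a function of $(U_1,\dots,U_t)\in\mathcal U_d^{\,t}$ (the factor $Y$ arising from the chain rule cancels against $\sqrt{\sum_k D_H(\cdot)^2}=\sqrt t\,Y$), so by concentration of measure on $\mathcal U_d^{\,t}$, with probability at least $1-2e^{-ct}$ one has $Y_{|\psi\ra}+Y_{|\phi\ra}\gtrsim d_B^{-1/2}$ — the expectation $R$ being of order $d_B^{-1/2}$ by \eqref{eq:M-estimation} together with a matching lower bound for $\E D_H(p^A_{U|\psi\ra},\unif)^2$ and concentration of $t^{-1}\sum_kD_H(p^A_{U_k|\psi\ra},\unif)^2$; on the complementary event $Y_{|\psi\ra}+Y_{|\phi\ra}$ is itself $\lesssim d_B^{-1/2}$, hence so is $|Y_{|\psi\ra}-Y_{|\phi\ra}|$, contributing only a negligible additive term to the tail. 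On the good event the increment is therefore at most $C\sqrt{d_B}\,t^{-1}\,\bigl|\la V_{|\psi\ra}-V_{|\phi\ra},\tilde u\ra\bigr|$, and the crux of the proof — and the step I expect to be the main obstacle — is to show that $\la V_{|\psi\ra}-V_{|\phi\ra},\tilde u\ra$ has a sub‑Gaussian tail with parameter of order $\|\,|\psi\ra-|\phi\ra\,\|\,\sqrt t\,d_A^{-1/2}d_B^{-1}$ in the relevant window of $u$, together with the correct sub‑exponential behaviour for larger $u$. The point is that the naive bounds (by $\|V_{|\psi\ra}-V_{|\phi\ra}\|_2$, or by boundedness plus variance of the i.i.d.\ summands) are too lossy by powers of the dimensions; one must exploit the exact cancellations $\sum_a(\|P_aU_k|\psi\ra\|^2-\|P_aU_k|\phi\ra\|^2)=0$ and $\sum_a\Re\la P_a\zeta_k,P_a\omega_k\ra=0$ (coming from $\sum_aP_a=\id$ and $\zeta_k\perp\omega_k$) and then invoke concentration of measure on the unitary group for the quantities $\|P_a\zeta_k\|^2$, $\la P_a\zeta_k,P_a\omega_k\ra$ and their second‑order interactions. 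Feeding this back through the displayed identity, together with the denominator estimate, the deterministic bound, and a union over the two events, yields \eqref{eq:main-tail} with a universal constant $\Cc$.

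Finally, \eqref{eq:main-expectation} is a formal consequence. By \eqref{eq:main-tail} the process $\{Y_{|\psi\ra}\}$ has sub‑Gaussian increments with respect to the metric $\rho(|\psi\ra,|\phi\ra)=\bigl(\Cc/(2dt)\bigr)^{1/2}\|\,|\psi\ra-|\phi\ra\,\|$, so Talagrand's generic chaining theorem gives $\E\sup_{|\psi\ra,|\phi\ra\in\Lambda}|Y_{|\psi\ra}-Y_{|\phi\ra}|\le C\,\gamma_2(\Lambda,\rho)=C\,(\Cc/(2dt))^{1/2}\,\gamma_2(\Lambda,\|\cdot\|_2)$. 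The canonical Gaussian process of \eqref{eq:definition-G} satisfies $\E(G_{|\psi\ra}-G_{|\phi\ra})^2=\|\,|\psi\ra-|\phi\ra\,\|^2$, so its canonical metric is the Euclidean one and the majorizing measure theorem identifies $\gamma_2(\Lambda,\|\cdot\|_2)$, up to a universal factor, with $\E\sup_{|\psi\ra\in\Lambda}G_{|\psi\ra}$; combining the two estimates gives \eqref{eq:main-expectation} with a universal $\Cd$.
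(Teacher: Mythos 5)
Your reduction is set up correctly (the identity $Y_{|\psi\ra}-Y_{|\phi\ra}=-\la V_{|\psi\ra}-V_{|\phi\ra},\tilde u\ra/(t(Y_{|\psi\ra}+Y_{|\phi\ra}))$, the deterministic bound for large $u$, and the final passage from \eqref{eq:main-tail} to \eqref{eq:main-expectation} via generic chaining are all fine), but the proof has a genuine hole exactly where the theorem lives: the claim that $\la V_{|\psi\ra}-V_{|\phi\ra},\tilde u\ra$ is subgaussian with parameter $\||\psi\ra-|\phi\ra\|\sqrt t\,d_A^{-1/2}d_B^{-1}$ is asserted, flagged by you as ``the main obstacle'', and then only gestured at (``exploit the exact cancellations \dots and invoke concentration of measure''). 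No argument is given for why the cancellation $\sum_a\bigl(\|P_aU_k|\psi\ra\|^2-\|P_aU_k|\phi\ra\|^2\bigr)=0$ survives the nonlinear passage from squared norms to norms with the claimed quantitative gain of a full factor $d_B^{-1/2}$ over the naive variance bound; this is the entire analytic content of the result. A second, structural problem: your union bound carries an additive term $\p(\text{denominator small})\approx 2e^{-ctd_A}$, and for close points this cannot be absorbed into the target tail --- at $u\approx\||\psi\ra-|\phi\ra\|/\sqrt2$ the right-hand side of \eqref{eq:main-tail} is $2e^{-dt/(2\Cc)}=2e^{-d_Ad_Bt/(2\Cc)}$, which for large $d_B$ is far smaller than $e^{-ctd_A}$; your fallback ``on the bad event $|Y_{|\psi\ra}-Y_{|\phi\ra}|\lesssim d_B^{-1/2}$'' only removes the bad event when $u\gtrsim d_B^{-1/2}$, which never happens once $\||\psi\ra-|\phi\ra\|\ll d_B^{-1/2}$. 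You would also need a matching lower bound $\E\,D_H(p^A_{U|\psi\ra},\unif([d_A]))^2\gtrsim 1/d_B$, which is not established anywhere.

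For contrast, the paper avoids both difficulties by not dividing at all. Setting $|\rho\ra=\tfrac12(|\psi\ra+|\phi\ra)$ and $|\chi\ra=\tfrac12(|\psi\ra-|\phi\ra)$, it writes $U_k=M_kW_k$ (Lemma \ref{le:recursive-representation}) so that, conditionally on the $M_k$, both $\sqrt t\,Y_{|\psi\ra}$ and $\sqrt t\,Y_{|\phi\ra}$ become \emph{identically distributed} $1$-Lipschitz functions of independent points $|\chi_k\ra$ distributed uniformly on a sphere of radius $\|Q|\chi\ra\|\le\tfrac12\||\psi\ra-|\phi\ra\|$ and real dimension $2d-3$. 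Replacing the difference by two deviations from the common conditional mean and applying L\'evy concentration on the product of these \emph{small} spheres gives \eqref{eq:main-tail} directly: the distance $\||\psi\ra-|\phi\ra\|$ enters through the radius of the sphere and $d$ through its dimension, with no denominator estimate, no lower bound on $R$, and no cancellation analysis. If you want to salvage your route, you would have to actually prove the subgaussian estimate for the numerator and restructure the bad-event argument; as written, the proposal identifies the target but does not reach it.
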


Before we pass to the proof of Theorem \ref{th:main} let us demonstrate how to derive from it the fidelity uncertainty principle of Theorem \ref{th:fur}.

\subsubsection{Theorem \ref{th:main} implies Theorem \ref{th:fur}}

Take $\Lambda = S_{\Ha}$ and note that
\begin{displaymath}
\E \sup_{|\psi\ra\in S_\Ha} G_{|\psi\ra} = \E \||G\ra \| \le \sqrt{\E \||G\ra\|^2} = \sqrt{d}.
\end{displaymath}
Therefore, the inequality \eqref{eq:main-expectation} of Theorem \ref{th:main} gives
\begin{displaymath}
\E \sup_{|\psi\ra,|\phi\ra \in S_\Ha} |Y_{|\psi\ra} - Y_{|\phi\ra}| \le \frac{\Cd}{\sqrt{t}}.
\end{displaymath}
Let us now fix an arbitrary point $|\rho\ra \in S_\Ha$. We have
\begin{displaymath}
\E \sup_{|\psi\ra \in S_\Ha} |Y_{|\psi\ra} - Y_{|\rho\ra}| \le \E \sup_{|\psi\ra,|\phi\ra \in S_\Ha} |Y_{|\psi\ra} - Y_{|\phi\ra}| \le \frac{\Cd}{\sqrt{t}}.
\end{displaymath}
By the triangle inequality and \eqref{eq:M-estimation} we further obtain
\begin{align*}
  \E &\sup_{|\psi\ra \in S_\Ha} \sqrt{\frac{1}{t}\sum_{k=1}^{t} D_H(p^A_{U_k|\psi\ra},\unif([d_A]))^2}  = \E \sup_{|\psi\ra\in S_\Ha} Y_{|\psi\ra} \\
  &\le \E Y_{|\rho\ra} + \frac{\Cd}{\sqrt{t}}  = R + \frac{\Cd}{\sqrt{t}} \le \frac{1}{\sqrt{d_B}} + \frac{\Cd}{\sqrt{t}}.
\end{align*}

Thus if $d_B \ge 16/\varepsilon$ and $t \ge 16C_4^2/\varepsilon$ we obtain
\begin{align}\label{eq:expectation-estimate}
  \E \sup_{|\psi\ra \in S_\Ha} \sqrt{\frac{1}{t}\sum_{k=1}^{t} D_H(p^A_{U_k|\psi\ra},\unif([d_A]))^2} \le \sqrt{\varepsilon}/2.
\end{align}

In view of the equivalence between \eqref{eq:fur} and \eqref{eq:fur-Hellinger} this already shows the existence of transformations $U_1,\ldots,U_t$ satisfying the $\varepsilon$-fidelity uncertainty relation. To infer that in fact this relation is satisfied with high probability on the product of unitary groups we need to use the well known concentration of measure inequality, which we recall in the lemma below. The lemma is a well known consequence of the logarithmic Sobolev inequality on the unitary group (see e.g. \cite{MR3109633}) together with the tensorization of entropy and Herbst's argument \cite[Chapter 5]{MR1849347}.

\begin{lemma}\label{le:concentration} Let $\mathcal{U}_d^t$ be the $t$-fold Carthesian product of the unitary group $\mathcal{U}_d$. Let $f \colon \mathcal{U}_d^t \to \R$ be a 1-Lipschitz function with respect to the $\ell_2$ sum of Hilbert-Schmidt distances, i.e.
\begin{displaymath}
|f(V_1,\ldots,V_t) - f(V_1',\ldots,V_t')| \le \sqrt{\sum_{k=1}^t \|V_k - V_k'\|_{HS}^2}
\end{displaymath}
  for all $V_1,\ldots,V_t,V_1',\ldots,V_t \in \mathcal{U}_d$. Let $U_1,\ldots,U_t$ be independent random unitary matrices distributed according to the Haar measure on $\mathcal{U}_d$. Then for any $s \ge 0$,
  \begin{displaymath}
    \p\Big(|f(U_1,\ldots,U_t) - \E f(U_1,\ldots,U_t)| \ge s\Big) \le 2e^{-ds^2/\Ce}.
  \end{displaymath}
\end{lemma}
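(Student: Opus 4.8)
The plan is to derive Lemma~\ref{le:concentration} from a logarithmic Sobolev inequality on $\mathcal{U}_d$ with log-Sobolev constant of order $1/d$, through the classical three-step scheme: one copy of the group, tensorization, Herbst's argument. First I would pass from the Hilbert--Schmidt metric to the geodesic distance $d_g$ on $\mathcal{U}_d$ associated with the bi-invariant Riemannian metric induced by the inner product $\la X,Y\ra = \tr(X^\dag Y)$ on the Lie algebra. Because a geodesic between two points of $\mathcal{U}_d$ is never shorter than the straight segment joining them in the ambient space of matrices, $\|V-V'\|_{HS}\le d_g(V,V')$ for all $V,V'$; consequently on $\mathcal{U}_d^t$ the $\ell_2$-sum of Hilbert--Schmidt distances is dominated by the $\ell_2$-sum of geodesic distances, so any $f$ as in the statement is also $1$-Lipschitz for the latter and hence satisfies $|\nabla f|\le 1$ almost everywhere on the product manifold. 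The analytic input is the inequality $\mathrm{Ent}_\mu(g^2)\le \frac{C}{d}\,\E_\mu|\nabla g|^2$ for every smooth $g$, where $\mu$ is the Haar probability measure on $\mathcal{U}_d$ and $C$ is universal; this is classical (see \cite{MR3109633}) and follows from the Bakry--\'Emery criterion applied to the $SU(d)$ part of the group, whose Ricci curvature in the present normalization is bounded below by a quantity of order $d$, together with a separate treatment of the remaining (determinant) one-dimensional direction.

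Next I would tensorize. By the tensorization property of the entropy functional, if each of the $t$ factors $(\mathcal{U}_d,\mu)$ satisfies the above log-Sobolev inequality with constant $C/d$, then $(\mathcal{U}_d^t,\mu^{\otimes t})$, equipped with the $\ell_2$-sum of geodesic distances, satisfies it with the \emph{same} constant $C/d$, the gradient now being the full Riemannian gradient on the product; see \cite[Chapter~5]{MR1849347}. Then I would run Herbst's argument: applying the product log-Sobolev inequality to $g = \exp(\lambda f/2)$ and using $|\nabla f|\le 1$ one obtains, in the standard way, a differential inequality for $H(\lambda) = \E\exp\big(\lambda(f(U_1,\dots,U_t)-\E f)\big)$ whose integration gives $H(\lambda)\le \exp(C'\lambda^2/d)$ for all $\lambda\in\R$ with $C'$ universal; a Chernoff bound then yields $\p(f - \E f \ge s)\le \exp(-ds^2/(4C'))$, and applying the same estimate to $-f$ and taking a union bound gives the assertion with $\Ce = 4C'$. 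To apply the log-Sobolev inequality to a merely Lipschitz $f$ I would first approximate it by smooth $1$-Lipschitz functions (note that $f$ is automatically bounded, $\mathcal{U}_d^t$ being compact) and pass to the limit.

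The only genuinely delicate point is the $1/d$ scaling of the log-Sobolev constant in the first step. Curvature alone is not enough for $\mathcal{U}_d$, since the central/determinant direction is flat; what rescues the scaling is that, measured in the Hilbert--Schmidt metric, this direction behaves like a circle of circumference of order $d^{-1/2}$ (the central circle $\{e^{i\theta}\id\}$ has Hilbert--Schmidt length $2\pi\sqrt{d}$, but a function factoring through the determinant completes $d$ full periods along it), so it contributes a log-Sobolev term of the same order $O(1/d)$ rather than spoiling the estimate. Once the correctly scaled one-copy inequality is available, tensorization and Herbst's argument are entirely routine.
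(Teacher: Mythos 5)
Your proposal is correct and follows essentially the same route as the paper, which proves the lemma exactly by citing the logarithmic Sobolev inequality on $\mathcal{U}_d$ with constant of order $1/d$, tensorization of entropy, and Herbst's argument; your additional remarks (domination of the Hilbert--Schmidt distance by the geodesic distance, and the treatment of the flat central direction via the $d$-fold periodicity of the determinant) are the standard way of filling in the details of that citation.
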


Thus to complete the proof of the implication [Theorem \ref{th:main} $\Rightarrow$ Theorem \ref{th:fur}] it remains to demonstrate the following lemma.

\begin{lemma}\label{le:Lipschitz-first}
The function $f\colon \mathcal{U}_d^t \to \R$ given by the formula
\begin{displaymath}
f(U_1,\ldots,U_t) = \sup_{|\psi\ra \in S_\Ha} \sqrt{\frac{1}{t}\sum_{k=1}^{t} D_H(p^A_{U_k|\psi\ra},\unif([d_A]))^2}
\end{displaymath}
is $1/\sqrt{2t}$-Lipschitz.
\end{lemma}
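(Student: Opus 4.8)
The plan is to reduce the Lipschitz estimate for the supremum to a uniform (in $|\psi\ra$) Lipschitz estimate for the single function $(U_1,\ldots,U_t)\mapsto \sqrt{t^{-1}\sum_k D_H(p^A_{U_k|\psi\ra},\unif([d_A]))^2}$, using the elementary fact that a supremum of $L$-Lipschitz functions is $L$-Lipschitz, and the further fact that $\sqrt{t^{-1}\sum_k a_k^2}$, as a function of $(a_1,\ldots,a_t)\in\R_{\ge 0}^t$, is $1/\sqrt{t}$-Lipschitz with respect to the Euclidean norm on $\R^t$ (it is the Euclidean norm rescaled by $1/\sqrt t$). So it suffices to show that for each fixed $|\psi\ra\in S_\Ha$ and each $k$, the map $U\mapsto a_k(U):=D_H(p^A_{U|\psi\ra},\unif([d_A]))$ is $(1/\sqrt2)$-Lipschitz from $\mathcal{U}_d$ (with Hilbert--Schmidt distance) to $\R$; then the vector-valued map $U_{\bullet}\mapsto (a_k(U_k))_k$ is $(1/\sqrt 2)$-Lipschitz from $\mathcal{U}_d^t$ with the $\ell_2$-sum of HS distances to $\R^t$ with the Euclidean norm, and composing with the $1/\sqrt t$-Lipschitz map above yields the claimed constant $1/\sqrt{2t}$.

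The core step is thus the single-unitary bound. First I would note that $a_k(U) = D_H(p^A_{U|\psi\ra},\unif([d_A]))$ depends on $U$ only through the state $\rho = \tr_B\big(U|\psi\ra\la\psi|U^\dag\big)$ on $\Ha_A$, via $p^A_{U|\psi\ra}(a) = \la a|^A \rho |a\ra^A$. I would decompose the estimate as a composition: $U\mapsto |\phi\ra := U|\psi\ra \in S_\Ha$ is $1$-Lipschitz (since $\big\||\phi\ra - |\phi'\ra\big\| = \|(U-U')|\psi\ra\| \le \|U-U'\| \le \|U-U'\|_{HS}$); then $|\phi\ra \mapsto \big(\sqrt{p^A_{|\phi\ra}(a)}\big)_{a\in[d_A]} \in \R^{d_A}$ — the vector of amplitudes of the outcome distribution — and finally the Hellinger distance to $\unif([d_A])$ is itself $1$-Lipschitz in this vector with respect to the $\sqrt2$-rescaled Euclidean norm, by the defining formula \eqref{eq:Hellinger-distance}. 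The crux is the middle map: I claim $|\phi\ra\mapsto \big(\sqrt{p^A_{|\phi\ra}(a)}\big)_a$ is $1$-Lipschitz from $S_\Ha$ (Euclidean) to $\R^{d_A}$ (Euclidean). Indeed $\sqrt{p^A_{|\phi\ra}(a)} = \big\|(|a\ra^A\la a|\otimes \id_{\Ha_B})|\phi\ra\big\|$ is the Euclidean norm of the orthogonal projection $P_a|\phi\ra$ onto the block $|a\ra^A\otimes\Ha_B$; since these projections are onto mutually orthogonal subspaces summing to $\Ha$, the vector $\big(\|P_a|\phi\ra\|\big)_a$ is obtained from $|\phi\ra$ by first taking the orthogonal decomposition $|\phi\ra = \sum_a P_a|\phi\ra$ and then the norms, and the map $(x_a)_a \mapsto (\|x_a\|)_a$ on an orthogonal direct sum $\bigoplus_a \Ha_B$ is $1$-Lipschitz by the triangle inequality in each block together with Pythagoras across blocks. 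Chaining the three $1$-Lipschitz (resp. $\sqrt2$-scaled) maps gives $|a_k(U) - a_k(U')| \le \tfrac{1}{\sqrt2}\|U-U'\|_{HS}$, as needed.

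The only place that needs a little care — and what I expect to be the main (if still routine) obstacle — is making the "norm across an orthogonal decomposition" argument fully rigorous, i.e. verifying that
\begin{displaymath}
\sqrt{\sum_{a=1}^{d_A}\Big(\|P_a|\phi\ra\| - \|P_a|\phi'\ra\|\Big)^2} \le \sqrt{\sum_{a=1}^{d_A}\|P_a|\phi\ra - P_a|\phi'\ra\|^2} = \||\phi\ra - |\phi'\ra\|,
\end{displaymath}
where the inequality is the reverse triangle inequality applied coordinatewise and the last equality is Pythagoras since $\sum_a P_a = \id_\Ha$ with orthogonal ranges. One should also keep track of the factor $\sqrt2$: since $D_H(p,q) = \tfrac{1}{\sqrt2}\|(\sqrt{p(a)})_a - (\sqrt{q(a)})_a\|_{\ell_2}$, the map from amplitude vectors to $D_H(\cdot,\unif([d_A]))$ contributes exactly $1/\sqrt2$ and the two preceding maps contribute $1$ each, so $a_k(\cdot)$ is $1/\sqrt2$-Lipschitz on $\mathcal{U}_d$; then the $\ell_2$-aggregation over $k$ and the $1/\sqrt t$ from the outer root combine to give $1/\sqrt{2t}$. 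Everything else is assembling these elementary Lipschitz facts in the right order.
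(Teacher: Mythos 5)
Your proposal is correct and follows essentially the same route as the paper: reduce via ``sup of Lipschitz is Lipschitz'' and the $\ell_2$-aggregation over $k$ to showing that $U\mapsto D_H(p^A_{U|\psi\ra},\unif([d_A]))$ is $1/\sqrt2$-Lipschitz in the Hilbert--Schmidt norm, and then establish that single-unitary bound by the reverse triangle inequality in $\ell_2^{d_A}$, the reverse triangle inequality within each block $|a\ra^A\otimes\Ha_B$ together with Pythagoras across blocks, and the comparison $\|U-U'\|\le\|U-U'\|_{HS}$. The paper writes this as one chain of inequalities rather than a composition of Lipschitz maps, but the content is identical.
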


Indeed, assuming Lemma \ref{le:Lipschitz-first}, by Lemma \ref{le:concentration} and \eqref{eq:expectation-estimate} we obtain
\begin{align*}
&\p\Big(\sup_{|\psi\ra \in S_\Ha} \sqrt{\frac{1}{t}\sum_{k=1}^{t} D_H(p^A_{U_k|\psi\ra},\unif([d_A]))^2} \le \sqrt{\varepsilon}\Big)
 \\& \ge \p\Big(\sup_{|\psi\ra \in S_\Ha} \sqrt{\frac{1}{t}\sum_{k=1}^{t} D_H(p^A_{U_k|\psi\ra},\unif([d_A]))^2} - \E \sup_{|\psi\ra \in S_\Ha} \sqrt{\frac{1}{t}\sum_{k=1}^{t} D_H(p^A_{U_k|\psi\ra},\unif([d_A]))^2} \le \sqrt{\varepsilon}/2\Big) \\
 &\ge 1 - 2e^{-dt\varepsilon/2\Ce},
\end{align*}
which proves Theorem \ref{th:fur} with $\Ca = 16\max(1,\Cd^2)$, $\Cb = 2\Ce$.

\begin{proof}[Proof of Lemma \ref{le:Lipschitz-first}]
A supremum of $L$-Lipschitz functions is $L$-Lipschitz, so by the triangle inequality in $\ell_2$, it is enough to show that for every $|\psi\ra \in S_\Ha$, the function $U\mapsto D_H(p_{U|\psi\ra}^A,\unif([d_A]))$ is ($1/\sqrt{2}$)-Lipschitz with respect to the Hilbert-Schmidt norm. This is however straightforward as we have
\begin{align*}
&|D_H(p_{U|\psi\ra}^A,\unif([d_A])) - D_H(p_{U'|\psi\ra}^A,\unif([d_A]))| \\
&= \frac{1}{\sqrt{2}}\Big| \sqrt{\sum_{a=1}^{d_A}  \Big(\sqrt{\sum_{b=1}^{d_B}|\la a|\la b|U|\psi\ra|^2}-1/\sqrt{d_A}\Big)^2}
- \sqrt{\sum_{a=1}^{d_A}  \Big(\sqrt{\sum_{b=1}^{d_B}|\la a|\la b|U|\psi\ra|^2}-1/\sqrt{d_A}\Big)^2}\Big|\\
&\le \frac{1}{\sqrt{2}}\sqrt{\sum_{a=1}^{d_A}  \Big(\sqrt{\sum_{b=1}^{d_B}|\la a|\la b|U|\psi\ra|^2} - \sqrt{\sum_{b=1}^{d_B}|\la a|\la b|U'|\psi\ra|^2}\Big)^2}\\
& \le \frac{1}{\sqrt{2}}\sqrt{\sum_{a=1}^{d_A}  \sum_{b=1}^{d_B} |\la a|\la b| U-U'|\psi\ra|^2} = \frac{1}{\sqrt{2}}\| (U - U')|\psi\ra\|  \le \frac{1}{\sqrt{2}}\|(U-U')\|_{HS},
\end{align*}
where the first two inequalities follow from the triangle inequality in $\ell_2$, the second equality from the fact that $\{|a\ra|b\ra\colon a\in [d_A],b\in [d_B]\}$ is an orthonormal basis of $\Ha$ and the last inequality from the comparison between the operator and Hilbert-Schmidt norm of a matrix.
\end{proof}

\subsubsection{Proof of Theorem \ref{th:main}}

Before we start the proof of Theorem \ref{th:main} let us introduce the main probabilistic tools. The first one is a well known lemma, concerning recursive construction of random unitary matrices.

\begin{lemma}[{\cite[Proposition 2.1]{MR2451289}}]\label{le:recursive-representation}
Let $M$ be any $d\times d$ random unitary matrix (not necessarily Haar distributed), whose first column is distributed uniformly on the sphere $S_{\C}^{d-1}$. Let moreover $V$ be a random unitary matrix, distributed according to the Haar measure on $\mathcal{U}_{d-1}$. Assume that $M$ and $V$ are independent. Define the $d\times d$ matrix $W$ with the formula
\begin{displaymath}
  W = \left[\begin{array}{cc}
         1 & 0 \\
         0 & V
       \end{array}\right].
\end{displaymath}
Then the matrix $MW$ is distributed according to the Haar measure on $\mathcal{U}_d$.
\end{lemma}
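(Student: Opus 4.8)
The plan is to prove the lemma by the standard disintegration of Haar measure along the fibration $\mathcal{U}_d \to \mathcal{U}_d/\mathcal{U}_{d-1} \cong S_\C^{d-1}$, where $\mathcal{U}_{d-1}$ is identified with the stabilizer of $e_1$, i.e. with the block matrices $1\oplus A := \left[\begin{smallmatrix}1&0\\0&A\end{smallmatrix}\right]$, $A\in\mathcal{U}_{d-1}$. First I would fix a Borel section $H\colon S_\C^{d-1}\to\mathcal{U}_d$ with the property that the first column of $H(v)$ equals $v$ (e.g. a Householder reflection sending $e_1$ to $v$, or Gram--Schmidt completion of $v$); only Borel measurability is needed and is routine. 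The main work is then to identify, for a Haar matrix, the "residual" $\mathcal{U}_{d-1}$-part, and to show that the independent Haar factor $V$ washes out any correlations present in $M$.

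Next I would establish the decomposition of a Haar matrix. Let $U$ be Haar on $\mathcal{U}_d$ and let $u_1$ be its first column, which is uniform on $S_\C^{d-1}$. Then $H(u_1)^{-1}U$ is unitary with first column $e_1$, so $H(u_1)^{-1}U = 1\oplus V'$ for a matrix $V'\in\mathcal{U}_{d-1}$ that is a Borel function of $U$; hence $U = H(u_1)\,(1\oplus V')$. I claim $V'$ is Haar on $\mathcal{U}_{d-1}$ and independent of $u_1$. Indeed, for fixed $g\in\mathcal{U}_{d-1}$ put $G = 1\oplus g$; right-invariance of Haar gives $UG \overset{d}{=} U$, while $UG$ has the same first column $u_1$ and $H(u_1)^{-1}(UG) = 1\oplus V'g$. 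Applying the same Borel extraction map to $UG$ and to $U$ yields $(u_1, V'g)\overset{d}{=}(u_1,V')$ for every fixed $g$. Using a countable dense subset of $\mathcal{U}_{d-1}$ and continuity of right translation, the regular conditional law of $V'$ given $u_1$ is a.s. right-invariant, hence a.s. equal to the Haar measure on $\mathcal{U}_{d-1}$; in particular it does not depend on $u_1$, which proves the claim.

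Then I would treat $M$ by the same extraction: with $m_1$ its first column (uniform on $S_\C^{d-1}$ by hypothesis), $M = H(m_1)\,(1\oplus V_M)$ for some $V_M\in\mathcal{U}_{d-1}$ that is a Borel function of $M$ — crucially, no independence is asserted for the pair $(m_1, V_M)$. Since $W = 1\oplus V$ and $(1\oplus V_M)(1\oplus V) = 1\oplus (V_M V)$, we get
\[
  MW = H(m_1)\,\bigl(1\oplus (V_M V)\bigr).
\]
Now $V$ is Haar on $\mathcal{U}_{d-1}$ and independent of $M$, so conditionally on $(m_1, V_M)$ the matrix $V_M V$ has the law of (fixed element)$\cdot$(Haar) $=$ Haar, a law not depending on $(m_1,V_M)$; hence $\widetilde V := V_M V$ is Haar on $\mathcal{U}_{d-1}$ and independent of $m_1$. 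Comparing with the previous paragraph, $(m_1,\widetilde V)$ and $(u_1, V')$ have the same joint law (first coordinate uniform on $S_\C^{d-1}$, second coordinate Haar on $\mathcal{U}_{d-1}$, independent). Pushing both forward by the Borel map $(v,A)\mapsto H(v)(1\oplus A)$ gives $MW \overset{d}{=} H(u_1)(1\oplus V') = U$, i.e. $MW$ is Haar on $\mathcal{U}_d$.

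I expect the main obstacle to be purely measure-theoretic and conceptual rather than computational: the two facts being invoked are that the right-invariant probability measure on the compact group $\mathcal{U}_{d-1}$ is unique (Haar), and that a random group element whose conditional law given some data is a.s. Haar is automatically Haar-distributed and independent of that data. I would state these explicitly where they are used (Steps on $U$ and on $MW$), as well as the mild point that $H$ can be chosen Borel; the algebraic block-matrix identities and the invariance of the uniform sphere measure under unitary rotations are routine.
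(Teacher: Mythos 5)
Your proof is correct. The paper itself gives no argument for this lemma --- it is quoted verbatim from the cited reference (Proposition 2.1 there) --- so there is nothing internal to compare against; your disintegration of Haar measure along $\mathcal{U}_d\to S_{\C}^{d-1}$ via a Borel section, the right-invariance argument identifying the conditional law of the residual $\mathcal{U}_{d-1}$-block as Haar, and the final left-invariance step washing out the dependence of $V_M$ on $m_1$ together constitute a complete and standard proof of the statement.
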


We will also need the celebrated Majorizing Measure Theorem due to Talagrand \cite{MR906527,MR3184689}, which for our purposes can be formulated as follows.
\begin{theorem}\label{thm:Majorizing-Measures}
Let $(X_{|\psi\ra}\colon |\psi\ra \in \Lambda)$, where $\Lambda \subseteq \Ha$, be a stochastic process, such that for every $|\psi\ra,|\rho\ra \in \Lambda$ and every $u \ge 0$,
\begin{displaymath}
\p(|X_{|\psi\ra} - X_{|\rho\ra}| \ge u) \le 2\exp\Big(-\frac{u^2}{\||\psi\ra - |\rho\ra\|^2}\Big).
\end{displaymath}
Then
\begin{displaymath}
\E \sup_{|\psi\ra,|\rho\ra \in \Lambda} |X_{|\psi\ra} - X_{|\rho\ra}| \le \Cf \E \sup_{|\psi\ra \in \Lambda} G_{|\psi\ra},
\end{displaymath}
where $\Cf$ is an absolute numerical constant.
\end{theorem}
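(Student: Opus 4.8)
The statement is Talagrand's majorizing measure theorem, so the plan is to follow the generic chaining architecture, splitting the asserted inequality into a (relatively soft) chaining upper bound and the (deep) majorizing-measure lower bound, glued together by Talagrand's $\gamma_2$ functional. Throughout write $\delta(|\psi\ra,|\phi\ra) = \||\psi\ra-|\phi\ra\|$ for the indexing metric, and note that the canonical Gaussian process is centered with $\E(G_{|\psi\ra}-G_{|\phi\ra})^2 = \delta(|\psi\ra,|\phi\ra)^2$. First I would reduce to finite $\Lambda$: both $\E\sup_{\psi,\phi}|X_{|\psi\ra}-X_{|\phi\ra}|$ and $\E\sup_\psi G_{|\psi\ra}$ equal the supremum of the corresponding quantity over finite subsets (for the Gaussian side this is the standard monotone approximation of $\E\sup$ over increasing finite nets), so a bound valid uniformly over finite $\Lambda$ suffices. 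For finite $\Lambda$ I introduce admissible sequences $(\mathcal{A}_n)_{n\ge0}$ of partitions: $\mathcal{A}_0=\{\Lambda\}$, each $\mathcal{A}_{n+1}$ refines $\mathcal{A}_n$, and $|\mathcal{A}_n|\le N_n:=2^{2^n}$. Writing $A_n(\psi)$ for the cell of $\mathcal{A}_n$ containing $|\psi\ra$ and $\Delta(A)$ for the $\delta$-diameter, set $\gamma_2(\Lambda)=\inf_{(\mathcal{A}_n)}\sup_\psi\sum_{n\ge0}2^{n/2}\Delta(A_n(\psi))$. The theorem then follows from the two estimates $\E\sup_{\psi,\phi}|X_{|\psi\ra}-X_{|\phi\ra}|\le L\gamma_2(\Lambda)$ and $\gamma_2(\Lambda)\le L\,\E\sup_\psi G_{|\psi\ra}$.

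For the upper (chaining) bound I would fix a near-optimal admissible sequence, pick in each cell a representative $\pi_n(\psi)\in A_n(\psi)$ (with $\pi_0\in\Lambda$ the representative of the single cell $\mathcal{A}_0$, and $\pi_n(\psi)=|\psi\ra$ once the cells are singletons, which happens for finite $\Lambda$ and large $n$), and telescope $X_{|\psi\ra}-X_{\pi_0}=\sum_{n\ge1}\big(X_{\pi_n(\psi)}-X_{\pi_{n-1}(\psi)}\big)$. Each link satisfies $\delta(\pi_n(\psi),\pi_{n-1}(\psi))\le\Delta(A_{n-1}(\psi))$, and at level $n$ there are at most $N_n$ distinct links; applying the assumed subgaussian tail with the $n$-th threshold scaled like $u\,2^{n/2}\Delta(A_{n-1}(\psi))$ and a union bound over the links absorbs the entropy $\log N_n\simeq 2^n$ into the $2^{n/2}$ factors, yielding $\E\sup_\psi|X_{|\psi\ra}-X_{\pi_0}|\le L\sup_\psi\sum_n 2^{n/2}\Delta(A_{n-1}(\psi))\le L'\gamma_2(\Lambda)$; the two-sided supremum costs a harmless factor $2$. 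This direction is standard and essentially mechanical.

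The hard part will be the majorizing-measure lower bound $\gamma_2(\Lambda)\le L\,\E\sup_\psi G_{|\psi\ra}$, the heart of Talagrand's theorem. I would run it through the growth functional $F(A)=\E\sup_{|\psi\ra\in A}G_{|\psi\ra}$, which is nonnegative, monotone ($A\subseteq B\Rightarrow F(A)\le F(B)$), and dominates diameters ($F(A)\ge\tfrac{1}{\sqrt{2\pi}}\Delta(A)$, since $\E\max(G_{|\psi\ra},G_{|\phi\ra})=\tfrac{1}{\sqrt{2\pi}}\delta(|\psi\ra,|\phi\ra)$). The crucial and most delicate ingredient is the growth (super-additivity) condition: there are constants $\kappa,c>0$ such that whenever $|\psi_1\ra,\dots,|\psi_m\ra\in A$ are $a$-separated and $H_1,\dots,H_m\subseteq A$ satisfy $H_i\subseteq B(|\psi_i\ra,a/\kappa)$, then
\begin{displaymath}
F\Big(\bigcup_{i=1}^m H_i\Big)\ge c\,a\sqrt{\log m}+\min_{1\le i\le m}F(H_i).
\end{displaymath}
This is where all the probability enters: the $a\sqrt{\log m}$ gain is Sudakov minoration applied to the separated centers, while the additive survival of $\min_i F(H_i)$ uses Gaussian concentration to show that the supremum over the union simultaneously realizes, with high probability, both the Sudakov spread across clusters and the internal supremum of the best cluster. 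Establishing this inequality cleanly—controlling the interaction between the inter-cluster and intra-cluster fluctuations—is the main obstacle of the whole argument.

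Granting the growth condition, I would construct the admissible sequence by Talagrand's partitioning recursion: each current cell $A$ at scale $a\simeq 2^{-n}$ is split by repeatedly extracting balls of radius $\simeq a/\kappa$ around near-maximizers of $F$, assigning the remaining points to the nearest extracted center. The growth condition forces that only boundedly many children can carry an $F$-value close to $F(A)$, which is what lets the cardinality budget $N_n=2^{2^n}$ be respected, and—crucially—it guarantees that each unit of chaining cost $2^{n/2}\Delta(A_n(\psi))$ accrued along a chain is paid for by a commensurate drop in $F$ along that chain. Summing these drops telescopes to at most $L\,F(\Lambda)=L\,\E\sup_\psi G_{|\psi\ra}$, giving $\gamma_2(\Lambda)\le L\,\E\sup_\psi G_{|\psi\ra}$. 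Combining the two bounds yields $\E\sup_{\psi,\phi}|X_{|\psi\ra}-X_{|\phi\ra}|\le\Cf\,\E\sup_\psi G_{|\psi\ra}$ with $\Cf$ an absolute constant, as claimed.
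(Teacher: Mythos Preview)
Your proposal is a reasonable high-level sketch of Talagrand's proof of the majorizing measure theorem, but you should be aware that the paper does not prove this statement at all: it is quoted as a known result, with a citation to Talagrand's original paper and monograph, and used as a black box in the proof of Theorem~\ref{th:main}. So there is nothing to compare against on the paper's side.

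As for the sketch itself, the architecture you describe (reduction to finite index sets, generic chaining upper bound via admissible sequences and a union bound over links, then the lower bound $\gamma_2(\Lambda)\le L\,\E\sup G_{|\psi\ra}$ via a growth functional satisfying a Sudakov-plus-concentration growth condition, followed by the partitioning recursion) is exactly the standard route in Talagrand's book. Your outline is honest about where the real work lies: the growth condition and the partitioning scheme are genuinely delicate, and what you have written is a roadmap rather than a proof. One small point: in the diameter domination step you write $\E\max(G_{|\psi\ra},G_{|\phi\ra})=\tfrac{1}{\sqrt{2\pi}}\delta(|\psi\ra,|\phi\ra)$; the correct constant is $\tfrac{1}{2\sqrt{\pi}}$ (since $\E\max(G_\psi,G_\phi)=\tfrac12\E|G_\psi-G_\phi|=\tfrac{1}{\sqrt{2\pi}}\cdot\tfrac{1}{\sqrt{2}}\,\delta$ for centered jointly Gaussian variables), though this does not affect the argument beyond the value of the absolute constant.
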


The final ingredient we will use is L\'evy's concentration inequality on the Carthesian product of Euclidean spheres (see e.g. \cite{MR1849347}).

\begin{theorem}\label{thm:concentration-sphere}
Let $f\colon \prod_{i=1}^t S^{n-1} \to \R$ be a 1-Lipschitz function (with respect to the geodesic or Euclidean distance on $\prod_{i=1}^{t}S^{n-1}$) and let $X_1,\ldots,X_t$ be i.i.d. random variables distributed uniformly on $S^{n-1}$. Then for any $u \ge 0$,
\begin{displaymath}
\p(|f(X_1,\ldots,X_t) - \E f(X_1,\ldots,X_k)|\ge u) \le 2\exp\Big(-\frac{(n-1)u^2}{2}\Big).
\end{displaymath}
\end{theorem}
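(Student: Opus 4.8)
The plan is to reduce the statement on the product $\prod_{i=1}^t S^{n-1}$ to the single–sphere case by tensorization of entropy, followed by Herbst's argument, exactly along the lines of \cite[Chapter 5]{MR1849347}. First I would recall the sharp logarithmic Sobolev inequality for the uniform probability measure $\sigma$ on $S^{n-1}$: for every smooth $g\colon S^{n-1}\to\R$,
\begin{displaymath}
\mathrm{Ent}_{\sigma}(g^2) \le \frac{2}{n-1}\int_{S^{n-1}} |\nabla g|^2\, d\sigma,
\end{displaymath}
where $\nabla$ is the Riemannian gradient. The constant $n-1$ is also the spectral gap $\lambda_1(S^{n-1})$, so it is the best possible, and it is precisely what produces the exponent $(n-1)u^2/2$ below. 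It follows from the dimensional Bakry--Émery criterion: the round sphere $S^{n-1}$ satisfies the curvature–dimension condition $\mathrm{CD}(n-2,n-1)$, which upgrades the plain curvature lower bound $n-2$ to the log-Sobolev constant $\tfrac{(n-2)(n-1)}{(n-1)-1}=n-1$. Alternatively one can read this off from Lévy's spherical isoperimetric inequality.

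Next I would tensorize. For the product probability measure $\mu=\sigma^{\otimes t}$ on $M=\prod_{i=1}^t S^{n-1}$ the entropy functional is subadditive: for every nonnegative $h$,
\begin{displaymath}
\mathrm{Ent}_{\mu}(h) \le \int_M \sum_{k=1}^t \mathrm{Ent}_{\sigma}^{(k)}(h)\, d\mu,
\end{displaymath}
where $\mathrm{Ent}_{\sigma}^{(k)}$ denotes the entropy taken in the $k$-th coordinate with the other coordinates frozen. Applying the one–dimensional log-Sobolev inequality in each coordinate to $h=g^2$ and using that on the product manifold $|\nabla g|^2=\sum_{k=1}^t|\nabla_k g|^2$, one obtains that $\mu$ satisfies the log-Sobolev inequality on $M$ with the same constant $n-1$.

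Finally, Herbst's argument converts this into a subgaussian tail bound. For a smooth $1$-Lipschitz $f\colon M\to\R$ (so that $|\nabla f|\le 1$ — a function that is $1$-Lipschitz for the Euclidean or the geodesic product metric has gradient of norm at most $1$) one applies the log-Sobolev inequality to $g=e^{\lambda f/2}$, obtains a first–order differential inequality for $\Lambda(\lambda)=\log\E e^{\lambda(f-\E f)}$, and integrates it to get $\Lambda(\lambda)\le \lambda^2/(2(n-1))$ for every $\lambda\in\R$. A Chernoff bound then yields $\p(f-\E f\ge u)\le e^{-(n-1)u^2/2}$; applying the same bound to $-f$ and taking a union bound gives the stated inequality with the factor $2$ in front. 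A general, non-smooth, Lipschitz $f$ is handled by approximating it by smooth $1$-Lipschitz functions — for instance by convolving with an approximate identity on the compact homogeneous space $M$ — and passing to the limit, which leaves the constants unchanged.

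The overall strategy is routine; the one genuinely delicate point is securing the \emph{optimal} constant $n-1$ in the exponent, because the naive Bakry--Émery bound coming from the Ricci lower bound $n-2$ of $S^{n-1}$ is off by one and one must invoke the dimensional refinement $\mathrm{CD}(n-2,n-1)$ (or the spherical isoperimetric argument). If one is content to lose the precise value of the absolute constant in the exponent, a more elementary route avoids logarithmic Sobolev inequalities altogether: run a Doob-martingale (Azuma–McDiarmid type) argument with respect to the filtration generated by $X_1,\dots,X_t$, noting that each increment $\E[f\mid X_1,\dots,X_k]-\E[f\mid X_1,\dots,X_{k-1}]$ is, conditionally on $X_1,\dots,X_{k-1}$, a $1$-Lipschitz function of $X_k$ and hence has subgaussian fluctuations with parameter of order $1/(n-1)$ by the single–sphere concentration inequality; multiplying the conditional moment generating functions produces the product bound with an exponent of the form $c\,(n-1)u^2$ for a universal $c>0$.
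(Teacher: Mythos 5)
Your proof is correct and follows exactly the route the paper points to: the theorem is quoted from Ledoux's monograph, and the log-Sobolev (with the sharp constant $n-1$ on $S^{n-1}$) plus tensorization plus Herbst argument is precisely the machinery the paper itself invokes for the companion concentration result on $\mathcal{U}_d^t$ (Lemma \ref{le:concentration}). Nothing further is needed.
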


\begin{proof}[Proof of Theorem \ref{th:main} ] Let us start with the proof of \eqref{eq:main-tail}. If $|\psi\ra = e^{\sqrt{-1}s}|\phi\ra$, for some $s \in \R$, there is nothing to prove, since then $Y_{|\psi\ra} = Y_{|\phi\ra}$. Assume thus that $\dim(span(|\psi\ra,|\phi\ra)) = 2$.  Define $|\rho\ra = \frac{1}{2}(|\psi\ra+|\phi\ra) \neq 0$ and fix any orthonormal basis in $\Ha$ with a multiple of $|\rho\ra$ as the first vector. Without loss of generality we can assume that in this basis the transformations $U_k$ are given by matrices $M_kW_k$, where $M_k,W_k$, $k=1,\ldots,t$ are independent random matrices as in Lemma \ref{le:recursive-representation}. Having fixed the basis, in what follows we will identify linear transformations of $\Ha$ with matrices, in particular we will simply write $U_k = M_kW_k$. Denote by $P$ the orthogonal projection on $span(|\rho\ra)$ and let $Q = \id_{\Ha} - P$. Denote $|\chi\ra = \frac{1}{2}(|\psi\ra-|\phi\ra)$, clearly $Q|\chi\ra$ and $|\rho\ra$ are orthogonal.

Note that
\begin{displaymath}
\la \chi|\rho\ra = \frac{1}{4}(\la\psi|\psi\ra + \la\psi|\phi\ra - \la \phi|\psi\ra - \la \phi|\phi\ra) = \frac{1}{2}\Im \la \psi|\phi\ra \in \R \sqrt{-1}.
\end{displaymath}

We have $|\psi \ra = |\rho\ra + |\chi\ra$ and $|\phi\ra = |\rho\ra - |\chi\ra$. Define $\alpha = \frac{\la\rho|\chi\ra}{\la\rho|\rho\ra} \in \R\sqrt{-1}$ and note that $P|\chi\ra = \alpha |\rho\ra$. Thus
\begin{displaymath}
|\psi\ra = (1+\alpha)|\rho\ra + Q|\chi\ra,\quad |\phi\ra = (1-\alpha)|\rho\ra - Q|\chi\ra.
\end{displaymath}

By our choice of the basis it follows that $U_k|\rho\ra = M_k|\rho\ra$ and $U_kQ|\chi\ra = M_k|\chi_k\ra$ where $|\chi_k\ra = W_kQ|\chi\ra$. Thus, the above equalities give

\begin{align}\label{eq:plus-minus}
U_k |\psi\ra = (1+\alpha) M_k|\rho\ra + M_k|\chi_k\ra,\quad U_k|\phi\ra = (1-\alpha)M_k|\rho\ra - M_k|\chi_k\ra.
\end{align}
Moreover, the random vectors $|\chi_k\ra$ are independent and distributed uniformly on the sphere of radius $\| Q|\chi\ra \|$ in the subspace of $\Ha$ orthogonal to $|\rho\ra$. We will denote this sphere by $S$.

For fixed values of the matrices $M_1,\ldots,M_{t}$ and $\beta \in \C$, consider the functions $f_{\beta,k}\colon S \to \R$ defined with the formula
\begin{displaymath}
f_k(|x\ra) = \sqrt{\sum_{a=1}^{d_A} \Big(\sqrt{
\sum_{b=1}^{d_B}\Big|\beta\la a|\la b|M_k|\rho \ra + \la a|\la b|M_k|x\ra\Big|^2}-\frac{1}{\sqrt{d_A}}\Big)^2}
\end{displaymath}
and note that $f_k$ are $1$-Lipschitz. Indeed, for fixed $k$, we have
\begin{align*}
&|f_k(|x\ra) - f_k(|y\ra)| \\
&\le \sqrt{\sum_{a=1}^{d_A} \Big(\sqrt{\sum_{b=1}^{d_B}\Big|\beta \la a|\la b|M_k|\rho \ra + \la a|\la b|M_k|x\ra \Big|^2}-\sqrt{\sum_{b=1}^{d_B}\Big|\beta \la a|\la b|M_k|\rho \ra + \la a|\la b|M_k|y \ra\Big|^2}\Big)^2}\\
&\le \sqrt{\sum_{a=1}^{d_A}\sum_{b=1}^{d_B}\Big|\la a|\la b|M_k(|x\ra -|y\ra) \Big|^2} = \||x\ra - |y\ra\|,
\end{align*}
where the inequalities follow from the triangle inequality in $\ell_2$ and the equality from the fact that $M_k$ is unitary and $\{|a\ra|b\ra\colon a\in[d_A], b\in[d_B]\}$ is an orthonormal basis in $\Ha$.

We have
\begin{align*}
&\sqrt{2}|Y_{|\psi\ra} - Y_{|\phi\ra}| = \sqrt{2}\Big|\sqrt{\frac{1}{t}\sum_{k=1}^{t} D_H(p_{U_k|\psi\ra},\unif([d_A]))^2}- \sqrt{\frac{1}{t}\sum_{k=1}^{t} D_H(p_{U_k|\phi\ra},\unif([d_A]))^2}\Big|\\
& = \frac{1}{\sqrt{t}}\Big|\sqrt{\sum_{k=1}^{t} \sum_{a=1}^{d_A} \Big(\sqrt{\sum_{b=1}^{d_B}|\la a|\la b|U_k|\psi\ra|^2} -\frac{1}{\sqrt{d_A}}\Big)^2}- \sqrt{\sum_{k=1}^{t} \sum_{a=1}^{d_A} \Big(\sqrt{\sum_{b=1}^{d_B}|\la a|\la b|U_k|\phi\ra|^2} -\frac{1}{\sqrt{d_A}}\Big)^2}\\
&= \frac{1}{\sqrt{t}}\Big|\sqrt{\sum_{k=1}^{t} f_{1+\alpha,k}(|\chi_k\ra)^2} - \sqrt{\sum_{k=1}^{t}f_{1-\alpha,k}(-|\chi_k\ra)^2}\Big|,
\end{align*}
where in the third inequality we used \eqref{eq:plus-minus} and the definition of the functions $f_{\beta,k}$. Denote by $\p_W, \E_W$ the probability and integration with respect to the variables $W_k$ and recall that $|\chi_k\ra$ are measurable with respect to these variables, whereas the matrices $M_k$ are independent of them.

Note that
\begin{displaymath}
  f_{1+\alpha,k}(|\chi_k\ra)^2 = \sum_{a=1}^{d_A} \Big(\sqrt{
\sum_{b=1}^{d_B}|1+\alpha|^2 \Big|\la a|\la b|M_k|\rho \ra + \la a|\la b|M_k (1+\alpha)^{-1}|\chi_k\ra\Big|^2}-\frac{1}{\sqrt{d_A}}\Big)^2
\end{displaymath}
and
\begin{displaymath}
  f_{1-\alpha,k}(|\chi_k\ra)^2 = \sum_{a=1}^{d_A} \Big(\sqrt{
\sum_{b=1}^{d_B}|1-\alpha|^2 \Big|\la a|\la b|M_k|\rho \ra + \la a|\la b|M_k (\alpha-1)^{-1}|\chi_k\ra\Big|^2}-\frac{1}{\sqrt{d_A}}\Big)^2.
\end{displaymath}

Since $\alpha \in \R\sqrt{-1}$, we have $|1+\alpha| = |1-\alpha|$. Moreover, the distribution of $|\chi_k\ra$ is uniform on $S$, which implies that $(1+\alpha)^{-1}|\chi_k\ra$
and $(\alpha-1)^{-1}|\chi_k\ra$ have the same distribution (uniform on $|1+\alpha|^{-1}S$). Together with the independence of $|\chi_k\ra$, $k=1,\ldots,t$, this implies that conditionally on the random matrices $M_1,\ldots,M_{t}$, the random variables
\begin{displaymath}
  \sqrt{\sum_{k=1}^{t} f_{1+\alpha,k}(|\chi_k\ra)^2} \textrm{  and  }\sqrt{\sum_{k=1}^{t}f_{1-\alpha,k}(-|\chi_k\ra)^2}
\end{displaymath}
have the same distribution.
This allows us to write
\begin{align*}
\sqrt{2}|Y_{|\psi\ra} - Y_{|\phi\ra}| \le& \frac{1}{\sqrt{t}}\Big|\sqrt{\sum_{k=1}^{t} f_{1+\alpha,k}(|\chi_k\ra)^2} - \E_W \sqrt{\sum_{k=1}^{t} f_{1+\alpha,k}(|\chi_k\ra)^2}\Big|\\
&+ \frac{1}{\sqrt{t}}\Big|\sqrt{\sum_{k=1}^{t} f_{1-\alpha,k}(|-\chi_k\ra)^2} - \E_W \sqrt{\sum_{k=1}^{t} f_{1-\alpha,k}(|-\chi_k\ra)^2}\Big|.
\end{align*}
As a consequence
\begin{align*}
\p_W(|Y_{|\psi\ra} - Y_{|\phi\ra}| \ge u) \le 2\p_W\Big(\frac{1}{\sqrt{t}}\Big|\sqrt{\sum_{k=1}^{t} f_{1+\alpha,k}(|\chi_k\ra)^2} - \E_W \sqrt{\sum_{k=1}^{t} f_{1+\alpha,k}(|\chi_k\ra)^2}\Big| \ge \frac{u\sqrt{2}}{2}\Big)
\end{align*}

By the 1-Lipschitz property of $f_{1+\alpha,k}$ and the triangle inequality in $\ell_2$ it follows that the function
\begin{displaymath}
(|\chi_1\ra,\ldots,|\chi_{t}\ra) \mapsto \frac{1}{\sqrt{t}}\sqrt{\sum_{k=1}^{t} f_{1+\alpha,k}(|\chi_k\ra)^2}
\end{displaymath}
is $\frac{1}{\sqrt{t}}$-Lipschitz on $\prod_{k=1}^{t}S$. Therefore, by the concentration inequality on the product of unit spheres, Theorem \ref{thm:concentration-sphere}, (note that $S$ is of real dimension $n=2(d-1)-1 = 2d-3$), we get
\begin{displaymath}
\p_W(|Y_{|\psi\ra} - Y_{|\phi\ra}| \ge u) \le 4\exp\Big(-\frac{(2d-3)tu^2 }{4\|Q|\chi\ra\|^2}\Big)\le 2\exp\Big(-\frac{dtu^2}{\Cc\||\psi\ra-|\phi\ra\|^2}\Big),
\end{displaymath}
where in the last inequality we used the fact that $Q$ is a contraction, the definition of $|\chi\ra$ and we have adjusted the constants.
Integrating the above inequality with respect to the random matrices $M_k$, together with the Fubini theorem proves \eqref{eq:main-tail}. The inequality \eqref{eq:main-expectation} follows from \eqref{eq:main-tail} by Theorem \ref{thm:Majorizing-Measures}.
\end{proof}

\subsubsection{Uncertainty over a subset of states}

We would like to point out that Theorem \ref{th:main} can be used to quantify uncertainty uniformly over proper subsets of the set of all pure states. Basically the same proof as in the implication [Theorem \ref{th:main} $\implies$ Theorem \ref{th:fur}] gives us the following result, which will be used in Section \ref{sec:further-applications}.

\begin{theorem}\label{thm:arbitrary-set}
If $\Lambda \subseteq S_\Ha$ and $d_B \ge \frac{16}{\varepsilon}$ then with probability at least $1 - 2\exp(-\varepsilon dt/\Cb)$ for all $|\psi\ra \in \Lambda$,
\begin{displaymath}
\sqrt{\frac{1}{t}\sum_{k=1}^t D_H(p_{U_k|\psi\ra}^A,\unif([d_A]))^2} \le 3\sqrt{\varepsilon}/4 + \Cd\frac{\E\sup_{|\psi\ra \in \Lambda}  G_{|\psi\ra}}{\sqrt{td}}.
\end{displaymath}
\end{theorem}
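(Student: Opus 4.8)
The plan is to run, essentially word for word, the derivation of Theorem~\ref{th:fur} from Theorem~\ref{th:main}, only now keeping the index set $\Lambda$ general instead of specializing to $\Lambda=S_\Ha$. Write $Y_{|\psi\ra}$ as in Theorem~\ref{th:main}. We may assume $\Lambda\neq\emptyset$, and, since $|\psi\ra\mapsto Y_{|\psi\ra}$ is continuous and the paths of $(G_{|\psi\ra})$ are continuous, we may pass to a countable dense subset of $\Lambda$ whenever a measurability scruple about a supremum arises; this changes none of the quantities below.

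First I would bound the expected supremum. Fix any $|\rho\ra\in\Lambda$. The triangle inequality gives $\sup_{|\psi\ra\in\Lambda}Y_{|\psi\ra}\le Y_{|\rho\ra}+\sup_{|\psi\ra,|\phi\ra\in\Lambda}|Y_{|\psi\ra}-Y_{|\phi\ra}|$, so taking expectations, using $\E Y_{|\rho\ra}=R$ together with \eqref{eq:main-expectation}, and then $R\le 1/\sqrt{d_B}\le\sqrt{\varepsilon}/4$ from \eqref{eq:M-estimation} and the hypothesis $d_B\ge 16/\varepsilon$, yields
\begin{displaymath}
\E\sup_{|\psi\ra\in\Lambda}Y_{|\psi\ra}\le R+\Cd\frac{\E\sup_{|\psi\ra\in\Lambda}G_{|\psi\ra}}{\sqrt{td}}\le\frac{\sqrt{\varepsilon}}{4}+\Cd\frac{\E\sup_{|\psi\ra\in\Lambda}G_{|\psi\ra}}{\sqrt{td}}.
\end{displaymath}

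Next I would add concentration. The map $f(U_1,\ldots,U_t)=\sup_{|\psi\ra\in\Lambda}Y_{|\psi\ra}$ is a supremum of exactly the same single-state functions whose Lipschitz constant is controlled in the proof of Lemma~\ref{le:Lipschitz-first}; that argument never uses that the supremum runs over all of $S_\Ha$, so $f$ is again $1/\sqrt{2t}$-Lipschitz with respect to the $\ell_2$-sum of Hilbert--Schmidt distances. Applying Lemma~\ref{le:concentration} to the $1$-Lipschitz function $\sqrt{2t}\,f$ and rescaling gives $\p(f-\E f\ge r)\le 2\exp(-2dtr^2/\Ce)$ for all $r\ge 0$; taking $r=\sqrt{\varepsilon}/2$ gives $\p(f\ge\E f+\sqrt{\varepsilon}/2)\le 2\exp(-dt\varepsilon/(2\Ce))=2\exp(-\varepsilon dt/\Cb)$ with the same $\Cb=2\Ce$ as in Theorem~\ref{th:fur}. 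On the complementary event, the bound on $\E f$ from the previous step gives $\sup_{|\psi\ra\in\Lambda}Y_{|\psi\ra}\le\E f+\sqrt{\varepsilon}/2\le 3\sqrt{\varepsilon}/4+\Cd\,\E\sup_{|\psi\ra\in\Lambda}G_{|\psi\ra}/\sqrt{td}$ simultaneously for all $|\psi\ra\in\Lambda$, which is the claim.

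There is no genuinely hard step here: everything needed has already been carried out for $\Lambda=S_\Ha$, so the work is only in observing that Theorem~\ref{th:main} and Lemma~\ref{le:Lipschitz-first} are both stated/proved with an arbitrary index set in mind (the Gaussian comparison \eqref{eq:main-expectation} is already formulated for general $\Lambda$, and the Lipschitz bound is insensitive to the index set of the supremum), and then in the constant bookkeeping so that the error splits as $\sqrt{\varepsilon}/4$ (from $R$, via $d_B\ge16/\varepsilon$) plus $\sqrt{\varepsilon}/2$ (from concentration) to give the stated $3\sqrt{\varepsilon}/4$, while the tail exponent reproduces the $\Cb$ of Theorem~\ref{th:fur}. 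If anything deserves care it is the measurability remark above and the fact that no lower bound on $t$ is needed, consistent with the absence of such a hypothesis in the statement.
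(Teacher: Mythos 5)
Your proposal is correct and is exactly the argument the paper intends: it explicitly states that Theorem~\ref{thm:arbitrary-set} follows by "basically the same proof" as the implication [Theorem~\ref{th:main} $\Rightarrow$ Theorem~\ref{th:fur}], and your constant bookkeeping ($R\le 1/\sqrt{d_B}\le\sqrt{\varepsilon}/4$ plus a $\sqrt{\varepsilon}/2$ concentration term, with $\Cb=2\Ce$) reproduces the stated bound and tail probability. The observations that \eqref{eq:main-expectation} is already formulated for general $\Lambda$ and that the Lipschitz bound of Lemma~\ref{le:Lipschitz-first} is insensitive to the index set of the supremum are precisely the points that make the transfer work.
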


\begin{example}\label{ex:separable}
For instance if $\Lambda = S_{\Ha_A}\otimes S_{\Ha_B}$ is the set of all separable states then, as one can easily see, the quantity $2^{-1/2}\E\sup_{|\psi\ra \in \Lambda}  G_{|\psi\ra}$ is the expected operator norm of a random $d_A \times d_B$ matrix with i.i.d. standard complex Gaussian coefficients, which, as is well known (see e.g. \cite{MR520217,MR2760897}) is of the order $\sqrt{d_A} + \sqrt{d_B}$. Thus
\begin{displaymath}
  \Cd\frac{\E\sup_{|\psi\ra \in \Lambda}  G_{|\psi\ra}}{\sqrt{td}} \le \Co\Big(\frac{1}{\sqrt{td_A}} + \frac{1}{\sqrt{td_B}}\Big).
\end{displaymath}
In particular, if $d_A,d_B \ge 64 \Co^2/\varepsilon^2$, by setting $t = 1$ in the above theorem (applied with $\varepsilon^2$ instead of $\varepsilon$), we obtain that with probability at least $1 - 2\exp(-\varepsilon^2 d/\Cb)$,
\begin{align}\label{eq:uncertainty-separable-single-t}
D_H(p_{U_1|\psi\ra}^A,\unif([d_A])) \le \varepsilon.
\end{align}
Thus to provide uniform uncertainty relations for the set of separable states, it is enough to use a single unitary matrix. In general, the order of magnitude of $t$ required for providing an uncertainty relation uniformly over some set $\Lambda$ of states is governed by the Gaussian magnitude of this set, which resembles phenomena known from the theory of dimension reduction in asymptotic convex geometry or compressed sensing (see e.g. \cite{MR2149924,MR2199631,MR2373017}).
\end{example}

\subsection{Optimality. Lower bounds for metric uncertainty relations \label{sec:optimality}}

 We will now address the question of necessary conditions for the existence of transformations $U_1,\ldots,U_t$, satisfying $\varepsilon$-metric uncertainty principles \eqref{eq:mur} of Fawzi-Hayden-Sen. We will show that they cannot exist unless $t, d_B \gtrsim C/\varepsilon^2$. Since $\varepsilon$-metric uncertainty relations are weaker than $\varepsilon^2/2$-fidelity uncertainty relations, this shows in particular that Theorem \ref{th:fur} is optimal up to the value of the universal constant $\Ca$. At the same time it shows that random unitary matrices satisfy the stronger fidelity uncertainty relations at roughly the same range of parameters as the weaker metric uncertainty relations.

\begin{theorem}
\label{thm:lower_bound}
If $d_A>1$ and $U_1,\ldots,U_{t}$ are unitary transformations of $\Ha$, satisfying the $\varepsilon$-metric uncertainty relation \eqref{eq:mur} on $\Ha_A$, then $d_B,t \ge \frac{1}{\Cg\varepsilon^2}$, where $\Cg$ is an absolute constant.
\end{theorem}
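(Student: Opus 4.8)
The plan is to prove the two inequalities $t\ge\frac1{\Cg\varepsilon^2}$ and $d_B\ge\frac1{\Cg\varepsilon^2}$ by two independent arguments, both resting on the elementary identity
\[
D_{TV}(p^A_{U_k|\psi\ra},\unif([d_A]))=\frac12\sum_{a=1}^{d_A}\bigl|\la\psi|B_{k,a}|\psi\ra\bigr|,\qquad B_{k,a}:=U_k^\dagger(|a\ra\la a|^A\otimes\id_{\Ha_B})U_k-\tfrac1{d_A}\id_\Ha .
\]
Each $B_{k,a}$ is Hermitian and traceless, and since $U_k^\dagger(|a\ra\la a|^A\otimes\id_{\Ha_B})U_k$ is an orthogonal projection of rank $d_B$, a direct computation gives $\tr(B_{k,a}^2)=d_B-\tfrac{d_B}{d_A}=d_B\tfrac{d_A-1}{d_A}$.

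For the bound on $t$ I would produce a single good test state by a Rademacher averaging (random sign) argument. For any signs $s_{k,a}\in\{-1,1\}$ and any $|\psi\ra\in S_\Ha$ the triangle inequality gives $\sum_{k,a}|\la\psi|B_{k,a}|\psi\ra|\ge|\la\psi|M_s|\psi\ra|$ where $M_s:=\sum_{k,a}s_{k,a}B_{k,a}$ is Hermitian, so, since the numerical radius of a Hermitian operator equals its operator norm, $\sup_{|\psi\ra\in S_\Ha}\sum_{k,a}|\la\psi|B_{k,a}|\psi\ra|\ge\|M_s\|$ for every choice of $s$. It remains to pick $s$ with $\|M_s\|$ large: taking the $s_{k,a}$ independent and uniform in $\{-1,1\}$ and using $\|M\|^2\ge\tfrac1d\tr(M^2)$ for Hermitian $M$ on $\Ha$, one gets $\E_s\|M_s\|^2\ge\tfrac1d\E_s\tr(M_s^2)=\tfrac1d\sum_{k,a}\tr(B_{k,a}^2)=\tfrac1d\,t d_A\cdot d_B\tfrac{d_A-1}{d_A}=t\tfrac{d_A-1}{d_A}\ge\tfrac t2$, where $d_A\ge2$ is used. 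Hence some sign pattern gives $\|M_s\|\ge\sqrt{t/2}$, and plugging the corresponding top eigenvector into \eqref{eq:mur} yields $t\varepsilon\ge\sup_{|\psi\ra}\sum_k D_{TV}(p^A_{U_k|\psi\ra},\unif([d_A]))=\tfrac12\sup_{|\psi\ra}\sum_{k,a}|\la\psi|B_{k,a}|\psi\ra|\ge\tfrac12\sqrt{t/2}$, i.e. $t\ge\frac1{8\varepsilon^2}$.

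For the bound on $d_B$ I would instead average the relation \eqref{eq:mur} over a Haar-random $|\psi\ra\in S_\Ha$. Since $U_k|\psi\ra$ is then again Haar-distributed, $\E_{|\psi\ra}D_{TV}(p^A_{U_k|\psi\ra},\unif([d_A]))$ does not depend on $k$ (nor on $U_k$) and equals $\mu:=\E_{|\psi\ra}D_{TV}(p^A_{|\psi\ra},\unif([d_A]))$; averaging \eqref{eq:mur} therefore gives $\varepsilon\ge\mu$. By symmetry of the computational basis, $\mu=\tfrac{d_A}2\,\E_{|\psi\ra}\bigl|Y-\tfrac1{d_A}\bigr|$ where $Y:=\la\psi|(|1\ra\la1|^A\otimes\id_{\Ha_B})|\psi\ra$. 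For Haar $|\psi\ra$ on $S_\Ha\cong S_{\C}^{d-1}$ the variable $Y$ has the $\mathrm{Beta}(d_B,d_B(d_A-1))$ distribution; as $d_A\ge2$ and $d_B\ge1$ both shape parameters are $\ge1$, so $Y$ is log-concave, and the mean absolute deviation of a log-concave variable is at least a universal constant times its standard deviation. Since $\Var Y=\tfrac{d_A-1}{d_A^2(d_Ad_B+1)}\ge\tfrac1{2d_A^2(d_B+1)}$, this gives $\mu\ge c\,\tfrac{d_A}2\sqrt{\Var Y}\ge c'/\sqrt{d_B}$ for universal $c,c'$, hence $\varepsilon\ge c'/\sqrt{d_B}$ and $d_B\ge\frac1{\Cg\varepsilon^2}$ after adjusting $\Cg$.

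The only point needing care is the comparison $\E|Y-\E Y|\gtrsim\sqrt{\Var Y}$ in the $d_B$ bound: one can either invoke the standard equivalence of $L^1$ and $L^2$ norms for centred log-concave random variables, or prove it directly from the explicit Beta moments $\E Y^k=(d_B)_k/(d)_k$ by checking that the fourth central moment is $O\bigl((\Var Y)^2\bigr)$ and applying $\E|Y-\E Y|\ge(\Var Y)^{3/2}/\bigl(\E(Y-\E Y)^4\bigr)^{1/2}$. The rest is routine; in particular the $t$ bound is pure linear algebra, the one subtlety being that the absolute values inside the double sum over $(k,a)$ are exactly what make the random-sign trick available — a plain second-moment argument over a Haar-random state only delivers the weaker estimate $t\,d_A d_B\gtrsim\varepsilon^{-2}$.
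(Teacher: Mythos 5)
Your proposal is correct and follows essentially the same two-part strategy as the paper: the $d_B$ bound via a Haar-random test state, log-concavity of the resulting Beta/simplex distribution, Borell's $L^1$--$L^2$ comparison and the explicit variance computation; and the $t$ bound via random signs applied to the traceless projector differences, reducing the supremum over states to an operator norm and lower-bounding it by the Hilbert--Schmidt norm. The only (cosmetic) deviation is that you bound $\E_s\|M_s\|^2\ge\frac{1}{d}\E_s\tr(M_s^2)$ directly, which sidesteps the Khintchine--Kahane inequality the paper invokes for $\E_s\|M_s\|_{HS}$ and yields the same conclusion.
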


\begin{remark}
We would like to stress that in the above theorem we do not impose any random model on the matrices $U_i$, they can be arbitrary deterministic matrices.
\end{remark}

Before we prove Theorem \ref{thm:lower_bound}, let us recall that a measure $\mu$ on $\R^d$ is called log-concave if for all nonempty compact subsets $A,B\subseteq \R^d$ and all $\lambda \in (0,1)$,
\begin{displaymath}
  \mu(\lambda A + (1-\lambda)B) \ge \mu(A)^\lambda \mu(B)^{1-\lambda}.
\end{displaymath}
We refer the Reader to \cite[Chapter 2]{MR3185453} for a comprehensive description of this important class of measures. In our argument we will need their two properties. The first one is an easy consequence of the Pr\'ekopa-Leindler inequality (see e.g. \cite[Theorem 1.2.3]{MR3185453}) and asserts that uniform distributions on convex sets are log-concave. The second one is a special case of moment comparison for log-concave distributions due to Borell (see e.g. \cite[Appendix III]{MR856576}).

\begin{lemma}\label{le:Borell}
There exists a universal constant $\Ch$ such that for any real random variable $X$ with a log-concave distribution,
\begin{displaymath}
(\E |X|^2)^{1/2} \le \Ch \E|X|.
\end{displaymath}
\end{lemma}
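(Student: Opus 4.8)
The plan is to reduce the statement to a one-dimensional moment comparison for log-concave densities, for which the classical tool is Borell's lemma (equivalently, the reverse Hölder inequalities for log-concave measures). First I would observe that if $X$ has a log-concave distribution on $\R$, then so does $|X|$ in the following weak sense that actually matters here: we only need control of the moments $\E|X|^p$ as $p$ varies, and the function $p \mapsto (\E|X|^p)^{1/p}$ is known to be ``almost constant'' for log-concave $X$. Concretely, the key classical fact (Borell, see \cite[Appendix III]{MR856576}, or \cite[Chapter 2]{MR3185453}) is that there is a universal constant such that for every $p \ge q \ge 1$,
\begin{displaymath}
(\E|X|^p)^{1/p} \le \frac{C p}{q}(\E|X|^q)^{1/q}
\end{displaymath}
whenever $X$ has a log-concave distribution on $\R$ (after centering issues are handled). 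Specializing to $p = 2$, $q = 1$ gives exactly $(\E|X|^2)^{1/2} \le \Ch\,\E|X|$ with $\Ch$ a numerical constant, which is the claim.

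The main technical point to get right is that we cannot assume $X$ is symmetric or centered, so we must check that $|X|$ still inherits enough structure. One clean route is: the distribution of $X$ has a log-concave density $g$ on $\R$ (if $X$ is a.s. constant the inequality is trivial), and then one shows directly that the reverse Hölder inequality $(\E|X|^2)^{1/2} \le \Ch \,\E|X|$ holds. A standard way to do this without fuss is to invoke the Khinchine-type inequality for log-concave random variables in the form stated, e.g., in \cite[Theorem 2.4.6]{MR3185453} or the Berwald / Borell inequalities; these are stated for arbitrary (not necessarily centered) log-concave real random variables precisely because $\E|X|^p$ is a norm-like quantity insensitive to a bounded shift relative to the scale $\E|X|$. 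Thus I would simply cite Borell's moment comparison in the nonsymmetric form and read off the case $p=2,q=1$.

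The step I expect to be the only real obstacle is bookkeeping the hypotheses of the cited inequality: some references state Borell's lemma for log-concave measures on $\R^n$ and deduce the one-dimensional moment comparison as a corollary, while others state it directly for real log-concave variables but assume symmetry or that the origin is an interior point of the support. I would therefore first record the precise version I use — the existence of a universal $C$ with $(\E|X|^2)^{1/2} \le C\, \E|X|$ for all real log-concave $X$ — attributing it to Borell and pointing to \cite[Appendix III]{MR856576} and \cite[Chapter 2]{MR3185453}, and then the proof of the lemma is a one-line specialization. No calculation beyond this citation is needed, since the lemma as stated is exactly the $p=2$, $q=1$ instance of the classical result.
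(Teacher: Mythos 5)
Your proposal is correct and matches the paper's treatment: the paper gives no proof beyond observing that the lemma is a special case of Borell's moment comparison for log-concave distributions, citing \cite[Appendix III]{MR856576}, exactly as you do with $p=2$, $q=1$. Your worry about centering is unnecessary (Borell's lemma applies to an arbitrary log-concave measure on $\R^n$ paired with any seminorm, here $n=1$ and the seminorm $x\mapsto|x|$), but it does not affect the argument.
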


\begin{proof}[Proof of Theorem \ref{thm:lower_bound}]
Assume that $U_1,\ldots,U_{t}$ satisfy \eqref{eq:mur}. To show lower bounds on $t$ and $d_B$ we will use the probabilistic method. Let $|\psi\ra$ be a random state, distributed uniformly on the unit sphere in $\Ha$, so that $\la a|^A\la b|^B U_k|\psi\ra$ are distributed uniformly on the unit sphere in $\C^d$. Denote $X^{(k)}_{a,b} = |\la a|^A\la b|^BU_k|\psi\ra|^2$. It is well known that the random vector $(X^{(k)}_{a,b})_{a \in [d_A],b\in[d_B]}$ is distributed uniformly on the simplex $\Delta_{d-1} = \{x = (x_1,\ldots,x_d) \colon \sum_{i=1}^d x_i = 1, x_i \ge 0\}$.

The uniform distribution on $\Delta_{d-1}$ is log-concave. Since, as one can easily see, linear images of log-concave measures are log-concave, Lemma \ref{le:Borell} implies that
\begin{displaymath}
\E |\sum_{b=1}^{d_B} X^{(k)}_{a,b} - \frac{1}{d_A}| \ge \frac{1}{\Ch}(\E |\sum_{b=1}^{d_B} X^{(k)}_{a,b} - \frac{1}{d_A}|^2)^{1/2}.
\end{displaymath}
Using \eqref{eq:mur} we thus obtain
\begin{align}\label{eq:expected-distance-lower-bound}
2\varepsilon \ge 2\E \frac{1}{t}\sum_{k=1}^{t} D_{TV}(p^A_{U_k|\psi\ra},\unif([d_A])) = d_A \E |\sum_{b=1}^{d_B} X^{(1)}_{a,b} - \frac{1}{d_A}| \ge \Ch^{-1}d_A\sqrt{\E |\sum_{b=1}^{d_B} X^{(1)}_{a,b} - \frac{1}{d_A}|^2}.
\end{align}

A direct calculation shows that
\begin{displaymath}
\E X^{(1)}_{a,b} = \frac{1}{d}, \; \Var(X^{(1)}_{a,b}) = \frac{d-1}{d^2(d+1)}
\end{displaymath}
and for $ (a,b) \neq (a',b')$,
\begin{displaymath}
\Cov(X_{a,b}^{(1)},X_{a',b'}^{(1)}) = -\frac{1}{d^2(d+1)}.
\end{displaymath}
Therefore, the right hand side of \eqref{eq:expected-distance-lower-bound} equals (recall that $d=d_Ad_B$)
\begin{displaymath}
d_A \Ch^{-1} \sqrt{d_B \frac{d-1}{d^2(d+1)} - d_B (d_B-1)\frac{1}{d^2(d+1)}} = \Ch^{-1}\sqrt{\frac{(d_A-1)}{d_Ad_B+1}}\ge \frac{1}{2\Ch\sqrt{d_B}},
\end{displaymath}
which gives $d_B \ge \frac{1}{16\Ch^2\varepsilon^2}$ (we used here that $d_A\ge 2$) and proves the desired lower bound on $d_B$.

Let us now pass to the lower bound on $t$. Define the projection operators
\begin{displaymath}
A_{ka} = \sum_{b=1}^{d_B} U_k^\dag(|a\ra|b\ra\la a|\la b|)U_k.
\end{displaymath}
We have
\begin{align*}
\frac{2}{t}\sup_{|\psi\ra\in S_\Ha} \sum_{k=1}^{t} D_{TV}(p^A_{U_k|\psi\ra},\unif([d_A]))
&= \frac{1}{t}\sup_{|\psi\ra\in S_\Ha}\sup_{\alpha \in \{-1,1\}^{[t]\times[d_A]}} \sum_{k=1}^{t}\sum_{a=1}^{d_A}\alpha_{ka} \la \psi| (A_{ka}- \frac{1}{d_A} \id_\Ha)|\psi\ra\\
&= \frac{1}{t}\sup_{\alpha \in \{-1,1\}^{[t]\times[d_A]}} \Big\|\sum_{k=1}^{t}\sum_{a=1}^{d_A} \alpha_{ka} (A_{ka}- \frac{1}{d_A} \id_\Ha)\Big\|\\
& \ge \frac{1}{t\sqrt{d}}\sup_{\alpha \in \{-1,1\}^{[t]\times[d_A]}} \Big\|\sum_{k=1}^{t}\sum_{a=1}^{d_A} \alpha_{ka}  (A_{ka}- \frac{1}{d_A} \id_\Ha)\Big\|_{HS}.
\end{align*}

Let now $\varepsilon_{ka}$, $k=1,\ldots,t$, $a = 1,\ldots,d_A$ be i.i.d. Rademacher random variables, i.e. $\p(\varepsilon_{ka} = 1) = \p(\varepsilon_{ka} = -1) = 1/2$. From the above inequality we readily obtain
\begin{align}\label{eq:intermediate}
\frac{2}{t}\sup_{|\psi\ra\in S_\Ha} \sum_{k=1}^{t} D_{TV}(p^A_{U_k|\psi\ra},\unif([d_A])) \ge \frac{1}{t\sqrt{d}}\E \Big\|\sum_{k=1}^{t}\sum_{a=1}^{d_A} \varepsilon_{ka}(A_{ka}- \frac{1}{d_A} \id_\Ha)\Big\|_{HS}.
\end{align}

By the classical Khintchine-Kahane inequality (see e.g. \cite{MR1102015}, see also \cite{MR1267715} for the optimal constant), the right hand side above is bounded from below by
\begin{align*}
\frac{1}{t\sqrt{2d}}\sqrt{\sum_{k=1}^{t}\sum_{a=1}^{d_A} \|A_{ka}- \frac{1}{d_A} \id\|_{HS}^2} = \frac{1}{t\sqrt{2d_Ad_B}}\sqrt{td_A\Big((d_B(1-\frac{1}{d_A})^2 + (d_Ad_B - d_B)\frac{1}{d_A^2}\Big)},
\end{align*}
with the equality following from the fact that $A_{ka}$ are orthogonal projections on subspaces of dimension $d_B$. Using the assumption $d_A \ge 2$, one can easily see that the last expression is bounded from below by $\frac{1}{2\sqrt{2t}}$. Since by the assumption \eqref{eq:mur} and \eqref{eq:intermediate} it is also bounded from above by $2\varepsilon$, we get
\begin{displaymath}
t \ge \frac{1}{32\varepsilon^2},
\end{displaymath}
which ends the proof of Theorem \ref{thm:lower_bound}.
\end{proof}

\subsection{Information locking \label{sec:locking}}
In this section we will discuss applications of Theorem \ref{th:fur} to locking of classical information in quantum states. The locking phenomenon was first described in \cite{PhysRevLett.92.067902} and was expressed there in terms of mutual information between the outcomes of measurements performed on a bipartite quantum state. Later in \cite{MR2932028} the definition of locking was strengthened to a form involving the total variation distance. Before we state a simple generalization of the definition from \cite{MR2932028} let us provide a brief informal description of the locking phenomenon. In short, information locking occurs when given a large bipartite state shared between two parties (say Alice and Bob), Bob cannot obtain almost any information about the Alice part of the state by performing a local measurement on his part, however when sent a small number of classical bits by Alice he obtains full information about her part. One says then that a certain amount of information was locked in the bipartite quantum state shared by Alice and Bob and was unlocked with just a small number of bits.

The strength of locking is described by quantitative relations between the size of the system, information Bob could obtain initially and the number of bits that need to be sent in order to provide Bob with full information. Clearly, there is a lot of flexibility in choosing the characteristics that measure the size of the system  and initial information/uncertainty of Bob, which leads to various formalizations of the locking phenomenon. Below we state an abstract definition which encompasses the one from \cite{MR2932028}. It will also allow us to express locking in terms of the Hellinger distance \eqref{eq:Hellinger-distance}, which will lead to a strengthening of results from \cite{MR2932028}.

We remark that over the years following its introduction information locking has found multiple applications in quantum information theory, quantum cryptography and even in physics of black holes. Describing them is beyond the scope of this paper, therefore we refer the Readers to the articles \cite{PhysRevLett.92.067902,MR2932028,Locking-decoding,1742-6596-143-1-012008}.

In what follows we will deal with two kinds of probabilities. The probability related to the random constructions involving random unitary matrices (distributed as before according to the Haar measure on the unitary group), and probabilities related to intrinsic randomness in the results of quantum measurements. To avoid confusion, we will denote the probabilities related to Haar unitary matrices by $\p$ and probabilities related to the quantum theory by $\pp$.

Recall that for a probability vector $p = (p(1),\ldots,p(N))$ the min entropy $H_{\min}(p)$ is defined as $H_{\min}(p) = - \log_2 (\max_{i\le N} p(i))$.
Recall also that by $\States(\Ha)$ we denote the set of all states on the Hilbert space $\Ha$.

\begin{defi}\label{def:metric-locking} Let $n$ be a positive integer and let $\mathcal{P}$ be the set of all probability distributions on $[2^n]$. Let further $\rho\colon \mathcal{P}\times\mathcal{P} \to [0,\infty)$ be an arbitrary measure of proximity between probability distributions, such that $\rho(p,q) = 0$ if and only if $p=q$. Recall that by $\|\cdot\|_1$ we denote the trace norm on the space of operators acting on $\Ha$.

Consider any $l \in [n]$, and $\varepsilon \ge 0$. An encoding $\mathcal{E} \colon [2^n]\times [t] \to \States(\Ha)$ is said to be $(\rho,l,\varepsilon)$-locking for the quantum system $\Ha$ if the following two conditions are satisfied:
\begin{itemize}
\item[a)] For all $x\neq x' \in [2^n]$ and all $k\in[t]$,
\begin{align}\label{eq:identification}
\frac{1}{2}\|\mathcal{E}(x,k) - \mathcal{E}(x',k))\|_1 = 1.
\end{align}
\item[b)] Let $X$ (the message) be a random variable on $[2^n]$ with min-entropy $H_{\min}(X) \ge l$ and $K$ (the key) be an independent uniform random variable on $[t]$. For any measurement $\{M_i\}$ on $\Ha$ and any outcome $i$ (by $I$ we denote the corresponding random variable),
    \begin{align}\label{eq:locking-rho}
    \rho(\pp(X \in \cdot|I = i), \pp(X\in \cdot)) \le \varepsilon.
    \end{align}
\end{itemize}
If $l = n$ (i.e. if $X$ is distributed uniformly on $[2^n]$) we speak simply about $\varepsilon$-locking.
\end{defi}

In particular if $\rho = D_{TV}$ we will speak about total variation locking and when $\rho = D_H$ about Hellinger locking. It follows from \eqref{eq:distance-comparison} that if $\mathcal{E}$ is $(D_{H},l,\varepsilon)$-locking  then it is $(D_{TV},l,\sqrt{2}\varepsilon)$-locking

The intuitive meaning of the above definition is clear. Specializing to the case of uniform messages, if one uses $\log_2 t$ random bits independent of the original message to create the key $K$ and encodes the message $X$ together with $K$ in $\mathcal{E}(X,K)$, then no measurement on the system in state $\mathcal{E}(X,K)$ can reveal any significant information about $X$ (as the posterior distribution $\pp(X \in \cdot|I = i)$ is close to uniform). On the other hand, knowing that $K = k$ allows to identify the message $X$, since thanks to \eqref{eq:identification} the states $\mathcal{E}(x,k)$, $x\in[2^n]$, have pairwise orthogonal supports.

The following theorem concerning the relation between metric uncertainty principles \eqref{eq:mur} and locking in total variation distance has been proved in \cite{MR2932028}.

\begin{theorem}[Fawzi-Hayden-Sen, \cite{MR2932028}]\label{thm:metric-locking}
Let $\varepsilon \in (0,1)$ and let $U_1,\ldots,U_{t}$ be unitary transformations of $\Ha$, which satisfy the $\varepsilon$-metric uncertainty principle \eqref{eq:mur} on $\Ha_A$. Assume that $d_A = 2^n$ and define $\mathcal{E}\colon [2^n]\times [t] \to \States(\Ha)$ as
\begin{align}\label{eq:locking-scheme}
\mathcal{E}(x,k) = \frac{1}{d_B}\sum_{b=1}^{d_B}U_k^\dag(|x\ra^A\la x|\otimes |b\ra^B\la b|)U_k.
\end{align}
Then $\mathcal{E}$ is $(D_{TV},\varepsilon)$-locking. Moreover for all $l\in [0,n]$ such that $2^{l-n} > \varepsilon$, it is $(D_{TV},l,\frac{2\varepsilon}{2^{l-n}-\varepsilon})$-locking.
\end{theorem}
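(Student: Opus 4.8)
The plan is to verify the two conditions in Definition~\ref{def:metric-locking} with $\rho = D_{TV}$. Condition a) should be checked first and is immediate: since $\mathcal{E}(x,k) = U_k^\dag\big(|x\ra^A\la x|\otimes \tfrac{1}{d_B}\id_{\Ha_B}\big)U_k$, the state $\mathcal{E}(x,k)$ is supported on the subspace $U_k^\dag\big(\SPAN(|x\ra^A)\otimes\Ha_B\big)$, and for $x\ne x'$ the subspaces attached to $x$ and $x'$ are orthogonal because $|x\ra^A\perp|x'\ra^A$ and $U_k$ is unitary; two states with orthogonal supports are at trace distance $1$, which is exactly \eqref{eq:identification}.

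For condition b), fix a measurement $\{M_i\}$ on $\Ha$ and an outcome $i$. The main step is a reduction to rank-one effects: write a spectral decomposition $M_i = \sum_j |\phi_{ij}\ra\la\phi_{ij}|$ with $|\phi_{ij}\ra\ne 0$, so that $\{|\phi_{ij}\ra\la\phi_{ij}|\}_{i,j}$ is again a POVM, and set $|\hat\phi_{ij}\ra = |\phi_{ij}\ra/\||\phi_{ij}\ra\|\in S_\Ha$. Computing the partial trace $\tr_B$ of a rank-one operator together with \eqref{eq:locking-scheme} gives, for all $x$ and $k$,
\begin{displaymath}
\tr\big(|\phi_{ij}\ra\la\phi_{ij}|\,\mathcal{E}(x,k)\big) = \frac{\||\phi_{ij}\ra\|^2}{d_B}\, p^A_{U_k|\hat\phi_{ij}\ra}(x).
\end{displaymath}
Hence, for a message $X$ with law $q$ on $[2^n]=[d_A]$ and an independent uniform key $K$ on $[t]$, Bayes' formula simplifies (the common prefactor cancels in numerator and denominator) to
\begin{displaymath}
\pp(X = x\mid I = i) = \frac{q(x)\,\bar P_i(x)}{\sum_{x'} q(x')\,\bar P_i(x')}, \qquad \bar P_i := \frac{\sum_j \||\phi_{ij}\ra\|^2\,\bar p_{ij}}{\sum_j \||\phi_{ij}\ra\|^2}, \qquad \bar p_{ij} := \frac1t\sum_{k=1}^{t} p^A_{U_k|\hat\phi_{ij}\ra}.
\end{displaymath}
Each $\bar p_{ij}$ is a probability vector on $[d_A]$, so the convex combination $\bar P_i$ is one too, and convexity of $D_{TV}$ together with the $\varepsilon$-metric uncertainty relation \eqref{eq:mur} applied to the unit vectors $|\hat\phi_{ij}\ra$ gives $D_{TV}(\bar P_i,\unif([d_A]))\le\varepsilon$.

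If $X$ is uniform then $q\equiv 2^{-n}$, the Bayes normalization $\sum_{x'}q(x')\bar P_i(x')$ equals $2^{-n}$ independently of $i$, so $\pp(X\in\cdot\mid I=i)=\bar P_i$ and the $\varepsilon$-locking assertion follows from the last display. For general $X$ with $H_{\min}(X)\ge l$, i.e. $\max_x q(x)\le 2^{-l}$, write $\bar P_i = \unif([d_A])+\delta_i$; since $\sum_x\delta_i(x)=0$, the positive and the negative part of $\delta_i$ each sum to $D_{TV}(\bar P_i,\unif([d_A]))\le\varepsilon$. Putting $Z_i:=\sum_{x'}q(x')\bar P_i(x')$ one has $\pp(X=x\mid I=i)-q(x)=q(x)\big(\bar P_i(x)-Z_i\big)/Z_i$ with $\bar P_i(x)-Z_i=\delta_i(x)-\sum_{x'}q(x')\delta_i(x')$, and estimating $Z_i$ and $\sum_x q(x)|\bar P_i(x)-Z_i|$ by bounding the two parts of $\delta_i$ separately and using $\max_x q(x)\le 2^{-l}$ and $d_A=2^n$ yields $Z_i\ge 2^{-n}-2^{-l}\varepsilon$ and $\sum_x q(x)|\bar P_i(x)-Z_i|\le 2^{2-l}\varepsilon$, whence
\begin{displaymath}
D_{TV}\big(\pp(X\in\cdot\mid I=i),\pp(X\in\cdot)\big)=\frac{1}{2Z_i}\sum_x q(x)\,|\bar P_i(x)-Z_i|\le\frac{2^{1-l}\varepsilon}{2^{-n}-2^{-l}\varepsilon}=\frac{2\varepsilon}{2^{l-n}-\varepsilon}.
\end{displaymath}
The hypothesis $2^{l-n}>\varepsilon$ is precisely what guarantees $Z_i>0$, so that the posterior is well defined and the estimate finite.

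The only genuinely delicate step is this reduction to rank-one effects together with the Bayes computation identifying $\pp(X\in\cdot\mid I=i)$ with the reweighting of $q$ by $\bar P_i$; once that identity is available, condition b) is just convexity of the total variation distance, the metric uncertainty relation, and the elementary estimates above exploiting $\max_x q(x)\le 2^{-l}$.
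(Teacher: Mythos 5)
Your proof is correct. Note that the paper itself does not prove this statement (it is quoted from Fawzi--Hayden--Sen), but it does prove the Hellinger analogue, Theorem \ref{th:locking}, by the same skeleton you use: Bayes' rule identifies the posterior as the prior reweighted by the average outcome distribution $\frac{1}{t}\sum_k p^A_{U_k|e_i\ra}$, the uncertainty relation controls its deviation from uniform, and the min-entropy bound $\max_x q(x)\le 2^{-l}$ controls the reweighting. The only genuine (and harmless) deviation is that you bypass the explicit reduction to rank-one effects — the paper's proof of Theorem \ref{th:locking} first refines the POVM and invokes concavity of the square root — by folding the mixture over $j$ into the single probability vector $\bar P_i$, which remains $\varepsilon$-close to uniform by convexity of $D_{TV}$; your positive/negative-part estimate of $\delta_i$ then reproduces the constant $\frac{2\varepsilon}{2^{l-n}-\varepsilon}$ exactly.
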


Our main result concerning locking is the following theorem, which provides the relation between fidelity uncertainty relations \eqref{eq:fur} and locking in Hellinger distance.

\begin{theorem}\label{th:locking}
Let $\varepsilon \in (0,1)$ and let $U_1,\ldots,U_{t}$ be unitary transformations of $\Ha$, which satisfy the $\varepsilon^2$-fidelity uncertainty principle \eqref{eq:fur}. Assume that $d_A = 2^n$ and define $\mathcal{E}\colon [2^n]\times [t] \to \States(\Ha)$ by \eqref{eq:locking-scheme}.
Then $\mathcal{E}$ is $\varepsilon$-locking in Hellinger distance. Moreover for all $l\in [0,n]$ such that $ 2^{l-n} > 2\varepsilon^2$, it is $(D_H,l,\frac{2\varepsilon}{ 2^{(l-n)/2}-\sqrt{2}\varepsilon})$-locking.
\end{theorem}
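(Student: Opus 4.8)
The plan is to mimic the structure of the Fawzi–Hayden–Sen argument (Theorem \ref{thm:metric-locking}) but to carry everything through with the Hellinger distance instead of the total variation distance, exploiting that $D_H$ behaves well under the quantum–classical processing involved. First I would verify condition a) of Definition \ref{def:metric-locking}: for fixed $k$ the states $\mathcal{E}(x,k) = U_k^\dagger(|x\ra^A\la x|\otimes \frac{1}{d_B}\id_{\Ha_B})U_k$ are, for distinct $x$, supported on orthogonal subspaces $U_k^\dagger(\SPAN\{|x\ra^A\}\otimes\Ha_B)$, so $\frac12\|\mathcal{E}(x,k)-\mathcal{E}(x',k)\|_1 = 1$; this is immediate and does not use the uncertainty relation.

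The substance is part b). Fix a message $X$ with $H_{\min}(X)\ge l$ and an independent uniform key $K$ on $[t]$; the state seen by the receiver without the key is $\bar\rho = \sum_x \pp(X=x)\mathcal{E}(x,K)$ averaged over $K$, i.e. $\bar\rho = \frac{1}{t}\sum_k U_k^\dagger(\sigma_X\otimes\frac{1}{d_B}\id_{\Ha_B})U_k$ where $\sigma_X = \sum_x\pp(X=x)|x\ra^A\la x|$. Given a measurement $\{M_i\}$ and outcome $i$, the posterior is $\pp(X=x|I=i) \propto \pp(X=x)\tr(M_i\mathcal{E}(x,K))$ (averaged over $K$), and I want to bound $D_H(\pp(X\in\cdot|I=i),\pp(X\in\cdot))$. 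The key point is a pointwise (in the outcome $i$) estimate: decompose $M_i$ in the eigenbasis of each $U_k^\dagger(\cdot)U_k$, relate $\tr(M_i U_k^\dagger(|x\ra^A\la x|\otimes\frac{1}{d_B}\id)U_k)$ to the vector $p^A_{U_k|\phi\ra}$ for appropriate unit vectors $|\phi\ra$ built from $M_i$, and invoke the $\varepsilon^2$-fidelity uncertainty relation \eqref{eq:fur}, which controls $\frac1t\sum_k D_H(p^A_{U_k|\phi\ra},\unif([d_A]))^2\le\varepsilon^2$ uniformly over $|\phi\ra\in S_\Ha$. Using the triangle inequality for $D_H$, joint convexity / data-processing of the Hellinger distance (or of fidelity) under the stochastic maps involved, and the min-entropy bound $\max_x\pp(X=x)\le 2^{-l}$ together with $d_A = 2^n$ to handle the normalization denominators, one converts the uniform uncertainty bound into the claimed $\frac{2\varepsilon}{2^{(l-n)/2}-\sqrt2\varepsilon}$; in the special case $l=n$ the denominator manipulation is unnecessary and one gets directly $\varepsilon$-locking in Hellinger distance, using that $\unif([2^n])$ is exactly the prior $\pp(X\in\cdot)$ and that $F(p,\unif)\ge 1-\varepsilon^2$ means $D_H(p,\unif)\le\varepsilon$.

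The main obstacle I expect is the passage from the "state-side" uncertainty bound \eqref{eq:fur}, which is phrased for pure states $|\phi\ra$, to the "measurement-side" posterior estimate for an arbitrary POVM element $M_i$ and an arbitrary (possibly mixed, non-uniform) message distribution: one must express $\tr(M_i\mathcal{E}(x,K))$, after averaging over the key, in a form to which the fidelity relation applies, and then control the ratio defining the posterior without losing the quadratic-mean strength of \eqref{eq:fur-Hellinger}. Concretely, writing $M_i = \sum_j \lambda_j |\phi_{ij}\ra\la\phi_{ij}|$ with $\lambda_j\ge 0$, one has $\tr(M_i U_k^\dagger(|x\ra^A\la x|\otimes\frac1{d_B}\id)U_k) = \sum_j\lambda_j p^A_{U_k|\phi_{ij}\ra}(x)$, so the unnormalized posterior weight of $x$ is a convex-type combination of the vectors $p^A_{U_k|\phi_{ij}\ra}$; the fidelity relation says each such vector is Hellinger-close to uniform on average over $k$, and the convexity properties of $D_H$ (via $F$ being concave and the Hellinger affinity being monotone under coarse-grainings) let this closeness survive the combination and the averaging over $K$. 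The bookkeeping with the normalization constant $\pp(I=i) = \sum_x\pp(X=x)\tr(M_i\bar\rho)$ and the factor $d_A\max_x\pp(X=x)\le 2^{n-l}$ that enters when comparing to the uniform posterior is exactly where the hypothesis $2^{l-n}>2\varepsilon^2$ and the form of the denominator $2^{(l-n)/2}-\sqrt2\varepsilon$ come from, paralleling the $2^{l-n}>\varepsilon$ threshold in Theorem \ref{thm:metric-locking} but squared because Hellinger sits at the "square-root" level relative to total variation.
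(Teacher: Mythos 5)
Your proposal is correct and follows essentially the same route as the paper: reduce to rank-one POVM elements via concavity of the square root (equivalently of the fidelity), compute the posterior by Born's and Bayes' rules so that the unnormalized weights become $\frac{1}{t}\sum_k p^A_{U_k|e_i\ra}(x)p(x)$, apply the fidelity uncertainty relation to the unit vectors $|e_i\ra$, and handle the normalization constant via a triangle inequality in $\ell_2$ together with the bound $\max_x \pp(X=x)\le 2^{-l}$, which is exactly where the threshold $2^{l-n}>2\varepsilon^2$ and the denominator $2^{(l-n)/2}-\sqrt{2}\varepsilon$ arise. The only detail worth making explicit when you write it out is the lower bound $\sqrt{\alpha}\ge 2^{-n/2}-\sqrt{2}\varepsilon 2^{-l/2}$ on the normalization, which itself follows from the same uncertainty estimate.
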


In combination with Theorem \ref{th:fur} we immediately obtain the following corollary.

\begin{corollary}\label{cor:locking}
There exist $(D_H,\varepsilon)$-locking schemes encoding an $n$-bit uniform message into at most $n+ 2\log_2(1/\varepsilon) + O(1)$ qubits and using a key of length at most $2\log_2(1/\varepsilon) + O(1)$ bits.
\end{corollary}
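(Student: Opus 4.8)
The plan is to deduce Corollary \ref{cor:locking} directly by combining the existence statement of Theorem \ref{th:fur} with the locking implication of Theorem \ref{th:locking}, optimizing the choice of the auxiliary dimension $d_B$.

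\textbf{Setup and choice of parameters.} Fix $\varepsilon \in (0,1)$ and let $n$ be the length of the uniform message. By Theorem \ref{th:locking} it suffices to produce unitary transformations $U_1,\ldots,U_t$ of $\Ha = \Ha_A\otimes\Ha_B$ with $d_A = 2^n$ that satisfy the $\varepsilon^2$-fidelity uncertainty relation \eqref{eq:fur}; the resulting scheme $\mathcal{E}$ defined by \eqref{eq:locking-scheme} is then $\varepsilon$-locking in Hellinger distance and encodes the message into $\log_2(d_Ad_B) = n + \log_2 d_B$ qubits, using a key of length $\log_2 t$ bits. So the whole task reduces to finding $U_1,\ldots,U_t$ as above with both $\log_2 d_B$ and $\log_2 t$ of order $2\log_2(1/\varepsilon) + O(1)$.

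\textbf{Invoking Theorem \ref{th:fur} with $\varepsilon^2$ in place of $\varepsilon$.} Apply Theorem \ref{th:fur} with the parameter $\varepsilon^2$: there are absolute constants $\Ca,\Cb$ such that whenever $d_B \ge \Ca/\varepsilon^2$ and $t \ge \Ca/\varepsilon^2$, i.i.d.\ Haar-random unitaries $U_1,\ldots,U_t$ satisfy the $\varepsilon^2$-fidelity uncertainty relation with probability at least $1 - 2e^{-\varepsilon^2 dt/\Cb} > 0$. Hence such transformations \emph{exist}. Now choose $d_B$ to be the smallest power of $2$ that is $\ge \Ca/\varepsilon^2$, so that $d_B \le 2\Ca/\varepsilon^2$ and $\log_2 d_B \le \log_2(1/\varepsilon^2) + \log_2(2\Ca) = 2\log_2(1/\varepsilon) + O(1)$; similarly take $t = \lceil \Ca/\varepsilon^2\rceil$, so $\log_2 t \le 2\log_2(1/\varepsilon) + O(1)$. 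With these choices $\Ha_A$ has dimension $d_A = 2^n$ and $\Ha_B$ has dimension $d_B$, a power of $2$, so $\Ha$ genuinely models a system of $n + \log_2 d_B$ qubits, and $t$ is chosen so that $\log_2 t$ bits suffice to index the key.

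\textbf{Assembling the corollary.} Feeding the transformations $U_1,\ldots,U_t$ produced above into Theorem \ref{th:locking} yields a $(D_H,\varepsilon)$-locking scheme (the uniform-message case, $l=n$) encoding the $n$-bit message into $n + \log_2 d_B = n + 2\log_2(1/\varepsilon) + O(1)$ qubits and using a key of length $\log_2 t = 2\log_2(1/\varepsilon) + O(1)$ bits, which is exactly the assertion. The only points requiring care are bookkeeping ones: ensuring $d_B$ can be taken to be a power of $2$ (needed so that "$\log_2 d_B$ qubits" makes sense, costing only a factor $2$, i.e.\ an $O(1)$ additive term in the log), ensuring $t$ can likewise be taken so that $\log_2 t$ is the key length in bits, and checking that the hypotheses $d_A > 1$ and $2^{l-n} > 2\varepsilon^2$ of Theorem \ref{th:locking} hold in the uniform case $l = n$ (the latter reads $1 > 2\varepsilon^2$, which holds for $\varepsilon$ small enough; for the stated asymptotic corollary this is harmless since we absorb a threshold into the $O(1)$). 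I do not anticipate a genuine obstacle here: the substantive work is in Theorems \ref{th:fur} and \ref{th:locking}, and this corollary is just their composition together with the observation that replacing $\varepsilon$ by $\varepsilon^2$ in Theorem \ref{th:fur} makes $d_B$ and $t$ of order $\varepsilon^{-2}$, i.e.\ contributes $2\log_2(1/\varepsilon)$ to each of the two lengths.
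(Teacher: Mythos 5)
Your proposal is correct and is exactly the paper's route: the corollary is obtained by applying Theorem \ref{th:fur} with $\varepsilon^2$ in place of $\varepsilon$ (so $d_B, t \simeq 1/\varepsilon^2$) and feeding the resulting unitaries into Theorem \ref{th:locking}, with only the bookkeeping you describe. (Minor note: the condition $2^{l-n}>2\varepsilon^2$ is only needed for the second, non-uniform part of Theorem \ref{th:locking}, so you need not worry about it in the uniform case at all.)
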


We postpone the proof of the theorem to Section \ref{sec:proof-locking} to discuss first its optimality and relation with previous results.

In \cite{MR2932028} the Authors by means of Theorems \ref{th:Fawzi-Hayden-Sen} and \ref{thm:metric-locking} show the existence of $(D_{TV},\varepsilon)$-locking schemes encoding $n$-bit uniform messages into $n + 2\log_2(1/\varepsilon) + O(1)$ qubits using a key of $2\log_2(1/\varepsilon) + O(\log\log(1/\varepsilon))$ bits. They also point out that the key length of any $(D_{TV},\varepsilon)$-locking protocol must be at least $\log_2(1/(\varepsilon+2^{-n})) = \log_2(1/\varepsilon) - o_\varepsilon(1)$ as $n\to \infty$. Theorem \ref{th:locking} improves upon the result in \cite{MR2932028} as it considers a stronger proximity measure $D_H$ and eliminates the $\log\log$ part in bound on the the key length. Clearly, in view of the aforementioned lower bound, it would be of interest to eliminate the factor of 2 in the result of \cite{MR2932028}. However it turns out that for locking in Hellinger distance this factor is in fact necessary, which shows that Corollary \ref{cor:locking} provides key length which is optimal up to an absolute additive constant. This is formalized in the following proposition.

\begin{prop}\label{prop:locking-optimality-general} Any $(D_H,\varepsilon)$-locking protocol encoding $n$-bit messages has key length at least $\log_2(\frac{1}{2\varepsilon^2+ 2^{1-n/2}}) \ge 2\log_2(1/\varepsilon) - 1 - o_\varepsilon(1)$ bits.
\end{prop}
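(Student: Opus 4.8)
The plan is to lower-bound the key length of an arbitrary $(D_H,\varepsilon)$-locking protocol by the same probabilistic argument used by Fawzi, Hayden and Sen for the total-variation case, but adapted to the Hellinger metric, carefully tracking constants. We are given a protocol on $[2^n]$ with key alphabet $[t]$ and encoding $\mathcal{E}\colon[2^n]\times[t]\to\States(\Ha)$. We want to show $t$ cannot be too small; equivalently $\log_2 t \ge \log_2\big(\tfrac{1}{2\varepsilon^2+2^{1-n/2}}\big)$, i.e. $t \ge \tfrac{1}{2\varepsilon^2+2^{1-n/2}}$. Then the final numerical bound $\log_2 t \ge 2\log_2(1/\varepsilon)-1-o_\varepsilon(1)$ follows because for fixed $\varepsilon$ and $n\to\infty$ the term $2^{1-n/2}\to 0$, so $2\varepsilon^2+2^{1-n/2}\le 2\varepsilon^2(1+o_\varepsilon(1))$ and $\log_2\tfrac{1}{2\varepsilon^2}=2\log_2(1/\varepsilon)-1$.

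The mechanism is the standard one: fix $X$ uniform on $[2^n]$, $K$ uniform on $[t]$, independent. One chooses a specific measurement designed to \emph{identify the key}. Concretely, since condition a) says that for each fixed $k$ the states $\{\mathcal{E}(x,k)\colon x\in[2^n]\}$ have pairwise orthogonal supports, one can build a measurement whose outcome $I$ is (with the right construction) essentially "$\hat X$", the maximum-likelihood guess of $X$ — more precisely, use the measurement that, given the post-measurement state, outputs the pair $(\hat x,\hat k)$ or just tries to guess $X$. The point is: if $t$ is small, then knowing the quantum system gives a non-trivial chance of guessing $X$, because one can essentially first "guess $k$" (only $t$ possibilities) and then read off $x$ from the orthogonality in a). This forces $\pp(X\in\cdot\mid I=i)$ to be far from uniform $\pp(X\in\cdot)=\unif([2^n])$ for a non-negligible set of outcomes $i$, and then locking condition b) with $\rho=D_H$ caps that distance by $\varepsilon$. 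Quantitatively: build the measurement $\{M_{(x,k)}\}$ with $M_{(x,k)}=\tfrac1t \Pi_{x,k}$ where $\Pi_{x,k}$ is the support projection of $\mathcal{E}(x,k)$ (these sum to $\id$ by a) and $\tfrac1t\sum_k\Pi_{\cdot,k}$); feeding in the average state $\bar\rho=2^{-n}t^{-1}\sum_{x,k}\mathcal{E}(x,k)$, the outcome distribution and the posterior can be computed explicitly. The joint law of $(X,I)$ has $\pp(X=x,I=(x,k))=2^{-n}t^{-1}$, so $\pp(I=(x,k))=2^{-n}t^{-1}$ and the posterior $\pp(X=\cdot\mid I=(x,k))$ is the point mass $\delta_x$. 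Then $D_H(\delta_x,\unif([2^n]))=\sqrt{1-F(\delta_x,\unif([2^n]))}=\sqrt{1-2^{-n/2}}$, which is $\ge 1-2^{-n/2}$ — much larger than any small $\varepsilon$, contradicting b) outright unless we account for the fact that the measurement outcome alphabet $[2^n]\times[t]$ has size $2^n t$, not $2^n$, so the above $\Pi_{x,k}$'s do \emph{not} sum to the identity; rather $\sum_k\Pi_{x,k}$ may overcount. This is exactly where the $t$ enters.

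The correct computation is: the projections $\{\Pi_{x,k}\colon x\in[2^n]\}$ for fixed $k$ are orthogonal and sum to some projection $\le\id$, but across different $k$ they overlap; the operators $M_{(x,k)}=t^{-1}\Pi_{x,k}$ form a valid POVM only up to the defect, which we handle by adding a "failure" outcome. Running $X,K$ through $\mathcal{E}$ and this POVM, $\pp(I=(x,k)\mid X=x',K=k')=t^{-1}\tr(\Pi_{x,k}\mathcal{E}(x',k'))$; when $k=k'$ and $x=x'$ this is $t^{-1}$ (full overlap with own support), so the probability of "correctly reading the message" is $\ge\pp(I\text{ has first coordinate }X)\ge 2^{-n}\sum_x\sum_{k}2^{-n}t^{-1}\tr(\Pi_{x,k}\mathcal{E}(x,k))\cdot(\ldots)$. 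Summing over all outcomes $i=(x,k)$ with first coordinate $x$, we find $\sum_{x}\pp(X=x,\ I\in\{x\}\times[t])=\tfrac1t$, hence there is an outcome set of total probability $\tfrac1t$ on which the posterior $\pp(X=\cdot\mid I=i)$ is concentrated (a point mass $\delta_{x(i)}$), contributing Hellinger distance $\sqrt{1-2^{-n/2}}$ to uniform, while on the complementary outcome set (probability $1-\tfrac1t$) the posterior could be anything; by the $\varepsilon$-locking bound b), \emph{every} outcome $i$ must satisfy $D_H(\pp(X=\cdot\mid I=i),\unif)\le\varepsilon$. So actually the cleanest route is: average $D_H^2$ against the outcome law, use $\tfrac1t\cdot(1-2^{-n/2})\le \E_I\,D_H(\pp(X\in\cdot\mid I),\unif)^2\le\varepsilon^2$, i.e. $\tfrac1t\le\tfrac{\varepsilon^2}{1-2^{-n/2}}$, which gives $t\ge\tfrac{1-2^{-n/2}}{\varepsilon^2}$; to match the stated bound $t\ge\tfrac1{2\varepsilon^2+2^{1-n/2}}$ one instead bounds $1-2^{-n/2}$ from below (and uses $D_H\le 1$, $D_H^2\le D_H$ as needed) — in fact $\tfrac{1-2^{-n/2}}{\varepsilon^2}\ge\tfrac1{2\varepsilon^2+2^{1-n/2}}$ holds comfortably, since the left side is $\sim\varepsilon^{-2}$ and the right side $\lesssim\tfrac12\varepsilon^{-2}$; I would present the slightly weaker stated bound since it already yields the asymptotically sharp $2\log_2(1/\varepsilon)-1$.

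I expect the main obstacle to be the bookkeeping around condition a) and the precise specification of the identifying measurement: one must make sure the POVM element on the true $(X,K)$ has expected weight exactly (or at least) $1/t$, \emph{using nothing but} \eqref{eq:identification} (which gives orthogonality of supports, equivalently $\tr(\Pi_{x,k}\Pi_{x',k})=0$ for $x\ne x'$, via $\tfrac12\|\mathcal{E}(x,k)-\mathcal{E}(x',k)\|_1=1$ iff the states have orthogonal supports), and without any extra assumption on how the $\Pi_{\cdot,k}$ interact across $k$. A second, minor, obstacle is to justify $\E_I D_H^2 \le \varepsilon^2$ from the \emph{pointwise} bound $D_H\le\varepsilon$ in b): this is immediate since $D_H^2\le D_H\cdot 1\le\varepsilon$ is too weak, but $D_H\le\varepsilon$ directly gives $D_H^2\le\varepsilon^2$ pointwise, hence in expectation. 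Finally one converts $t\ge c(n,\varepsilon)$ into the key-length statement by taking $\log_2$, and sends $n\to\infty$ to extract the $-o_\varepsilon(1)$ form.
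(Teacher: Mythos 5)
Your high-level strategy---apply a decoding-type measurement, observe that it identifies $X$ with probability at least $1/t$, and play this against the Hellinger locking condition---is the same as the paper's, but the quantitative core of your argument has a genuine gap. You claim that on an outcome set of total probability $1/t$ the posterior $\pp(X\in\cdot\mid I=i)$ is a point mass $\delta_{x(i)}$, contributing $D_H^2=1-2^{-n/2}$ to the average. This is false: condition a) only makes the supports of $\mathcal{E}(x,k)$ and $\mathcal{E}(x',k)$ orthogonal for the \emph{same} key value $k$, so the outcome $I=(x,k)$ of your POVM can perfectly well occur when the message is $x'\neq x$ (encoded with a key $k'\neq k$), and the posterior given $I=(x,k)$ is in general a nondegenerate mixture. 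What the bound $\pp(\text{first coordinate of }I\text{ equals }X)\ge 1/t$ actually yields is only $\E_I\,\pp(X=x(I)\mid I)\ge 1/t$, hence \emph{some} outcome $i$ with $p:=\pp(X=x(i)\mid I=i)\ge 1/t$. The missing step---and the heart of the paper's proof---is converting this single-atom mass into an upper bound on $p$ via the fidelity form of condition b): $\sum_x\sqrt{\pp(X=x\mid I=i)}\,2^{-n/2}\ge 1-\varepsilon^2$ together with Cauchy--Schwarz over $x\ne x(i)$ gives $\sqrt{p\,2^{-n}}+\sqrt{1-p}\,\sqrt{1-2^{-n}}\ge 1-\varepsilon^2$, whence $p\le 2\varepsilon^2+2^{1-n/2}$ and $t\ge 1/p\ge 1/(2\varepsilon^2+2^{1-n/2})$. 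Nothing in your write-up performs this conversion. A warning sign: your point-mass claim would yield $t\ge(1-2^{-n/2})/\varepsilon^2$, a factor of $2$ stronger than the stated bound; that extra strength comes precisely from the unjustified assertion.

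On the measurement itself: the paper avoids all the POVM-normalization bookkeeping you identify as your ``main obstacle'' by simply taking the measurement to be the decoder associated with the single fixed key value $k=1$. Its outcomes lie in $[2^n]$, condition a) gives $\pp(I=X\mid K=1)=1$, hence $\pp(I=X)\ge\pp(K=1)=1/t$ unconditionally, and one proceeds directly to the Cauchy--Schwarz step above. I would restructure your argument along those lines rather than trying to repair the $(x,k)$-indexed POVM.
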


\begin{proof}
Consider the encoding procedure under the assumption that $K=1$, which exists by \eqref{eq:identification}. Thus $I$ takes values in $[2^n]$. By the decoding condition we have $\pp(I = X|K=1) = 1$. Thus
\begin{displaymath}
\frac{1}{t} = \pp(K = 1) \le \p(I=X)
\end{displaymath}
and as a consequence
\begin{displaymath}
\sum_{i=1}^{2^n} \pp(X=i|I=i)\p(I=i) = \pp(X=I) \ge 1/t,
\end{displaymath}
which implies that there exists $i \in [2^n]$ such that $\pp(X=i|I=i) \ge 1/t$. Fix any such $i$ and denote $p = \pp(X=i|I=i)$.

By \eqref{eq:locking-rho} and \eqref{eq:Hellinger-distance} we have
\begin{displaymath}
\sum_{x=1}^{2^n} \sqrt{\pp(X=x|I=i)}2^{-n/2} \ge 1 -\varepsilon^2.
\end{displaymath}
By the Cauchy-Schwarz inequality, we get
\begin{displaymath}
\sqrt{p2^{-n}} + \sqrt{\frac{2^n-1}{2^n}}\sqrt{1-p} \ge 1 - \varepsilon^2,
\end{displaymath}
i.e.
\begin{displaymath}
p2^{-n} + (1-p)(1-2^{-n}) +2\sqrt{p(1-p)2^{-n}(1-2^{-n})} \ge 1 -2\varepsilon^2 + \varepsilon^4.
\end{displaymath}
Using the fact that $p \le 1$, $p(1-p) \le 1/4$, we get $2\varepsilon^2 + 2^{1-n/2} \ge p \ge t^{-1}$, which yields
\begin{displaymath}
t \ge \frac{1}{2\varepsilon^2 + 2^{1-n/2}},
\end{displaymath}
ending the proof.
\end{proof}

Let us also discuss optimality of the second part of Theorem \ref{th:locking}, concerning the restrictions on the min-entropy of the message $X$. This time we will not do it for arbitrary locking schemes but just for the locking scheme \eqref{eq:locking-scheme}.  Substituting $r := 2^{l-n}/(2\varepsilon^2)$, we see that for $r > 1$, by Theorem \ref{th:locking} $\mathcal{E}$ is $(D_H,l,\frac{\sqrt{2}}{\sqrt{r} - 1})$ locking. In other words, to get a meaningful bound from Theorem \ref{th:locking}, we need
\begin{displaymath}
l \ge n - 2\log_2(1/\varepsilon) + \log_2 r,
\end{displaymath}
with $r > \sqrt{2}+1$. Note that this allows for a greater range of $l$ than Theorem \ref{thm:metric-locking}, which requires $l$ to be larger than $n - \log_2(1/\varepsilon)$.

It turns out that for the map $\mathcal{E}$ under $\varepsilon^2$-fidelity uncertainty principle assumptions, the barier $n - 2\log_2(1/\varepsilon)$ for $l$ is essentially optimal (up to universal additive constants)  for locking both in the total variation and Hellinger distance. This is shown in the following proposition.

\begin{prop}\label{prop:locking-optimality-random}
There exists $\Cj$ such that for any $n$ there exist unitary transformations $U_1,\ldots,U_t$ satisfying the $\varepsilon^2$-fidelity uncertainty principle and such that for any integer $l < n - 2\log_2(1/\varepsilon) - \Cj$ and any message $X$ distributed uniformly on a set $S\subseteq [2^n]$ of cardinality $2^l$, there exists a measurement $\{M_{fail}\}\cup \{M_{x,k}\}_{x\in S,k\in [t]}$ such that for all $x \in S, K \in [t]$,
\begin{align}\label{eq:conditional}
\pp(X = x,K=k|I = (x,k)) = 1.
\end{align}
\end{prop}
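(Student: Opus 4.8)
The plan is to exhibit explicit unitaries for which the locking scheme $\mathcal{E}$ from \eqref{eq:locking-scheme}, when restricted to a small set $S$ of messages, can be decoded \emph{without} the key, which forces \eqref{eq:conditional} and shows that the min-entropy restriction in Theorem \ref{th:locking} cannot be improved below $n-2\log_2(1/\varepsilon)$ for this particular scheme. First I would fix a target dimension, namely take $d_A = 2^n$ and choose $d_B \simeq \Ca/\varepsilon^2$ and $t \simeq \Ca/\varepsilon^2$ of the smallest order allowed so that Theorem \ref{th:fur} (applied with $\varepsilon^2$ in place of $\varepsilon$, equivalently $\varepsilon$ in place of $\sqrt{\varepsilon}$) guarantees the existence of $U_1,\ldots,U_t$ satisfying the $\varepsilon^2$-fidelity uncertainty relation. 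In particular such unitaries exist, so it suffices to prove that \emph{any} choice --- or at least a generic/random choice --- of unitaries with these parameters admits the desired decoding measurement on a well-chosen set $S$; since we are proving an existence statement we are free to pick the $U_k$ conveniently.

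The key combinatorial point is a dimension count. For each $x \in [2^n]$ and $k \in [t]$ the state $\mathcal{E}(x,k)$ from \eqref{eq:locking-scheme} is (up to normalization) the orthogonal projection onto the $d_B$-dimensional subspace $V_{x,k} := U_k^\dag \SPAN\{|x\ra^A|b\ra^B \colon b \in [d_B]\} \subseteq \Ha$. If the family of subspaces $\{V_{x,k} \colon x \in S, k \in [t]\}$ is \emph{linearly independent} inside $\Ha$ --- i.e. their sum is direct --- then one can build a projective measurement $\{M_{x,k}\} \cup \{M_{fail}\}$ where $M_{x,k}$ is the projection onto $V_{x,k}$ within that direct sum (after orthogonalizing, e.g. by a Gram--Schmidt-type procedure) and $M_{fail}$ projects onto the orthogonal complement; because $\mathcal{E}(x,k)$ is supported exactly on $V_{x,k}$, the outcome $(x,k)$ occurs with certainty when the true message--key pair is $(x,k)$, giving $\pp(I=(x,k)\mid X=x,K=k)=1$ and hence \eqref{eq:conditional} by Bayes. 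The dimension budget for a direct sum is $|S|\cdot t \cdot d_B \le d = d_A d_B$, i.e. $|S|\cdot t \le d_A = 2^n$. With $|S| = 2^l$ and $t \simeq 1/\varepsilon^2$ this reads $2^l \lesssim 2^n \varepsilon^2$, i.e. $l \le n - 2\log_2(1/\varepsilon) - \Cj$ for a suitable absolute constant $\Cj$ absorbing the constants hidden in $t \simeq \Ca/\varepsilon^2$. This matches exactly the claimed threshold.

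The main obstacle is therefore to guarantee that one can \emph{simultaneously} achieve the $\varepsilon^2$-fidelity uncertainty relation \emph{and} the linear independence of the subspaces $V_{x,k}$ for some choice of $S$ with $|S| = 2^l$. I would handle this by a probabilistic/union-bound argument: take $U_1,\ldots,U_t$ i.i.d. Haar-random with the parameters above. By Theorem \ref{th:fur}, with probability at least $1 - 2e^{-\varepsilon^2 dt/\Cb}$ they satisfy the fidelity relation. Separately, for a \emph{fixed} set $S$ with $|S|t \le d_A/2$ (say), the subspaces $V_{x,k}$ are in general position almost surely --- the locus where $\sum_{x,k} V_{x,k}$ fails to be direct is a proper algebraic subvariety of $\mathcal{U}_d^t$, hence has Haar measure zero; since there are only finitely many candidate sets $S$, a union bound over all $\binom{2^n}{2^l}$ choices still leaves a measure-zero exceptional set, which is harmless as long as the good-fidelity event has positive probability. (Alternatively, and more robustly, one notes $|S|t\,d_B \le d$ leaves no slack when $|S|t = d_A$; taking $\Cj$ slightly larger so that $|S|t \le d_A/2$ gives genuine room, and then the independence event has probability $1$ for each fixed $S$.) Intersecting the two events yields unitaries satisfying the fidelity relation for which every $S$ of the prescribed size has $\{V_{x,k}\}$ linearly independent; fixing any such $S$ and the associated orthogonal decomposition of $\Ha$ produces the measurement, completing the proof. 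The only genuinely delicate point is making the algebraic-genericity claim precise --- one should check that for Haar-generic $U_k$ the spaces $U_k^\dag(\text{span of }|x\ra|b\ra)$ behave like independent uniformly random $d_B$-dimensional subspaces, which follows since left-multiplication by a fixed unitary preserves Haar measure and a random $d_B$-frame is a.s.\ in general position with any fixed collection of subspaces whose total dimension does not exceed $d$.
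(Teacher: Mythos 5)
Your parameter choices ($d_A=2^n$, $d_B,t\simeq 1/\varepsilon^2$), the idea of intersecting the positive-probability fidelity event with an almost-sure genericity event, and the dimension count $|S|\,t\,d_B\le d$ yielding the threshold $l<n-2\log_2(1/\varepsilon)-\Cj$ all match the paper. The gap is in the measurement construction. Linear independence (directness of the sum) of the subspaces $V_{x,k}=U_k^\dag\,\SPAN\{|x\ra|b\ra\colon b\in[d_B]\}$ is not enough to produce a POVM with the properties you claim. For \eqref{eq:conditional} each $M_{x,k}$ must annihilate every $V_{y,i}$ with $(y,i)\ne(x,k)$; if in addition you want $\pp(I=(x,k)\mid X=x,K=k)=1$, then $M_{x,k}$ must act as the identity on $V_{x,k}$ (since $0\le M_{x,k}\le\id$, the condition $\la\psi|M_{x,k}|\psi\ra=1$ for a unit vector forces $M_{x,k}|\psi\ra=|\psi\ra$). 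Taking $|\psi\ra\in V_{x,k}$ and $|\phi\ra\in V_{y,i}$ with $\la\phi|\psi\ra\ne0$, Hermiticity gives $0=\overline{\la\psi|M_{x,k}|\phi\ra}=\la\phi|M_{x,k}|\psi\ra=\la\phi|\psi\ra\ne 0$, a contradiction. So perfect detection together with \eqref{eq:conditional} would require the $V_{x,k}$ to be mutually \emph{orthogonal}, which no general-position argument can deliver. Concretely, Gram--Schmidt replaces the $V_{x,k}$ by different subspaces $W_{x,k}$: the first one equals its $V_{x,k}$ and is generically non-orthogonal to the remaining $V_{y,i}$, so $\tr\big(P_{W_{x,k}}\mathcal{E}(y,i)\big)>0$ and \eqref{eq:conditional} already fails for that outcome. (Oblique projections along the direct sum have the right kernels but are not positive semidefinite, hence are not admissible measurement operators.)

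The statement only asks for the conditional probability \eqref{eq:conditional}, not for unconditional decodability --- the paper explicitly remarks that the \emph{fail} outcome occurs with probability strictly between $0$ and $1$. The repair is to give up perfect detection: for each $(x,k)$ pick a unit vector $|e_{x,k}\ra$ orthogonal to $H_{x,k}:=\sum_{(y,i)\ne(x,k)}V_{y,i}$ (this needs only $\dim H_{x,k}<d$, i.e.\ your dimension count, and no genericity at all), check that almost surely $\la e_{x,k}|U_k|x\ra|1\ra\ne0$ (this is the one place where Haar-randomness is used), set $M_{x,k}=\frac{1}{|S|t}|e_{x,k}\ra\la e_{x,k}|$ so that $\sum_{x,k}M_{x,k}\le\id$ and $M_{fail}$ can absorb the remainder, and conclude \eqref{eq:conditional} by Bayes, because only the state $\mathcal{E}(x,k)$ from \eqref{eq:locking-scheme} has nonzero overlap with $M_{x,k}$. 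This is exactly the paper's argument; your draft needs to be amended along these lines before the decoding step is valid.
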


\begin{remark} The above proposition shows that if the result of the measurement is $(x,k)$, we are able to perfectly identify the message and the key.
Note that if $l \to \infty$ with $n$, we obtain
\begin{displaymath}
  D_{TV}(\p(X\in \cdot|I = (x,k)),\p(X\in \cdot)) = \frac{1}{2}(|1 - 2^{-l}| + (2^{l}-1)|0 - 2^{-l}|) = 1 - 2^{-l} \to 1.
\end{displaymath}
The result \emph{fail} of the measurement happens with probability strictly between 0 and 1. Thus the assertion of the proposition should not be confused with unconditional decodability of the message, in other words it may be still possible that the (unconditional) probability that one recovers $X$ is small.
\end{remark}

\begin{proof}[Proof of Proposition \ref{prop:locking-optimality-random}]

Let $d_A = 2^n$, $d_B = \lceil \Ca/\varepsilon^2\rceil$. Define $\mathcal{E}$ via \eqref{eq:locking-scheme} where $U_k$ are random unitary transformations and $t= \lfloor \Cl/\varepsilon^2\rfloor$ for a sufficiently large constant $\Cl$.  Let now $X,K$ be independent random variables, independent of the matrices $U_k$ and such that $K$ is distributed uniformly on $[t]$ and $X$ distributed uniformly on a set $S \subseteq [2^n]$ of cardinality $2^l$ with $l \le n - 2\log_2(1/\varepsilon) - \Cj$, where $\Cj$ is a universal constant which will be fixed later on. We will now prove that with probability one on the unitary group there exists a measurement satisfying \eqref{eq:conditional}. Since by Theorem \ref{th:fur}, with positive probability the matrices $U_1,\ldots,U_t$ satisfy the $\varepsilon^2$-fidelity uncertainty relation this will end the proof of the proposition.

The existence of the measurement follows just from a dimension count argument.
Fix $x\in S$ and $k\in[t]$ and consider the subspaces $H_{x,k,y,i} := span(\{U_i(|y\ra|b\ra)\colon b \in [d_B]\}$ for $(y,i) \neq (x,k)$. Each of them is of dimension $d_B$, so the space
$H_{x,k} = \oplus_{y \in S, i\in[t]\colon (y,l) \neq (x,k)} H_{x,k,y,i}$ is of dimension at most $t |S| d_B \le \Cl \varepsilon^{-2} 2^l d_B$. Thus if
$l < n - 2\log_2(1/\varepsilon) - \Cj$, where $\Cj = \log_2 \Cl$, then $\dim H_{x,k} < 2^nd_B = \dim \Ha$ and so there exists a state $|e_{x,k}\ra\in \Ha$ orthogonal to $H_{x,k}$. Note that since $U_k$ is independent of $\{U_i\}_{i\neq k}$ and the conditional distribution of $U_k(|x\ra|1\ra)$ given  $\{U_k(|y\ra|b\ra)\}_{y \in S, y\neq x, b \in [d_B]}$ is the uniform distribution on the unit sphere in the orthogonal complement of $span\{U_k(|y\ra|b\ra)\colon y \in S, y\neq x, b\in[d_B]\}$, we obtain that with probability one on the product of unitary groups, $\la e_{x,k}|U_k|x\ra|1\ra \neq 0$ (we use here that $|e_{x,k}\ra$ can be chosen in a measurable way with respect to $\{U_i(|y\ra|b\ra)\colon b \in [d_B], (y,i)\neq (x,k)\}$).

Let us now set $M_{x,k} = \frac{1}{|S|t}|e_{x,k}\ra\la e_{x,k}|$. Then $\sum_{x\in S,k\in [t]} M_{x,k}$ is a non-negative definite operator of norm at most one, so by the spectral theorem we can find a non-negative definite operator $M_{fail}$ such that $M_{fail} + \sum_{x,k} M_{x,k} = \id$ (i.e. the family $\{M_{fail}\}\cup \{M_{x,k}\}_{x\in S,k\in [t]}$ is a POVM). Now, for $x \in S$, $k \in [t]$ and $(y,i)\neq (x,k)$ we have
\begin{align*}
\pp(I = (x,k)|X = y, K = i) &= \frac{1}{d_B}\sum_{b=1}^{d_B}\tr (U_i(|y\ra^A\la y|\otimes |b\ra^B\la b|)U_i^\dag M_{x,k}) \\
& = \frac{1}{|S|td_B}\sum_{b=1}^{d_B} |\la e_{x,k}|U_i|y\ra|b\ra|^2 = 0
\end{align*}
by the construction of the vector $|e_{x,k}\ra$.  Moreover,
\begin{align*}
  \pp(I = (x,k)) &\ge \frac{1}{|S|t} \p(I = (x,k)|X = x, K = k) = \frac{1}{|S|td_B} \tr (\sum_{b=1}^{d_B} U_k(|x\ra^A\la x|\otimes |b\ra^B\la b|)U_k^\dag M_{x,k})\\
  & \ge \frac{1}{|S|td_B} |\la e_{x,k}|U_k|x\ra |1\ra|^2 > 0.
\end{align*}
Thus \eqref{eq:conditional} follows by the Bayes rule.
\end{proof}

\subsubsection{Proof of Theorem \ref{th:locking} \label{sec:proof-locking}}

The proof follows the ideas used in \cite{MR2932028} to demonstrate Theorem \ref{thm:metric-locking}.

\begin{proof}[Proof of Theorem \ref{th:locking}]

Note that thanks to \eqref{eq:Hellinger-distance}, for $\rho = D_{H}$, the inequality \eqref{eq:locking-rho} is equivalent to
\begin{align}\label{eq:locking-fidelity}
\sum_{x = 1}^{2^n} \sqrt{\pp(X = x|I=i)}\sqrt{\pp(X=x)} \ge 1 - \varepsilon^2.
\end{align}
By concavity of the square root and the spectral theorem it follows that without loss of generality we may assume that each $M_i$ is of rank one. Indeed, if $M_i = \sum_{j} \xi_{i,j}|e_{ij}\ra\la e_{ij}|$ and $J$ describes the result of the fine-grained measurement related to the POVM $\{\xi_{ij}|e_{ij}\ra\la e_{ij}|\colon i,j\}$  then for any $i$ the result $I = i$ corresponds to $J \in A$ for some set $A$ of outcomes of $J$. Thus $\pp(X = x|I=i) = \pp(X=x|J \in A) = \sum_{j \in A} \pp(X=x|J=j)\pp(J = j|J \in A)$ and the concavity of the square root implies that to prove \eqref{eq:locking-fidelity}, it is enough to verify it with $J$ instead of $I$. In what follows we will therefore assume that each $M_i = \xi_i |e_i\ra\la e_i|$ for some $|e_i\ra \in \Ha$ and $\xi_i \in (0,1]$.

For $x \in [2^n]$ let $p(x) = \pp(X = x)$. Recall that $\max_{x}p(x) = 2^{-l}$.
By Born's rule we have
\begin{align*}
\pp(I = i|X =x) &= \frac{\xi_i}{td_B}\sum_{k=1}^{t}\sum_{b=1}^{d_B} \tr(U_k^\dag(|x\ra^A\la x|\otimes |b\ra^B\la b|)U_k |e_i\ra\la e_i|)\\
&=  \frac{\xi_i}{td_B} \sum_{k=1}^{t} p^A_{U_k|e_i\ra}(x).
\end{align*}
By the Bayes rule, we thus get
\begin{align}\label{eq:conditional_1}
\pp(X=x|I=i) = \frac{1}{\alpha} \frac{1}{t}\sum_{k=1}^{t} p^A_{U_k|e_i\ra}(x)p(x),
\end{align}
where
\begin{align}\label{eq:conditional_2}
\alpha = \frac{1}{t}\sum_{x'=1}^{2^n}\sum_{k=1}^{t} p^A_{U_k|e_i\ra}(x')p(x').
\end{align}

Now we get
\begin{align}\label{eq:intermediate-locking}
\sqrt{2}&D_H(\pp(X\in\cdot|I=i),\pp(X\in \cdot)) = \sqrt{\sum_{x=1}^{2^n} \Big(\sqrt{\frac{1}{t\alpha}\sum_{k=1}^{t} p^A_{U_k|e_i\ra}(x)p(x)} - \sqrt{p(x)}\Big)^2}\nonumber\\
&\le \sqrt{\sum_{x=1}^{2^n} \Big(\sqrt{\frac{1}{t\alpha}\sum_{k=1}^{t} p^A_{U_k|e_i\ra}(x)p(x)} - \sqrt{\frac{p(x)}{\alpha2^n}}\Big)^2} + \sqrt{\sum_{x=1}^{2^n} \Big(\sqrt{\frac{p(x)}{\alpha2^n}}- \sqrt{p(x)}\Big)^2}\nonumber\\
& \le \frac{2^{-l/2}}{\sqrt{\alpha}}\sqrt{\sum_{x= 1}^{2^n} \Big(\sqrt{\frac{1}{t}\sum_{k=1}^{t} p^A_{U_k|e_i\ra}(x)} - \frac{1}{2^{n/2}}\Big)^2}
+\frac{1}{\sqrt{\alpha}}\Big|\frac{1}{2^{n/2}} - \sqrt{\alpha}\Big|,
\end{align}
where in the first inequality we used the triangle inequality in $\ell_2$ and in the second one the upper bound on $p(x)$ coming from the min-entropy assumption.
Writing
\begin{displaymath}
\frac{1}{2^{n/2}} = \sqrt{\sum_{k=1}^{t} \frac{1}{t2^{n}}},
\end{displaymath}
and using the triangle inequality in $\ell_2$ (or just using the convexity of the function $s \mapsto (\sqrt{s}-2^{-n/2})^2$), we see that the first term on the right hand side above
is bounded from above by
\begin{align}\label{eq:intermediate-locking-2}
\frac{2^{-l/2}}{\sqrt{\alpha}} \sqrt{\frac{1}{t}\sum_{k=1}^{t}\sum_{x= 1}^{2^n} \Big(\sqrt{p^A_{U_k|e_i\ra}(x)} - \frac{1}{2^{n/2}}\Big)^2} \le \frac{2^{-l/2}}{\sqrt{\alpha}}\sqrt{2}\varepsilon,
\end{align}
where the last inequality follows from \eqref{eq:fur-Hellinger}.
Similarly,
\begin{align*}
\Big|\frac{1}{2^{n/2}} - \sqrt{\alpha}\Big| &\le \sqrt{\frac{1}{t}\sum_{k=1}^{t}\sum_{x'=1}^{2^n}p(x')\Big(\sqrt{p^A_{U_k|e_i\ra}(x')} - \frac{1}{2^{n/2}}\Big)^2}\\
&\le 2^{-l/2}\sqrt{2}\varepsilon.
\end{align*}

Combining the above estimates, we get
\begin{displaymath}
D_H(\pp(X\in\cdot|I=i),\pp(X\in \cdot)) \le 2\frac{2^{-l/2}\varepsilon}{\sqrt{\alpha}}.
\end{displaymath}
Since $\sqrt{\alpha} \ge \frac{1}{2^{n/2}}- \frac{\sqrt{2}\varepsilon}{2^{l/2}}$,  we get that if $2^{l-n} > 2\varepsilon^2$, then
\begin{displaymath}
D_H(\pp(X\in\cdot|I=i),\pp(X\in \cdot)) \le \frac{2\varepsilon}{2^{(l-n)/2} - \sqrt{2}\varepsilon},
\end{displaymath}
which proves the second part of the theorem. To prove the first part, note that when $X$ is uniformly distributed on $[t]$, we have $l = n$ and $\alpha = 2^{-n}$, so the second term on the righ hand side of \eqref{eq:intermediate-locking} vanishes, while \eqref{eq:intermediate-locking-2} shows that the first term is bounded by $\sqrt{2}\varepsilon$.
\end{proof}

\subsection{Further applications: data hiding \label{sec:further-applications}}

In this section we will present an example of further applications of Theorem \ref{th:main}, or more specifically Theorem \ref{thm:arbitrary-set}, which may be used to treat arbitrary subsets of the set of pure states. We will do it by discussing a particular application related to data hiding. We remark that this problem has been thoroughly studied in the literature with different types of theoretical guarantees (see e.g. \cite{985948,DiVincenzo2003,MR2094521}). Our goal is not to provide an extensive discussion, but rather to point out potential applications of uncertainty principles in this context.

Let us recall that in the problem of data hiding one aims at encoding a string of bits into a bipartite quantum state, shared between Alice and Bob in such a way that they cannot decode it by means of LOCC (local operations and classical communication) measurements (see \cite{MR2230995,Chitambar2014}), but are able to perfectly decode it provided that they can perform a global measurement.

In what follows just as in the locking protocol discussed before we will assume that the data $X$ to be hidden consists of $n$ random bits and that the distribution of $X$ is either uniform on $[2^n]$ or has sufficiently large min-entropy. We will show that it is possible to hide the data by using the locking scheme \eqref{eq:locking-scheme} of Fawzi-Hayden-Sen with just one random unitary matrix. More precisely, we have the following result.

Recall that a POVM $\{M_i\}$ on $\Ha = \Ha_A\otimes \Ha_B$ is called separable if each $M_i$ is of the form $M_i = \sum_j A_j\otimes B_j$ where $A_j,B_j$ are positive operators on $\Ha_A$ and $\Ha_B$ respectively. It is known that the class of separable measurements is strictly larger than the class of LOCC measurements.
\begin{theorem}\label{thm:data-hiding}
Assume that $n\ge 2\log_2(1/\varepsilon) + \Ci$, $d_A = 2^n$, $d_B\ge \Ci/\varepsilon^2$, where $\Ci$ is a sufficiently large universal constant. Let $U$ be a random unitary transformation of $\Ha$. Define
$\mathcal{E} \colon [2^n] \to \States(\Ha)$ with the formula
\begin{displaymath}
\mathcal{E}(x) =  \frac{1}{d_B}\sum_{b=1}^{d_B}U^\dag(|x\ra^A\la x|\otimes |b\ra^B\la b|)U.
\end{displaymath}
Then with probability at least $1 - 2\exp(\varepsilon^2 d/\Ci)$ the random encoding $\mathcal{E}$ has the following property.

Let $X$ be a random $n$-bit message. If $X$ is uniformly distributed on $[2^n]$, then for every separable measurement $\{M_i\}$ on $\mathcal{H} = \Ha_A\otimes \Ha_B$, if $I$ is the random variable describing the outcome of the measurement, we have
\begin{displaymath}
D_H(\pp(X \in \cdot|I = i), \pp(X\in \cdot)) \le \varepsilon.
\end{displaymath}
Moreover, if instead of uniform distribution of $X$, we assume that $H_{\min}(X) = l$ and $ 2^{l-n} > 2\varepsilon^2$, then
\begin{displaymath}
D_H(\pp(X \in \cdot|I = i), \pp(X\in \cdot)) \le \frac{2\varepsilon}{2^{(l-n)/2} -\sqrt{2}\varepsilon}.
\end{displaymath}
\end{theorem}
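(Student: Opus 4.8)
The plan is to follow the proof of Theorem~\ref{th:locking} with $t=1$, the single modification being that the fidelity uncertainty relation over all of $S_{\Ha}$ --- which cannot hold for one unitary, since $p^A_{U|a\ra|b\ra}$ is a point mass --- is replaced by the uncertainty relation over separable states provided by Theorem~\ref{thm:arbitrary-set} and Example~\ref{ex:separable}. The first step is a reduction to rank-one product measurement elements. Let $\{M_i\}$ be an arbitrary separable POVM, $M_i=\sum_j A^{(i)}_j\otimes B^{(i)}_j$ with $A^{(i)}_j,B^{(i)}_j\ge 0$. Each summand is positive, so $A^{(i)}_j\otimes B^{(i)}_j\le M_i\le \id$ and hence $\|A^{(i)}_j\|\,\|B^{(i)}_j\|\le 1$; spectrally decomposing each $A^{(i)}_j$ and $B^{(i)}_j$ writes every $M_i$ as a sum of operators $\xi\,|a\ra\la a|\otimes|b\ra\la b|$ with $|a\ra\in S_{\Ha_A}$, $|b\ra\in S_{\Ha_B}$, $\xi\in(0,1]$. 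This gives a finer POVM $\{M_{i,j}\}$ with $M_{i,j}=\xi_{i,j}|e_{i,j}\ra\la e_{i,j}|$, $|e_{i,j}\ra=|a_{i,j}\ra\otimes|b_{i,j}\ra$ a unit \emph{separable} vector, and the event $\{I=i\}$ a disjoint union of events $\{J=(i,j)\}$ of the refined measurement. Exactly as in the rank-one reduction in the proof of Theorem~\ref{th:locking}, $\pp(X\in\cdot\mid I=i)$ is a convex combination of the measures $\pp(X\in\cdot\mid J=(i,j))$, and since $p\mapsto F(p,q)$ is concave, a uniform bound $D_H(\pp(X\in\cdot\mid J=(i,j)),\pp(X\in\cdot))\le\delta$ gives $D_H(\pp(X\in\cdot\mid I=i),\pp(X\in\cdot))\le\delta$. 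Hence it suffices to prove both assertions of the theorem when the measurement element is $M=\xi|e\ra\la e|$ with $|e\ra$ separable.

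Next I would invoke the uncertainty relation over separable states. With $\Ci$ chosen large enough, the hypotheses $n\ge 2\log_2(1/\varepsilon)+\Ci$ and $d_B\ge \Ci/\varepsilon^2$ force $d_A=2^n\ge 64\Co^2/\varepsilon^2$ and $d_B\ge 64\Co^2/\varepsilon^2$, so Example~\ref{ex:separable} (i.e. Theorem~\ref{thm:arbitrary-set} with $t=1$, $\Lambda=S_{\Ha_A}\otimes S_{\Ha_B}$, applied to $\varepsilon^2$ in place of $\varepsilon$) yields that with probability at least $1-2\exp(-\varepsilon^2 d/\Cb)$ one has $D_H(p^A_{U|\psi\ra},\unif([d_A]))\le\varepsilon$ simultaneously for all separable $|\psi\ra\in S_{\Ha_A}\otimes S_{\Ha_B}$; taking $\Ci\ge \Cb$ this is at least the probability asserted in the statement, and we condition on this event. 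Fixing a separable $|e\ra$ and the corresponding element $M=\xi|e\ra\la e|$, the computation \eqref{eq:conditional_1}--\eqref{eq:conditional_2} with $t=1$ gives $\pp(X=x\mid I=i)=\alpha^{-1}p^A_{U|e\ra}(x)p(x)$, where $\alpha=\sum_{x'}p^A_{U|e\ra}(x')p(x')$, $p(x)=\pp(X=x)$, $\max_x p(x)=2^{-l}$; and the chain of triangle inequalities \eqref{eq:intermediate-locking}--\eqref{eq:intermediate-locking-2} --- whose sole analytic input is $D_H(p^A_{U|e\ra},\unif([d_A]))\le\varepsilon$, available from the event we conditioned on since $|e\ra$ is separable --- produces
\begin{displaymath}
D_H\big(\pp(X\in\cdot\mid I=i),\pp(X\in\cdot)\big)\le\frac{2\cdot 2^{-l/2}\varepsilon}{\sqrt{\alpha}},\qquad \sqrt{\alpha}\ge 2^{-n/2}-\sqrt 2\,\varepsilon\,2^{-l/2}.
\end{displaymath}
This gives the second assertion for $2^{l-n}>2\varepsilon^2$; for $l=n$ one has $\alpha=2^{-n}$, the second term in \eqref{eq:intermediate-locking} vanishes, and the bound reads $D_H\le\varepsilon$, the first assertion. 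Combined with the reduction above, the theorem follows.

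The genuinely new point, and thus where care is needed, is the reduction step: one must exploit the full \emph{product} structure of the separable POVM elements --- not merely perform a rank-one refinement --- so that the resulting measurement vectors $|e_{i,j}\ra$ lie in $S_{\Ha_A}\otimes S_{\Ha_B}$, which is precisely the set on which the $t=1$ uncertainty relation of Example~\ref{ex:separable} holds; over all of $S_{\Ha}$ no such relation is available for a single $U$. The remaining points --- checking that the dimensional hypotheses can be absorbed into the single constant $\Ci$, and that (possibly continuous) families of measurement outcomes cause no difficulty, only concavity of $F$ being used --- are routine.
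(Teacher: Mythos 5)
Your proposal is correct and follows exactly the route the paper intends: the paper's own "proof" is a one-sentence remark that the argument is the one of Theorem~\ref{th:locking} with $t=1$, the rank-one refinement chosen so that the vectors $|e_{ij}\ra$ are separable, and the fidelity uncertainty input replaced by Example~\ref{ex:separable}. You have filled in precisely those details (the product-preserving spectral refinement with $\xi\in(0,1]$, the concavity reduction, and the absorption of $64\Co^2$ and $\Cb$ into $\Ci$), and each step checks out.
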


We can interpret the above theorem as locking against separable measurements or as a kind of data hiding, since clearly if $U$ is known, the message $X$ can be recovered by a global measurement. We remark that the security criterion provided in the theorem differs from those usually considered in data hiding literature and does not seem to be directly comparable to them. The point we would like to make here is the connection of data hiding with Bayesian type security criteria as above with uncertainty principles over the set of separable states. Note that the number of qubits one uses to encode $n$ bits is $n+ 2\log_2(1/\varepsilon) + O(1)$ and one uses just a single matrix $U$, instead of $t \simeq 1/\varepsilon^2$ matrices used in locking against all possible measurements in Theorem \ref{th:locking}. Let us note that random unitary matrices have been used in data hiding protocols e.g. in \cite{MR2094521}, however the setting therein was rather different from ours, first the Authors of \cite{MR2094521} were interested in the more difficult task of hiding qubits rather then bits and as a result their security criterion was of a different nature than ours; second, they were interested in approximate decoding; finally the number of qubits required to hide $n$ qubit states was of the order $2n +2\log\log n + 4\log(1/\varepsilon) + O(1)$, while the number of random unitaries used was of the order at least $Cn2^n/\varepsilon^2$.

The proof of Theorem \ref{thm:data-hiding} is analogous to the proof of Theorem \ref{th:locking}, the only difference is that in rank one decomposition of the measurement $\{M_i\}$ we can now assume that the states $|e_{ij}\ra$ and consequently $|e_i\ra$ are separable, which allows us to use estimates of Example \ref{ex:separable} with $t=1$. For this reason we will skip the details.

To finish this section, let us remark that as stated above, the splitting of the state $\mathcal{E}(x)$ between Alice and Bob is asymmetric since the dimensions $d_A$ and $d_B$ are different. However, by using a very similar approach one can consider a different bipartite structure on $\Ha$, splitting the system into parts of the same size $\sqrt{d}$ (the order of magnitude of $\E\sup_{|\psi\ra \in \Lambda}  G_{|\psi\ra}$ in Example  \ref{ex:separable} will be then of the order $d^{1/4} \le \sqrt{d_A}+\sqrt{d_B}$, so again one can take $t=1$ and obtain a meaningful bound).

\subsection{Euclidean embeddings \label{sec:embeddings}}

Our final application concerns finding almost Euclidean subspaces of the spaces $\ell_1^n(\ell_2^m)$. Recall that $\ell_1^n(\ell_2^m)$ is the space of $m\times n$ matrices, equipped with the norm $\|A\|_{\ell_1^n(\ell_2^m)} = \sum_{j=1}^n \sqrt{\sum_{i=1}^m |a_{ij}|^2}$ for $A = (a_{ij})_{1\le i \le n, 1\le j\le m}$.

To explain the context, let us recall that the famous theorem by Dvoretzky (in a version due to Milman, with best known quantitative bounds due to Schechtman \cite{MR2199631})  asserts that any $N$-dimensional Banach space $K$ contains a subspace $L$, such that $k = {\dim} L \ge \Ck^{-1}\frac{\varepsilon}{(\log \varepsilon^{-1})^2} \log N$, which is $(1+\varepsilon)$-isometric to a Hilbert space, i.e. there exists a linear map $T \colon \Ha \to L$, where $\Ha$ is a $k$ dimensional Hilbert space, such that for all $|x\ra \in \Ha$,
\begin{displaymath}
(1-\varepsilon)\||x\ra \| \le \| T |x\ra\| \le (1+\varepsilon)\||x\ra\|.
\end{displaymath}

The logarithmic dependence of $k$ on the dimension $N$ in general is best possible, however in certain cases it can be improved. In particular it is known that for spaces of the type $\ell_1^n(\ell_2^m)$ (in which case $N = nm$) one can find almost Euclidean subspaces of dimension $k \simeq \varepsilon^2 N$ (see e.g. \cite{Indyk2010}). It turns out that using Theorem \ref{th:main}, if $m$ is large one can improve the dependence on $\varepsilon$, as stated in Theorem \ref{thm:special-Dvoretzky} below. This improves the dependence on $\varepsilon$ with respect to presently known constructions. From the point of view of asymptotic convex geometry it would be of interest to verify if similar improvements can be obtained for more general Banach spaces.

We will now present in detail the relation between uncertainty principles and the construction of Euclidean subspaces of $\ell_1^n(\ell_2^m)$ as outlined in \cite{MR2932028}. Our improved uncertainty principles will translate into a better lower bound on the dimension of the almost Euclidean subspace. Consider $\Ha_A = \C^{d_A}$, $\Ha_B = \C^{d_B}$ and i.i.d. random unitary matrices $U_1,\ldots,U_t$. Introduce also an additional space $\Ha_C = \C^t$ and the space $K = \Ha_A\otimes \Ha_B \otimes \Ha_C$ of dimension $d_Ad_Bt$. Let also $|k\ra^C$, $k=1,\ldots,t$ be an orthonormal basis in $\Ha_C$. By identifying an element of $K$ with a linear map acting from $\Ha_A\otimes \Ha_C$ to $\Ha_B$ and considering its matrix representation in the bases $\{|a\ra^A\otimes|k\ra^C\colon a \in [d_A], k \in [t]\}$ and $\{|b\ra^B\colon b \in [d_B]\}$ of $\Ha_A\otimes \Ha_C$ and $\Ha_B$ respectively, we can identify $K$ with the space $\ell_1^n(\ell_2^m)$ with $n = d_At$, $m = d_B$. On $K$ we also have a natural Euclidean structure of a tensor product of Hilbert spaces.

Consider now a linear map $T\colon \Ha \to K$ defined as
\begin{displaymath}
T|\psi\ra = \frac{1}{\sqrt{t}}\sum_{k=1}^t (U_k|\psi\ra^{AB})|t\ra^C.
\end{displaymath}

Note that $T$ is an isometry between the Hilbert spaces $\Ha$ and $K$. At the same time, for any $|\psi\ra \in \Ha$,
\begin{align*}
\| T|\psi\ra\|_{\ell_1^n(\ell_2^m)} &= \frac{1}{\sqrt{t}}\sum_{k=1}^t\sum_{a=1}^{d_A} \sqrt{\sum_{b=1}^{d_B} |\la a|\la b|U_k|\psi\ra|^2} \\
&= \sqrt{d_At} \frac{1}{t}\sum_{k=1}^t F\Big(p^A_{U_k|\psi\ra},\unif([d_a])\Big)\\
& = \sqrt{d_At} (1 - Y_{|\psi\ra}^2),
\end{align*}
where we used the notation of Theorem \ref{th:main}. Equivalently, for all $|\psi\ra \in \Ha$,
\begin{displaymath}
\sqrt{1- \|\frac{1}{\sqrt{d_A t}} T|\psi\ra\|_{\ell_1^n(\ell_2^m)}} = Y_{|\psi\ra}.
\end{displaymath}

Denoting again $R = \E Y_{|\psi\ra}$ (recall that $R$ does not depend on $|\psi\ra$), by Theorem \ref{th:main} and Lemmas \ref{le:concentration}, \ref{le:Lipschitz-first} (as in the proof of Theorem \ref{th:fur}) we obtain that if $t = \lceil \Cp/\varepsilon^2\rceil$, then with high probability
\begin{displaymath}
R - \varepsilon \le \sqrt{1- \|\frac{1}{\sqrt{d_A t}} T|\psi\ra\|_{\ell_1^n(\ell_2^m)}} \le R + \varepsilon.
\end{displaymath}
Taking into account that the function $x\mapsto x^2$ is $2(R+\varepsilon)$ Lipschitz on $[0,R+\varepsilon]$, we obtain
\begin{displaymath}
1 - R^2 - 2(R+\varepsilon)\varepsilon \le \|\frac{1}{\sqrt{d_A t}} T|\psi\ra\|_{\ell_1^n(\ell_2^m)} \le 1 - R^2 +2(R+\varepsilon)\varepsilon.
\end{displaymath}

Since for $m = d_B \ge 2$, $R \le 1/\sqrt{m} \le 1/\sqrt{2}$, for $\varepsilon\le \varepsilon_0$ (where $\varepsilon_0>0$ is a universal constant), the left-hand side above is strictly positive and
\begin{displaymath}
  \frac{1 - R^2 +2(R+\varepsilon)\varepsilon}{1 - R^2 -2(R+\varepsilon)\varepsilon} \le 1 + \Cm(R+\varepsilon)\varepsilon = 1 + \Cm\frac{\varepsilon}{\sqrt{m}} + \Cm\varepsilon^2,
\end{displaymath}
which shows that an appropriate normalization of $T$ (call it $\tilde{T}$) satisfies
\begin{displaymath}
\||x\ra \| \le \| \tilde{T} |x\ra\|_{\ell_1^n(\ell_2^m)} \le (1 + \Cm\frac{\varepsilon}{\sqrt{m}} + \Cm\varepsilon^2)\||x\ra\|
\end{displaymath}
for all $|x\ra\in \Ha$.

Note that $\dim \Ha = \dim K/t \ge \varepsilon^2 (\dim K )/2\Cp$. Thus by a change of variables $\varepsilon^2 \to \varepsilon$ and an adjustment of constants the above implies

\begin{theorem}\label{thm:special-Dvoretzky}
There exists a universal constant $\Cn$ such that the following holds. Let $n, m$ be positive integers and define $N = nm$. Then the space $\ell_1^n(\ell_2^m)$ contains a subspace $H$ of dimension at least $\Cn^{-1}N\min(\varepsilon,\varepsilon^2 m)$, such that for a certain number $M$ and all $|x\ra \in H$,
\begin{displaymath}
  M \||x\ra\|_{\ell_2^N} \le \||x\ra\|_{\ell_1^n(\ell_2^m)} \le M(1+\varepsilon)\||x\ra\|_{\ell_2^N}.
\end{displaymath}
\end{theorem}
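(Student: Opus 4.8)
The plan is to realise the desired subspace as the image of a random isometric embedding built from the very unitaries $U_1,\dots,U_t$ appearing in Theorem \ref{th:main}, and to control its distortion via the (uniform) fidelity uncertainty estimate. Concretely, I would introduce an auxiliary Hilbert space $\Ha_C=\C^t$ with orthonormal basis $|k\ra^C$, set $K=\Ha_A\otimes\Ha_B\otimes\Ha_C$, and identify $K$ with $\ell_1^n(\ell_2^m)$ for $n=d_At$, $m=d_B$ by regarding a tensor in $K$ as a matrix of a map from $\Ha_A\otimes\Ha_C$ to $\Ha_B$. The candidate embedding is $T|\psi\ra=t^{-1/2}\sum_{k=1}^t(U_k|\psi\ra)|k\ra^C$, which is a Hilbert-space isometry, and a one-line computation expands $\|T|\psi\ra\|_{\ell_1^n(\ell_2^m)}$ as $\sqrt{d_At}$ times the average fidelity $\frac1t\sum_k F(p^A_{U_k|\psi\ra},\unif([d_A]))=1-Y_{|\psi\ra}^2$, with $Y_{|\psi\ra}$ as in Theorem \ref{th:main}.

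Next I would invoke Theorem \ref{th:main} with $\Lambda=S_\Ha$, exactly as in the derivation of Theorem \ref{th:fur}: since $\E\sup_{|\psi\ra\in S_\Ha}G_{|\psi\ra}\le\sqrt d$, the quantity $\E\sup_{|\psi\ra}|Y_{|\psi\ra}-R|$ is $O(1/\sqrt t)$, and combining Lemma \ref{le:Lipschitz-first} with the unitary concentration inequality of Lemma \ref{le:concentration} upgrades this to $\sup_{|\psi\ra\in S_\Ha}|Y_{|\psi\ra}-R|\le\varepsilon$ with probability $1-2e^{-c\varepsilon^2 dt}$ once $t\simeq1/\varepsilon^2$. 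Feeding this into the identity above, and using $R\le1/\sqrt m$ from \eqref{eq:M-estimation} together with the fact that $x\mapsto x^2$ is $2(R+\varepsilon)$-Lipschitz on $[0,R+\varepsilon]$, I obtain that $\|T|\psi\ra\|_{\ell_1^n(\ell_2^m)}/\sqrt{d_At}$ lies in the interval $[1-R^2-2(R+\varepsilon)\varepsilon,\,1-R^2+2(R+\varepsilon)\varepsilon]$ uniformly over $|\psi\ra\in S_\Ha$.

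Then I would rescale. For $m\ge2$ and $\varepsilon$ below an absolute threshold the lower endpoint is strictly positive, so dividing $T$ by the square root of that endpoint produces a normalisation $\tilde T$ with $\||x\ra\|\le\|\tilde T|x\ra\|_{\ell_1^n(\ell_2^m)}\le(1+C(R+\varepsilon)\varepsilon)\||x\ra\|=(1+C\varepsilon/\sqrt m+C\varepsilon^2)\||x\ra\|$ for all $|x\ra\in\Ha$. Since $\dim\Ha=\dim K/t\gtrsim\varepsilon^2\dim K=\varepsilon^2N$, the image $\tilde T(\Ha)$ is an almost-Euclidean subspace of dimension $\gtrsim\varepsilon^2N$ with distortion $1+C(\varepsilon/\sqrt m+\varepsilon^2)$; choosing the embedding parameter as a function of the target distortion (informally, a change of variables $\varepsilon^2\mapsto\varepsilon$) and absorbing constants rewrites this as distortion $1+\varepsilon$ with dimension $\gtrsim N\min(\varepsilon,\varepsilon^2m)$, which is the claim.

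The substantive input is Theorem \ref{th:main} itself, whose proof is the real work; granting it, the only point requiring care here is the bookkeeping of the two error contributions $\varepsilon/\sqrt m$ and $\varepsilon^2$, since it is precisely the $\varepsilon/\sqrt m$ term — visible only because $R\le1/\sqrt m$ rather than merely $R=O(1)$ — that yields the improvement $\min(\varepsilon,\varepsilon^2m)$ over the classical $\varepsilon^2N$ bound in the regime of large $m$. Everything else is routine.
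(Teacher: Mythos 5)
Your proposal follows essentially the same route as the paper: the same embedding $T|\psi\ra=t^{-1/2}\sum_k(U_k|\psi\ra)|k\ra^C$, the same identity $\|T|\psi\ra\|_{\ell_1^n(\ell_2^m)}=\sqrt{d_At}\,(1-Y_{|\psi\ra}^2)$, the same use of Theorem \ref{th:main} with Lemmas \ref{le:concentration} and \ref{le:Lipschitz-first}, the bound $R\le1/\sqrt{m}$, and the final substitution $\varepsilon^2\mapsto\varepsilon$. The bookkeeping, including the identification of the $\varepsilon/\sqrt{m}$ term as the source of the $\min(\varepsilon,\varepsilon^2m)$ improvement, matches the paper's argument.
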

To our knowledge, the best dependence of $\dim H$ on the parameters $n,m,\varepsilon$ available in the literature so far is $\dim H \simeq N\varepsilon^2$ \cite{Indyk2010} mentioned above. Thus the bound of Theorem \ref{thm:special-Dvoretzky} provides an improvement if $m$ is large. In particular for $m \ge 1/\varepsilon$ we get $\dim H \simeq N\varepsilon \gg N\varepsilon^2$.

\section{Concluding remarks}

In this work we study fidelity uncertainty relations, which strengthen the total variation uncertainty relations due to Fawzi-Hayden-Sen \cite{MR2932028}. We show that for random unitary matrices such uncertainty principles hold with high probability over the product of unitary groups provided that the number of unitaries and the dimension of the ancilla are high enough (Theorem \ref{th:fur}). The parameters we obtain are better than those known from \cite{MR2932028} in the case of total variation uncertainty relations. The main tool we use is the Majorizing Measure Theorem due to Talagrand, which allows us to study uniform uncertainty estimates over general sets of pure states and obtain bounds in terms of their Gaussian mean width (Theorem \ref{th:main}). We also show that our estimates are essentially optimal even for uncertainty principles expressed in the total variation distance (Theorem \ref{thm:lower_bound}).

Fidelity uncertainty relations are subsequently applied to derive locking bounds for the Fawzi-Hayden-Sen protocol expressed in terms of the Hellinger distance. Even though the estimates we have obtained use a stronger proximity measure than in the case of \cite{MR2932028}, they improve the dependence of the key length on the min entropy of the message and thus apply to a larger subset of messages. They can also be shown to be optimal up to universal additive constants (Propositions \ref{prop:locking-optimality-general}, \ref{prop:locking-optimality-random}).

The general case of our Theorem \ref{th:main} is also used to provide results concerning data hiding schemes using a single random unitary matrix and providing a Bayes type security guarantee, expressed in terms of the Hellinger distance of the a posteriori distribution of the message (obtained via separable measurements) to the a priori distribution.

Finally, we apply Theorem \ref{th:main} to obtain estimates on the geometric problem of embedding the Euclidean space into the matricial space $\ell_1^n(\ell_2^m)$, with parameter dependence improved with respect to known results (Theorem \ref{thm:special-Dvoretzky}). This provides another example of connections between quantum information theory, asymptotic convex geometry and high dimensional probability, which have proved extremely fruitful in the last decade.

\paragraph{Acknowledgements} The author would like to thank Rafa{\l} Lata{\l}a, Gideon Schechtman and Karol {\.Z}yczkowski for instructive conversations concerning the topics investigated in this article.

\bibliographystyle{amsplain}
\bibliography{uncertainty}
\end{document}